\documentclass[preprint,3p,10pt]{elsarticle}
\biboptions{numbers,sort&compress}
\usepackage{xcolor} 
\usepackage{amsmath,amssymb,amsthm,bm}
\usepackage{mathrsfs}
\usepackage{geometry}
\usepackage{enumerate}
\usepackage[hidelinks]{hyperref}
\hypersetup{
  pdftitle={Vector rogue wave patterns associated with generalized Hermite and Okamoto
polynomials},
  pdfauthor={Hejiaqi Chen, Dongwei Wu, Chengfa Wu, Guangxiong Zhang}
}

\usepackage[capitalize,nameinlink]{cleveref}

\usepackage[capitalize,nameinlink]{cleveref}
\usepackage{breqn}
\usepackage{graphicx,subfigure,float,multirow}
\usepackage{pdflscape}
\usepackage{makecell}
\usepackage{booktabs}

\usepackage{ytableau}
\usepackage{tikz}

\def\R{\mathbb R}

\renewcommand{\i}{\mathrm{i}}
\newcommand{\abs}[1]{\left|{#1}\right|}

\newcounter{subeq}

\renewcommand{\Re}{\operatorname{Re}}
\renewcommand{\Im}{\operatorname{Im}}

\theoremstyle{definition}
	\newtheorem{definition}{Definition}[section]
    \newtheorem{remark}[definition]{Remark}
	
	\newtheorem{lemma}[definition]{Lemma}
	
	\newtheorem{theorem}[definition]{Theorem}

    \newtheorem{corollary}[definition]{Corollary}
\journal{xxx}

\begin{document}

\graphicspath{ {./figures/} }
\small

\begin{frontmatter}
\title{Vector rogue wave patterns associated with generalized Hermite and Okamoto polynomials}

\date{\today}

\author[szu_ias]{Hejiaqi Chen\fnref{fn1}}

\author[szu_ias]{Dongwei Wu\fnref{fn1}}

\author[szu_ias,szu_sms]{Chengfa Wu\corref{cor1}}
\ead{cfwu@szu.edu.cn}

\author[cityu_math]{Guangxiong Zhang\corref{cor1}}
\ead{gzhang74-c@my.cityu.edu.hk}

\cortext[cor1]{Corresponding authors.}
\fntext[fn1]{Hejiaqi Chen and Dongwei Wu contributed equally to this work.}
\affiliation[szu_ias]{organization={Institute for Advanced Study, Shenzhen University},
            city={Shenzhen},
            postcode={518060}, 
            country={People's Republic of China}}
\affiliation[szu_sms]{organization={School of Mathematical Sciences, Shenzhen University},
            city={Shenzhen},
            postcode={518060}, 
            country={People's Republic of China}}

\affiliation[cityu_math]{organization={Department of Mathematics, City University of Hong Kong},
                        addressline={83 Tat Chee Avenue, Kowloon Tong, Kowloon},
                        city={Hong Kong},
                        country={People's Republic of China}}

\begin{abstract}
We establish rogue wave patterns associated with the fourth Painlev\'e equation \( (\mathrm{P}_{\mathrm{IV}}) \) in the multi-component nonlinear Schr\"odinger and Hirota equations. The generalized Hermite and generalized Okamoto polynomials arise in representations of rational solutions of \(\mathrm{P}_{\mathrm{IV}}\), and we show that their roots determine two classes of rogue wave patterns when one of the internal parameters of rogue wave solutions is large. Specifically, the generalized Hermite polynomials arise from rogue wave solutions represented by Schur-polynomial determinants with consecutive indices, whereas the generalized Okamoto polynomials arise from analogous determinants with index jumps of three. Numerical examples for both equations agree with the predictions. 
\end{abstract}

\begin{keyword}
Multi-component NLS equation \sep Multi-component Hirota equation \sep rogue wave patterns \sep  generalized Hermite polynomials\sep generalized Okamoto polynomials\sep the fourth Painlev\'e equation
\end{keyword}
\end{frontmatter}

\section{Introduction}
Rogue waves have been studied extensively in integrable systems for several decades. Draper discussed unusually large ocean waves, referred to as ``freak'' waves \cite{draper1966freak}. Akhmediev, Ankiewicz and Taki later described rogue waves as waves that ``appear from nowhere and disappear without a trace'' \cite{akhmediev2009waves}. For higher-order NLS rogue waves, Ankiewicz, Kedziora and Akhmediev showed that suitable nonzero free parameters can produce a rogue wave triplet: three well-separated fundamental rogue waves whose peaks lie at the corners of a triangle in the \((x,t)\)-plane \cite{ankiewicz2011rogue}. Subsequent studies reported circular rogue wave clusters \cite{kedziora2011circular}, and Ohta and Yang showed that general high-order NLS rogue waves can form arrays of fundamental rogue waves appearing at different times and spatial positions \cite{ohta2012general}. These rogue wave patterns exhibit interesting behaviors, and systematic analysis of them may reveal profound links to special algebraic objects.

At the turn of the twentieth century, Painlev\'e, Gambier and Fuchs classified equations \(w''=F(z,w,w')\), with \(F\) rational in \(w\) and \(w'\) and locally analytic in \(z\), whose solutions have no movable branch points, a condition now known as the Painlevé property. They obtained fifty canonical equations with this property. Among these equations, forty-four can be reduced to linear equations, solved in terms of elliptic functions, or reduced to one of six new nonlinear equations. These six irreducible equations, now known as the Painlev\'e equations \(\mathrm{P}_{\mathrm{I}}\)--\(\mathrm{P}_{\mathrm{VI}}\), have generic solutions that define the Painlev\'e transcendents \cite{painleve1900,painleve1902,gambier1910,clarkson2019open}. However, for special values of parameters, \(\mathrm{P}_{\mathrm{II}}\)--\(\mathrm{P}_{\mathrm{VI}}\) possess hierarchies of rational solutions. For instance, rational solutions of \(\mathrm{P}_{\mathrm{II}}\) and \(\mathrm{P}_{\mathrm{III}}\)  are expressed through the Yablonskii--Vorob'ev and Umemura polynomials, respectively \cite{clarkson2003second,clarkson2003third}. Rational solutions of \(\mathrm{P}_{\mathrm{IV}}\) form three hierarchies: two of them are represented by generalized Hermite polynomials, whereas the remaining one is represented by generalized Okamoto polynomials \cite{clarkson2003fourth,NoumiYamada1999}.

Using large-parameter asymptotics, Yang and Yang showed that the root configurations of the Yablonskii--Vorob'ev polynomial hierarchy determine NLS rogue wave patterns \cite{yang2021rogue}. They subsequently extended this connection to several other integrable systems. For each system, an equation-dependent affine transformation maps the roots of the relevant Yablonskii--Vorob'ev polynomial to the space--time locations of the fundamental rogue waves \cite{yang2021universal}. More recently, rogue wave and lump patterns associated with Umemura polynomials,  and hence with \(\mathrm{P}_{\mathrm{III}}\), were established in \cite{yang2026roguewavelumppatternsassociated}. When several internal parameters become large, broader rogue wave patterns are described asymptotically by Adler--Moser polynomials through a dilation \cite{yang2024rogue}.

On the \(\mathrm{P}_{\mathrm{IV}}\) side, Yang and Yang identified rogue wave patterns associated with the Okamoto polynomial hierarchies \(Q_N^{[m]}(z)\) and \(R_N^{[m]}(z)\) \cite{yang2023rogue}, while patterns governed by generalized Okamoto polynomials were subsequently reported in \cite{zhang2025multi}. The generalized Hermite and generalized Okamoto polynomials, which form the two classical polynomial families associated with rational solutions of \(\mathrm{P}_{\mathrm{IV}}\), are both special cases of Wronskian--Hermite polynomials \cite{oblomkov1999monodromy,kajiwara1998determinant}. More generally, Lin and Ling identified certain rogue wave patterns of the vector NLS equation related to generalized mixed Adler--Moser polynomials, a broader family that includes the Wronskian--Hermite polynomials \cite{lin2025vector}.

Within the existing multi-block Schur-polynomial determinant framework, the present work explicitly identifies the index choices that realize rogue wave patterns associated with these two classical \(\mathrm{P}_{\mathrm{IV}}\) polynomial families and provides a unified large-parameter analysis of the multi-component NLS and Hirota equations. Specifically, consecutive determinant indices yield generalized Hermite patterns, whereas two index sequences with index jumps of three produce generalized Okamoto patterns in the two-component systems. An important feature of these two polynomial families is that all their roots are known to be simple \cite{Duran2020,NoumiYamada1999,roffelsen2025real}. Consequently, in the large-parameter regime considered here, we establish a correspondence in which each root of these polynomials determines one fundamental vector rogue wave, without requiring any assumption on the simplicity of roots. 

Specifically, we consider the \(M\)-component NLS equation
\begin{equation}
\label{vector NLS}
\mathrm{i}u_{j,t}+u_{j,xx}+\left(\sum_{k=1}^{M}\sigma_k|u_k|^2\right)u_j=0,\qquad j=1,2,\ldots,M,
\end{equation}
and the $M$-component Hirota equation
\begin{equation}
\label{eqn: vector-Hirota}
v_{j,t}=v_{j,xxx}-3\left(\sum_{k=1}^{M}c_k|v_k|^2\right)v_{j,x}-3v_j\left(\sum_{k=1}^{M}c_kv_k^{*}v_{k,x}\right),\qquad j=1,2,\ldots,M.
\end{equation}
Here, \(M\) is a positive integer, the asterisk denotes complex conjugation, \(\sigma_k=\pm1\), and \(c_k\in\mathbb{R}\). Both systems admit higher-order vector rogue wave solutions in Schur-polynomial determinant form. This common determinant structure provides the algebraic setting in which the two \(\mathrm{P}_{\mathrm{IV}}\) polynomial families enter the large-parameter asymptotics.
When the second free internal parameter in the Schur-polynomial representation becomes large, solutions with consecutive determinant indices produce patterns determined by roots of the generalized Hermite polynomials \(H_{d,n}(z)\). In the two-component systems, solutions indexed by \((N_1,N_2)\) produce patterns determined by roots of the generalized Okamoto polynomials \(Q_{N_1,N_2}(z)\). 

The remainder of this paper is organized as follows. Section~2 reviews the rational solutions of \(\mathrm{P}_{\mathrm{IV}}\) and introduces the generalized Hermite and generalized Okamoto polynomials, together with their Wronskian representations and properties of their roots. Section~3 recalls the Schur-polynomial determinant forms of higher-order rogue wave solutions for the multi-component NLS and Hirota equations and identifies the determinant index vectors corresponding to the two polynomial families. Section~4 establishes rogue wave patterns associated with the two polynomial families, derives explicit formulas for the centers of the resulting fundamental vector rogue waves, and presents numerical illustrations for the multi-component NLS and Hirota equations. Section~5 provides a proof of the main theorem. Finally, we summarize the main results in Section~6.

\section{Special polynomials associated with the fourth Painlevé equation}

\subsection{Rational solutions of the fourth Painlev\'e equation}
\label{subsec:rational-solutions-PIV}

The fourth Painlev\'e equation $(\mathrm{P}_{\mathrm{IV}})$ is
\begin{equation}
\label{eq:PIV}
    w''=\frac{(w')^2}{2w}+\frac{3}{2}w^3+4zw^2+2\left(z^2-\alpha\right)w+\frac{\beta}{w},\quad ':=\frac{d}{dz},
\end{equation}
where  $\alpha$, $\beta\in \mathbb C$ are constants. It has been shown \cite{Murata1985} that $\mathrm{P}_{\mathrm{IV}}$ admits a rational solution if and only if
\begin{equation}
\label{eq:PIV-rational-condition-I}
    \alpha=r,\qquad\beta=-2(1+2s-r)^2,\qquad r,s\in\mathbb{Z},
\end{equation}
or
\begin{equation}
\label{eq:PIV-rational-condition-II}
    \alpha=r,\qquad\beta=-\frac{2}{9}(1+6s-3r)^2,\qquad r,s\in\mathbb{Z}.
\end{equation}
For each  $(\alpha,\beta)$ satisfying either \eqref{eq:PIV-rational-condition-I} or \eqref{eq:PIV-rational-condition-II}, the rational solution of $\mathrm{P}_{\mathrm{IV}}$ is unique \cite{Murata1985}. Furthermore, these two parameter conditions can be decomposed into three parameter families \cite{Buckingham2020}

\begin{equation}
    \label{eq:PIV-parameter-families-I}
    \alpha_{r,s}^{(\mathrm{I})}=\pm(r+2s+1),\quad \beta_{r,s}^{(\mathrm{I})}=-2r^2,\quad r\in\mathbb Z_{>0},\quad s\in\mathbb Z_{\ge0},
\end{equation}
\begin{equation}
    \label{eq:PIV-parameter-families-II}
    \alpha_{r,s}^{(\mathrm{II})}=s-r,\quad \beta_{r,s}^{(\mathrm{II})}=-2(r+s+1)^2,\quad r,s\in\mathbb{Z}_{\ge 0},
\end{equation}
\begin{equation}
    \label{eq:PIV-parameter-families-oka}
    \alpha_{r,s}^{(\mathrm{Oka})}=r,\quad \beta_{r,s}^{(\mathrm{Oka})}=-2\left(2s-r+\frac{1}{3}\right)^2,\quad r,s\in\mathbb{Z},
\end{equation}
where $\mathbb Z_{\ge 0}$ denotes the set of all nonnegative integers. The families of rational solutions to $\mathrm{P}_{\mathrm{IV}}$ corresponding to \eqref{eq:PIV-parameter-families-I}, \eqref{eq:PIV-parameter-families-II}, and \eqref{eq:PIV-parameter-families-oka} can be referred to as the $\pm 1/z$, $-2z$, and $-2z/3$ hierarchies \cite{BassomClarksonHicks1995}, respectively. The three simplest rational solutions of $\mathrm{P}_{\mathrm{IV}}$ are
\begin{equation}
\label{eq:PIV-elementary-rational-solutions}
    w(z;\pm 2,-2)=\pm\frac{1}{z},\quad w(z;0,-2)=-2z, \quad w\left(z;0,-\frac{2}{9}\right)=-\frac{2}{3}z.
\end{equation}

For $(\alpha,\beta)$ satisfying  \eqref{eq:PIV-parameter-families-I} or \eqref{eq:PIV-parameter-families-II}, the rational solutions of $\mathrm{P}_{\mathrm{IV}}$ can be expressed in terms of generalized Hermite polynomials, whereas those satisfying \eqref{eq:PIV-parameter-families-oka} can be expressed in terms of generalized Okamoto polynomials. Thus, the generalized Hermite and generalized Okamoto polynomials are the two classes of special polynomials associated with  rational solutions of $\mathrm{P}_{\mathrm{IV}}$  \cite{kajiwara1998determinant,clarkson2003fourth,NoumiYamada1999}.

\subsection{Wronskian--Hermite polynomials}
The generalized Hermite and generalized Okamoto polynomials are special cases of the Wronskian--Hermite polynomials \cite{oblomkov1999monodromy} corresponding to two particular choices of the index vector \(\boldsymbol{\mu}\). Let the polynomials \(p_k(z)\), \(k\in\mathbb{Z}_{\geq 0}\), be defined by
\begin{equation}
\label{eq:generating-function-pk}
\sum_{k=0}^{\infty}p_k(z)\epsilon^k=\exp\left(z\epsilon+\epsilon^2\right),
\end{equation}
with \(p_k(z)\equiv0\) for \(k<0\). These polynomials satisfy $p_{k+1}'=p_k$.

For an ordered \(N\)-tuple of distinct nonnegative integers \(\boldsymbol{\mu}=(\mu_1,\mu_2,\ldots,\mu_N)\), define
\begin{equation}
\label{eq:Pmu}
P_{\boldsymbol{\mu}}(z)=C_{\boldsymbol{\mu}}\det_{1\leq i,j\leq N}\left[p_{\mu_i-j+1}(z)\right]=C_{\boldsymbol{\mu}}\operatorname{Wr}\left[p_{\mu_1}(z),p_{\mu_2}(z),\ldots,p_{\mu_N}(z)\right],
\end{equation}
where
\begin{equation}
\label{eq:Wronskian-normalization}
C_{\boldsymbol{\mu}}=\frac{\displaystyle\prod_{j=1}^{N}\mu_j!}{\displaystyle\prod_{1\leq i<j\leq N}(\mu_j-\mu_i)}.
\end{equation}
Then \(P_{\boldsymbol{\mu}}(z)\) is monic and
\begin{equation}
\label{eq:Pmu-degree}
\deg P_{\boldsymbol{\mu}}=\sum_{j=1}^{N}\mu_j-\frac{N(N-1)}{2}.
\end{equation}

\subsection{Generalized Hermite polynomials}
\label{subsec:generalized-Hermite-polynomials}

The generalized Hermite polynomials  can be characterized recursively by a pair of nonlinear differential--difference relations, together with appropriate initial conditions \cite{NoumiYamada1999,MasoeroRoffelsen2018}. They also admit  determinant  representations. To facilitate the analysis of rogue wave patterns, we employ a determinant representation that is compatible with the Schur-polynomial structure of the rogue wave solutions.

For positive integers \(d\) and \(n\), set \(N=n\) and take
\[
\boldsymbol{\mu}=(d,d+1,\ldots,d+n-1).
\]
Then, after transposing the determinant in \eqref{eq:Pmu}, the generalized Hermite polynomial \(H_{d,n}(z)=P_{\boldsymbol{\mu}}(z)\) can be written as \cite{kajiwara1998determinant}
\begin{equation}
\label{eq:generalized-Hermite-determinant-expanded}
H_{d,n}(z) = C_{\boldsymbol{\mu}}
\begin{vmatrix}
p_d(z) & p_{d+1}(z) & \cdots & p_{d+n-1}(z)\\ p_{d-1}(z) & p_d(z) & \cdots & p_{d+n-2}(z)\\ \vdots & \vdots & \ddots & \vdots\\ p_{d-n+1}(z) & p_{d-n+2}(z) & \cdots & p_d(z) \end{vmatrix} =  C_{\boldsymbol{\mu}} 
\operatorname{Wr}\left[p_d(z),p_{d+1}(z),\ldots,p_{d+n-1}(z)\right].
\end{equation}
For this index vector, \eqref{eq:Wronskian-normalization} reduces to
\begin{equation}
\label{eq:generalized-Hermite-normalization}
C_{\boldsymbol{\mu}} = \prod_{j=0}^{n-1}\frac{(d+j)!}{j!}.
\end{equation}
Thus, \(H_{d,n}(z)\) is a monic polynomial of degree \(dn\) with integer coefficients \cite{NoumiYamada1999,MasoeroRoffelsen2018,GrosuGrosu2021}. 

The properties of generalized Hermite polynomials have been investigated extensively in the literature \cite{felder2012zeros,bonneux2020coefficients,GrosuGrosu2021,grosu2021irreducibility}. In particular, the generalized Hermite polynomials satisfy  \cite{NoumiYamada1999,MasoeroRoffelsen2018}
\begin{equation}
    H_{d,n}(-z)=(-1)^{dn}H_{d,n}(z),\quad H_{d,n}(\mathrm{i}z)=\mathrm{i}^{dn}H_{n,d}(z).
\end{equation}
Thus, the set of roots of $H_{d,n}(z)$ is invariant under reflection about both the real and imaginary axes, while the root configurations of $H_{d,n}(z)$ and $H_{n,d}(z)$ are related by a rotation of $\pi/2$ in the complex plane. In addition, the generalized Hermite polynomials with consecutive  indices have only simple roots \cite{Duran2020}.

Figure~\ref{fig:zeros of the generalized Hermite polynomials} illustrates the root configurations of several generalized Hermite polynomials. The roots form approximately rectangular patterns with \(d\) rows and \(n\) columns. These configurations will be related to the rogue wave patterns in subsequent sections.

\begin{figure}[H]
\centering
\begin{tabular}{c cccc}
& $n=2$ & $n=3$ &
  $n=4$ & $n=5$ \\ 
\rotatebox{90}{\shortstack{$\quad d=1$}} &
\includegraphics[width=0.11\textwidth]{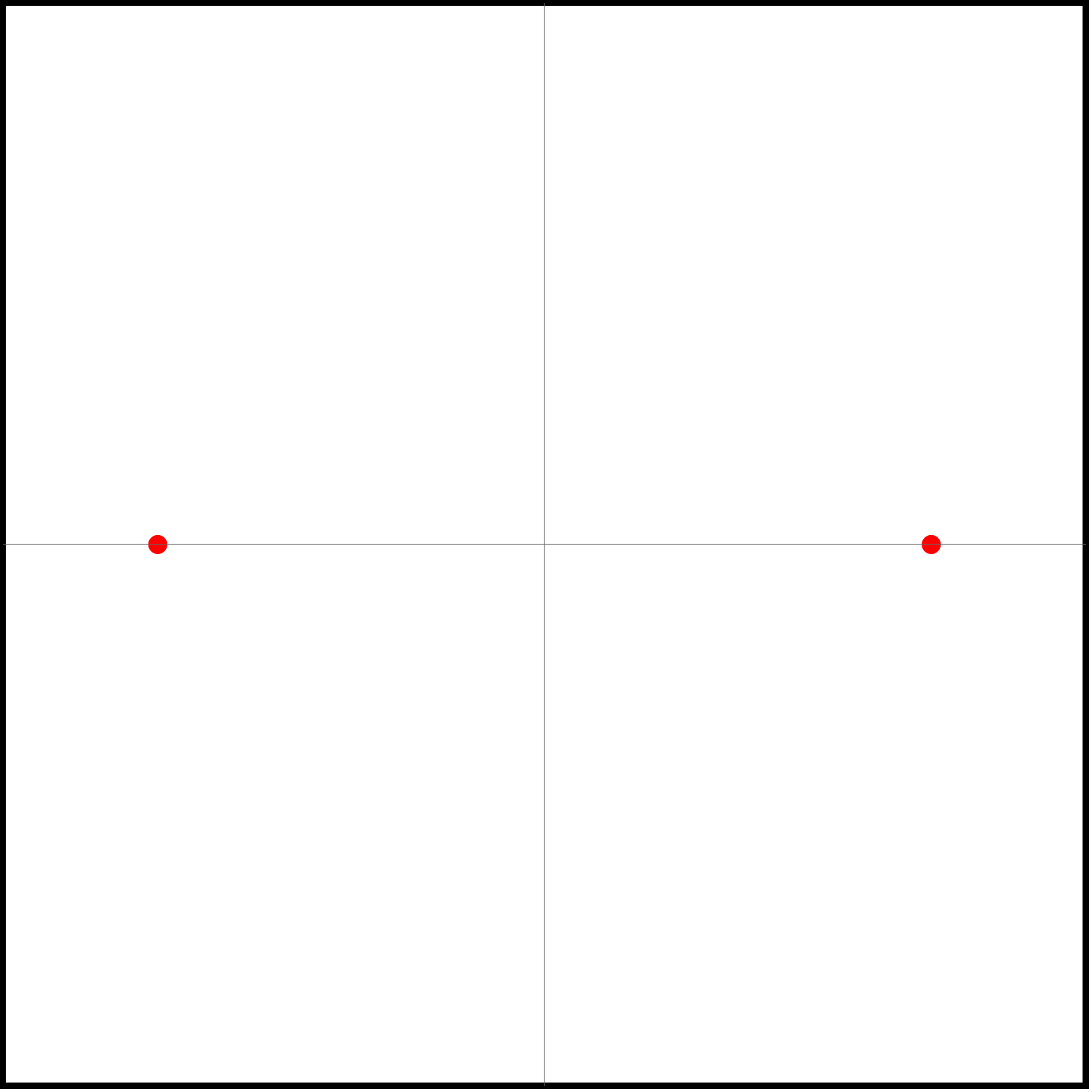} &
\includegraphics[width=0.11\textwidth]{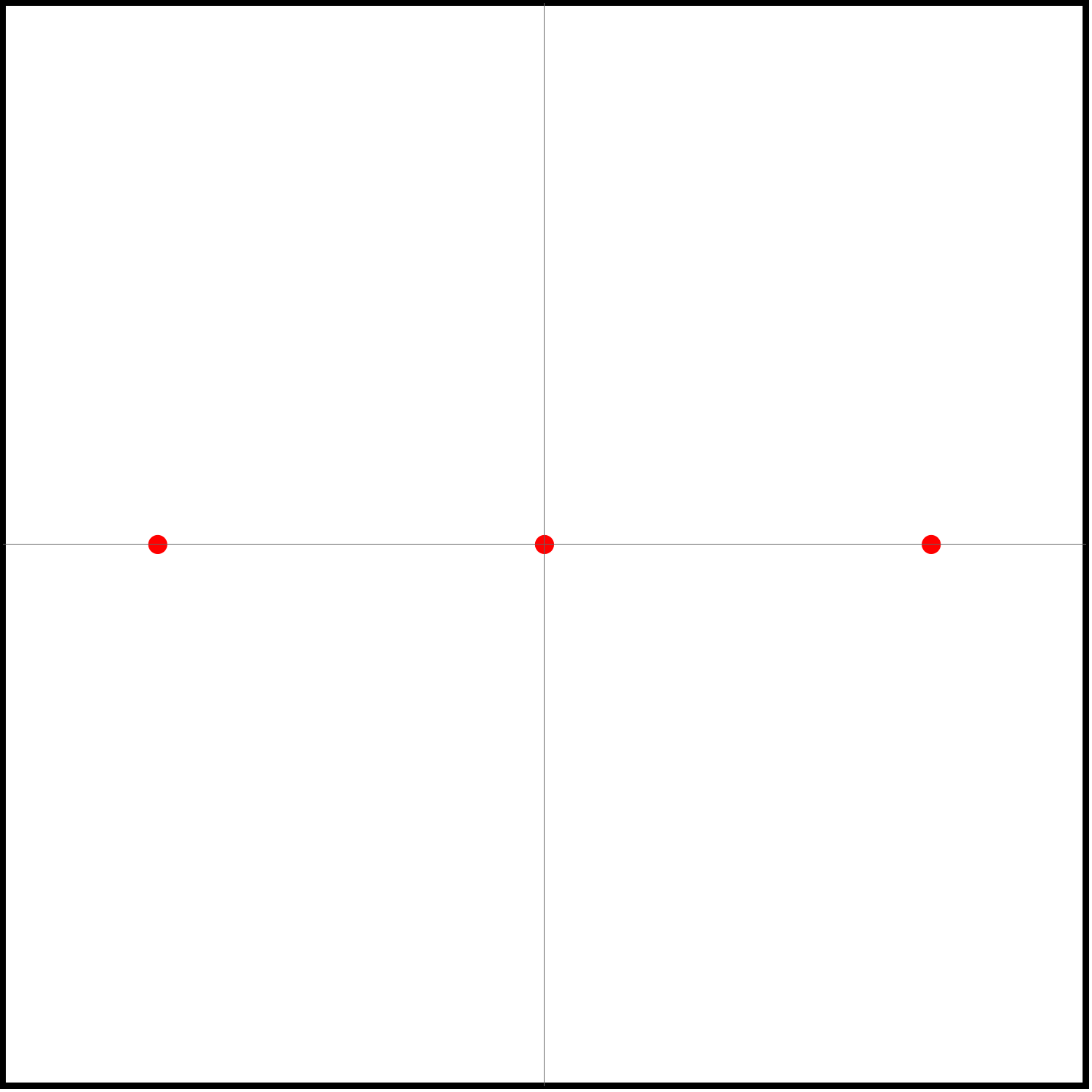} &
\includegraphics[width=0.11\textwidth]{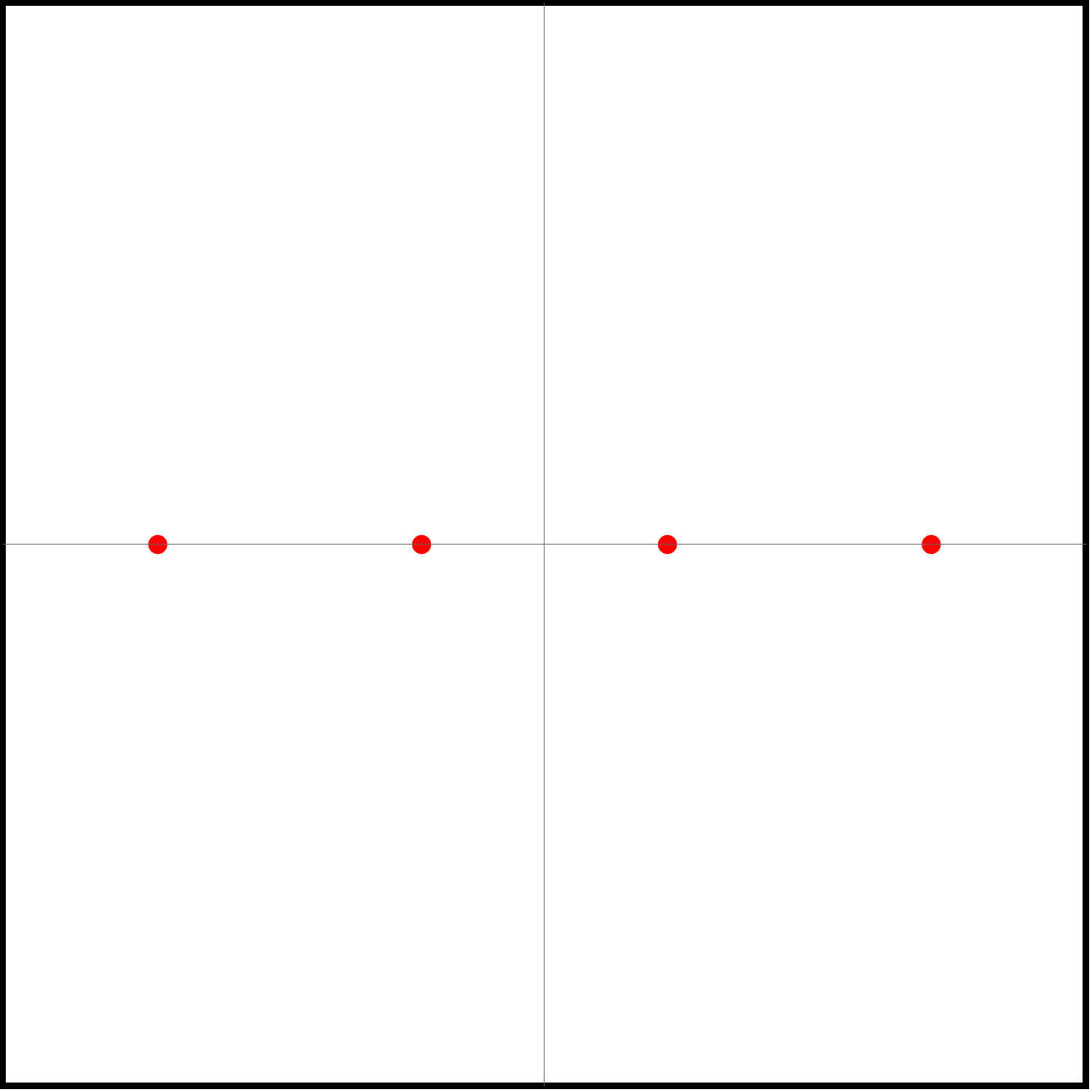} &
\includegraphics[width=0.11\textwidth]{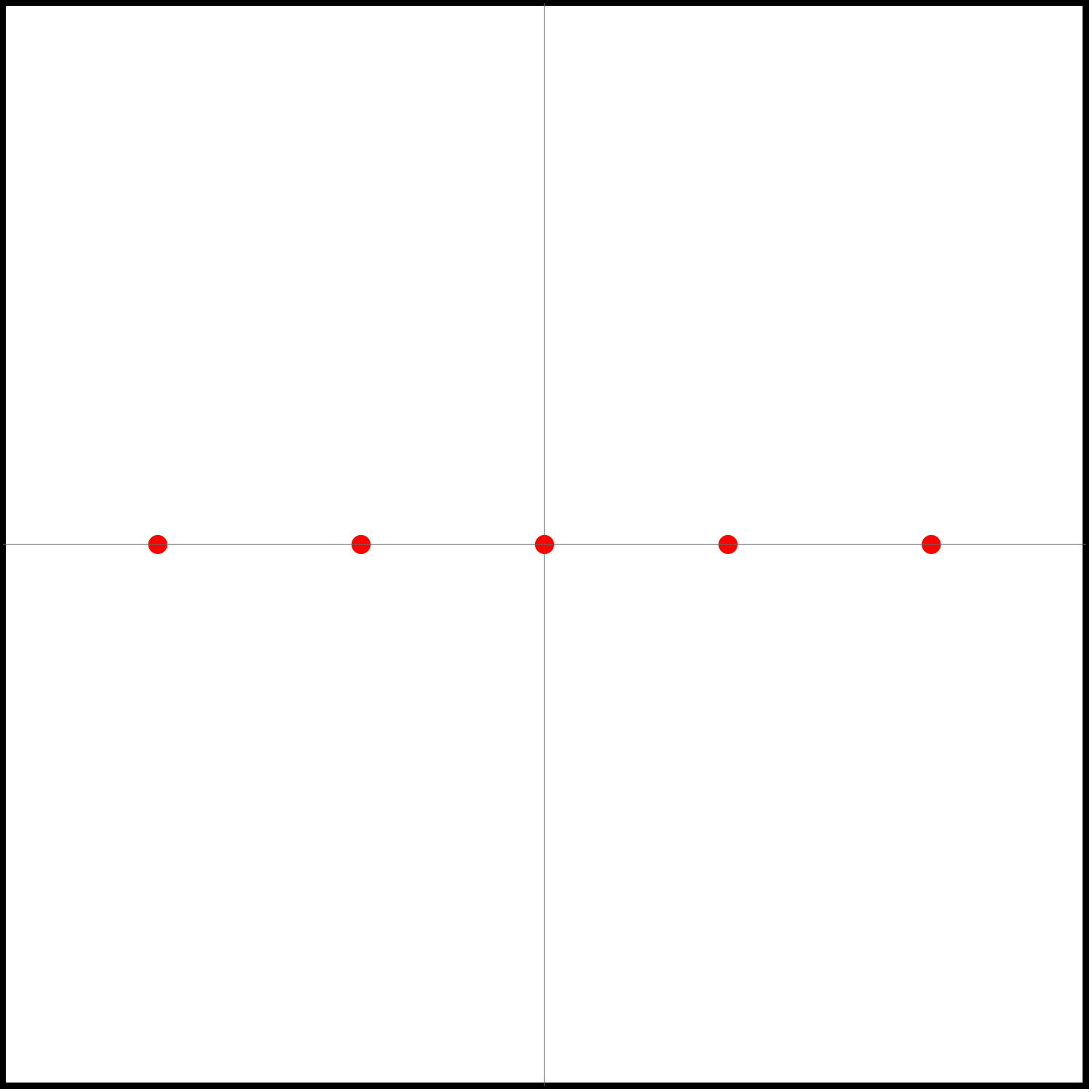} 
\\
\rotatebox{90}{\shortstack{$\quad d=2$}} &
\includegraphics[width=0.11\textwidth]{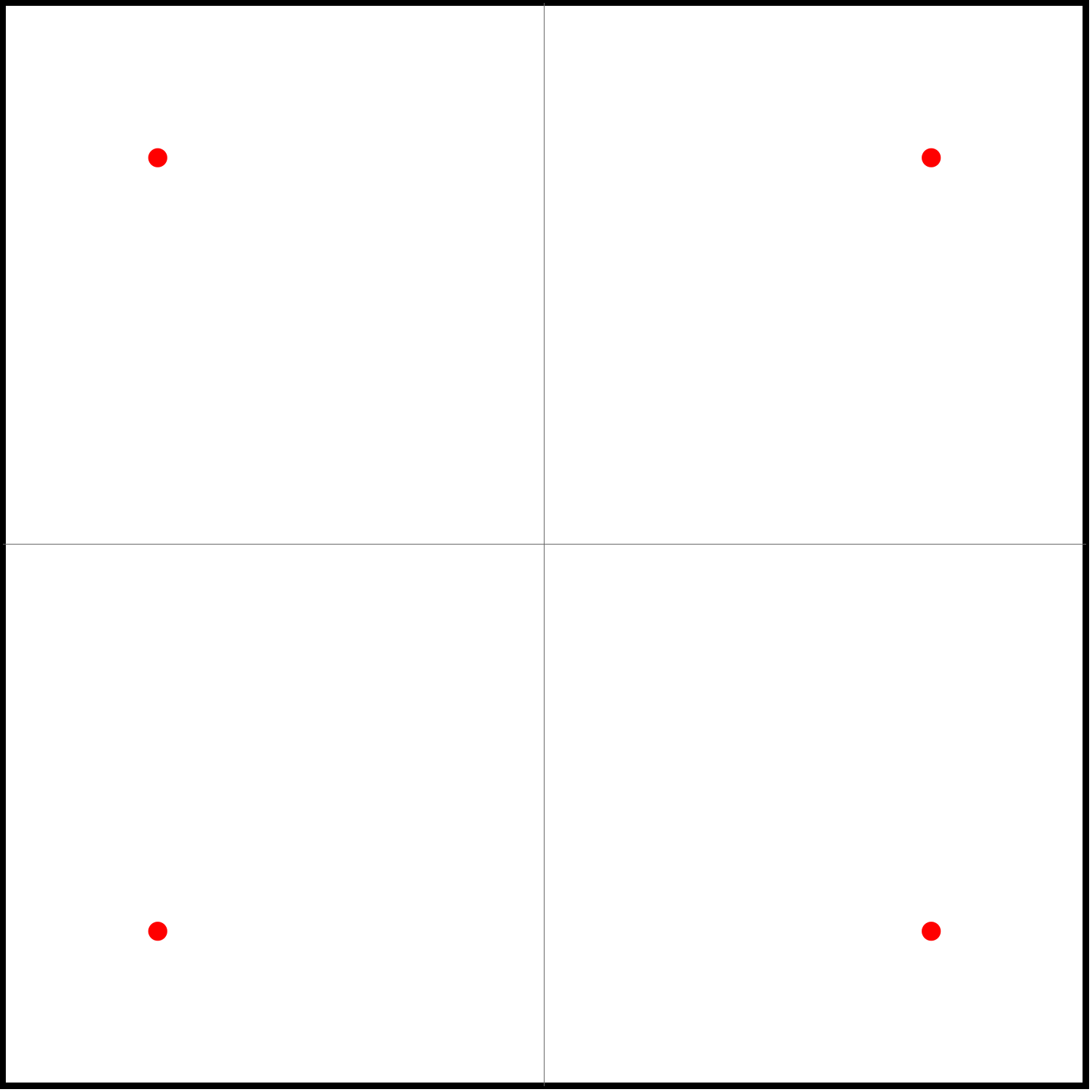} &
\includegraphics[width=0.11\textwidth]{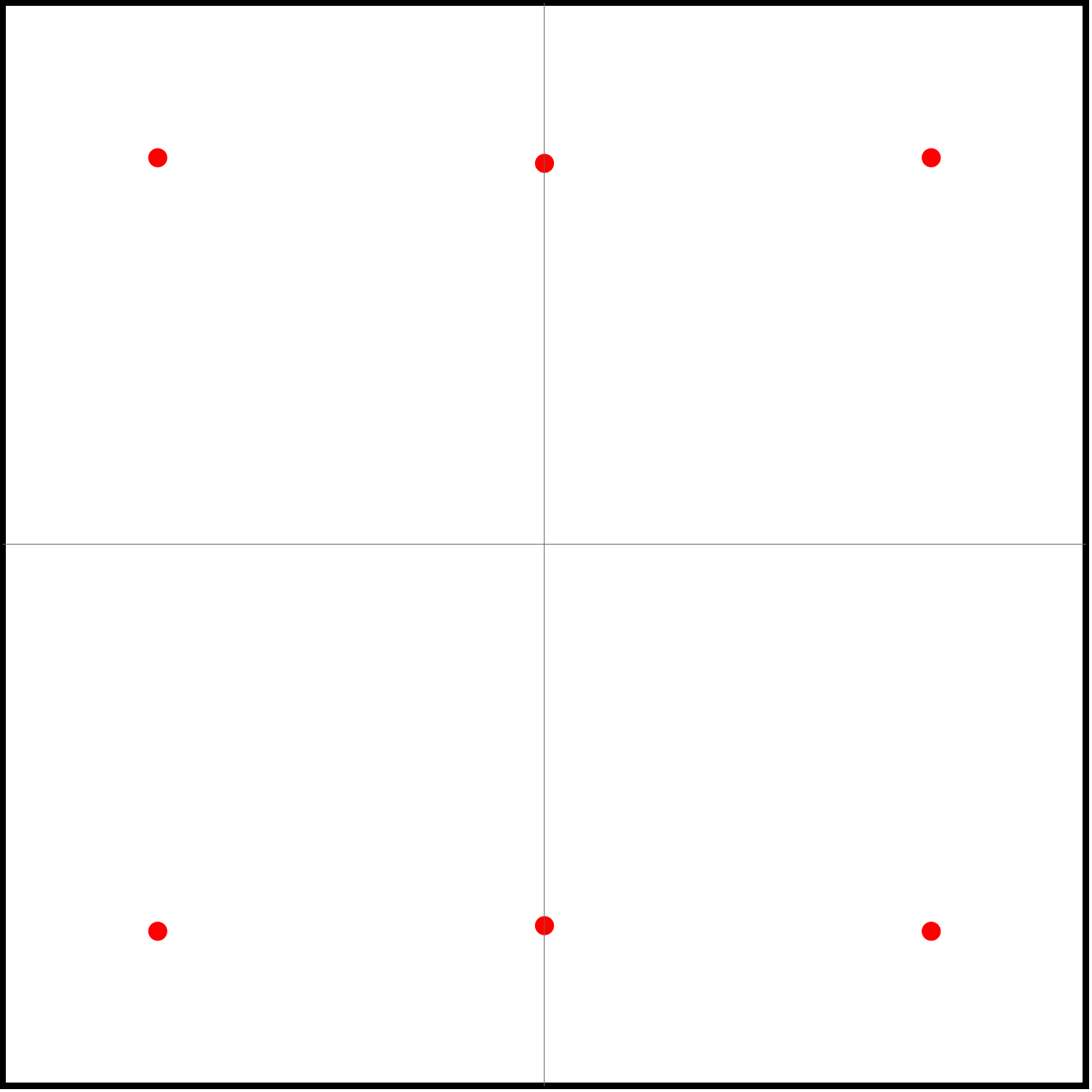} &
\includegraphics[width=0.11\textwidth]{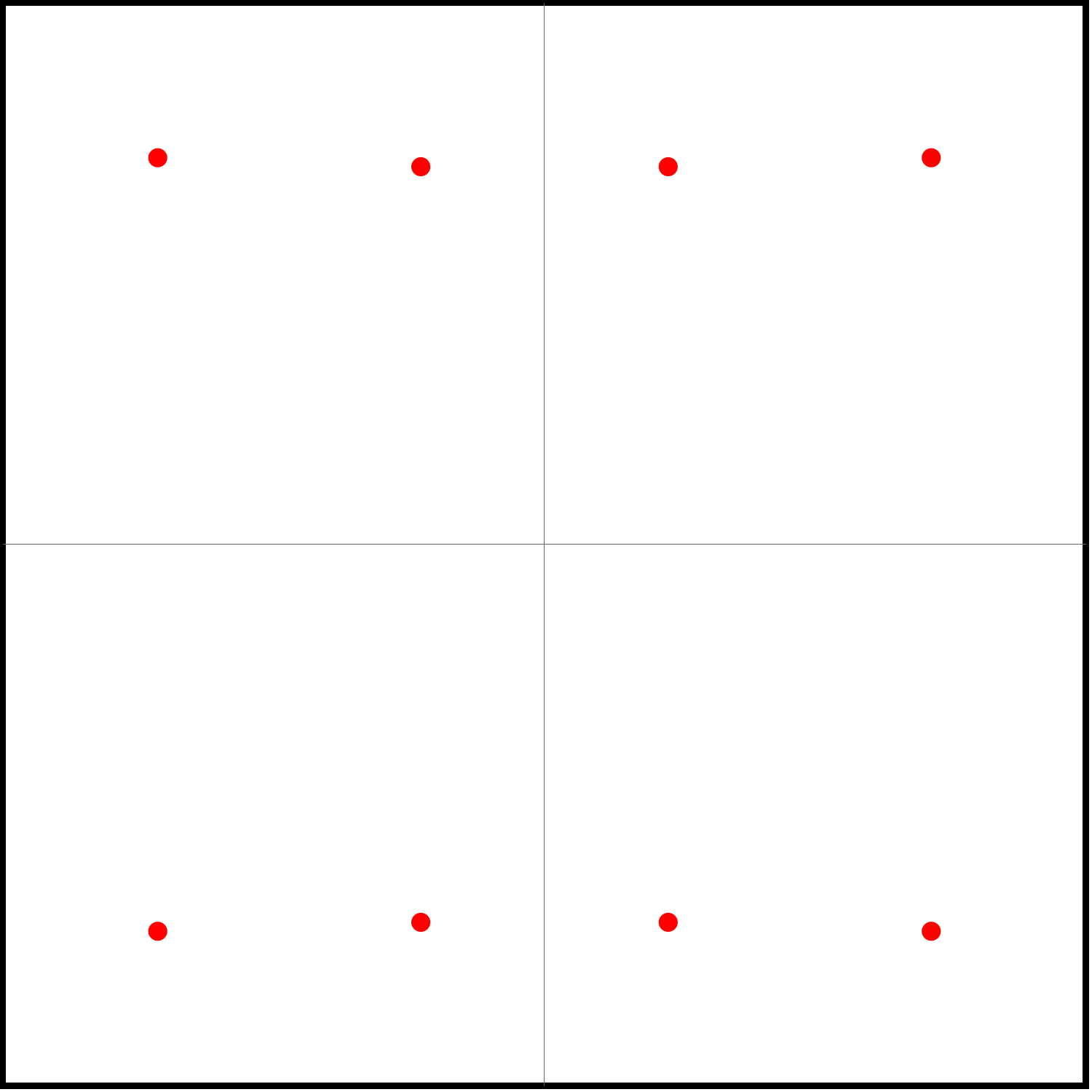} &
\includegraphics[width=0.11\textwidth]{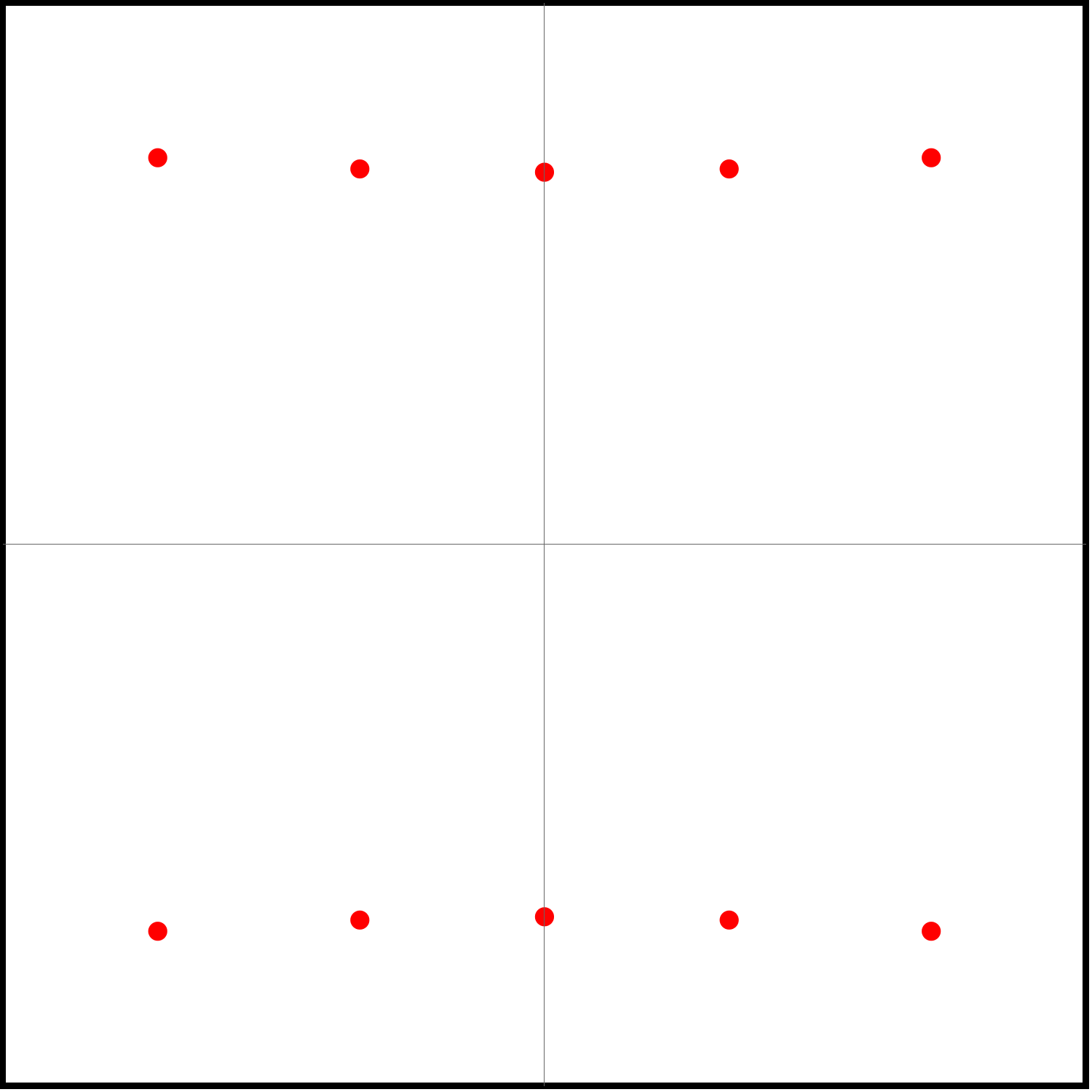} 
\\
\rotatebox{90}{\shortstack{$\quad d=3$}} &
\includegraphics[width=0.11\textwidth]{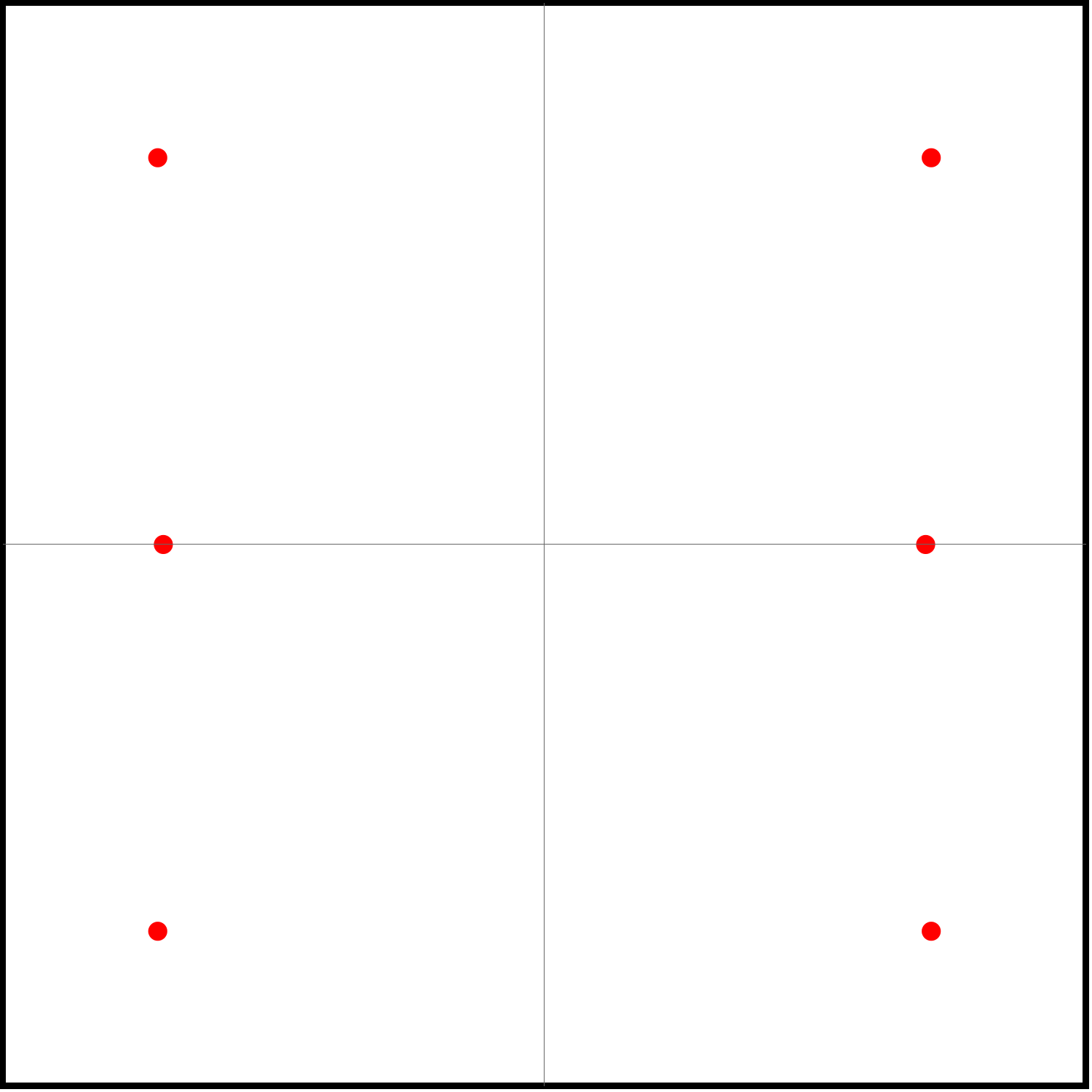} &
\includegraphics[width=0.11\textwidth]{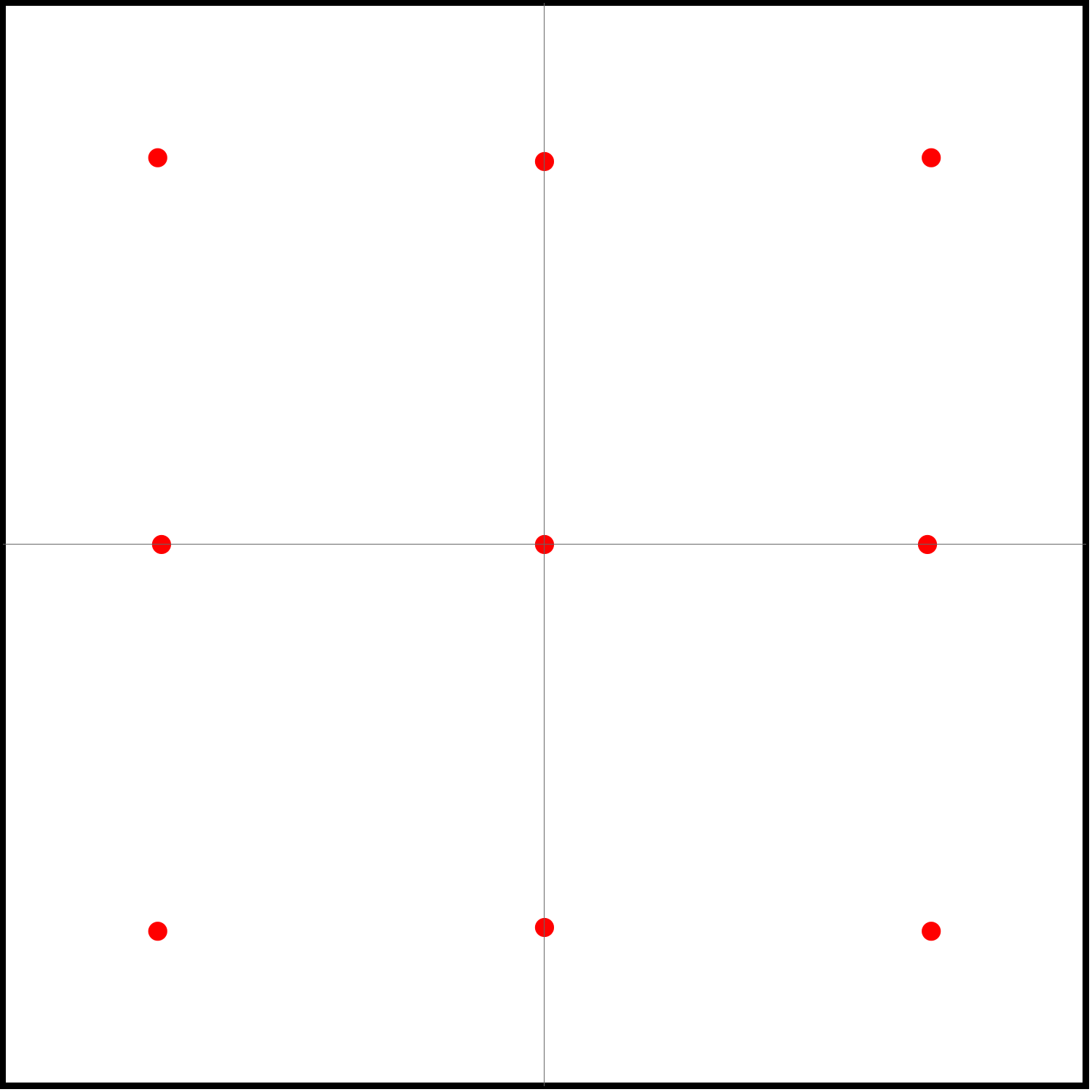} &
\includegraphics[width=0.11\textwidth]{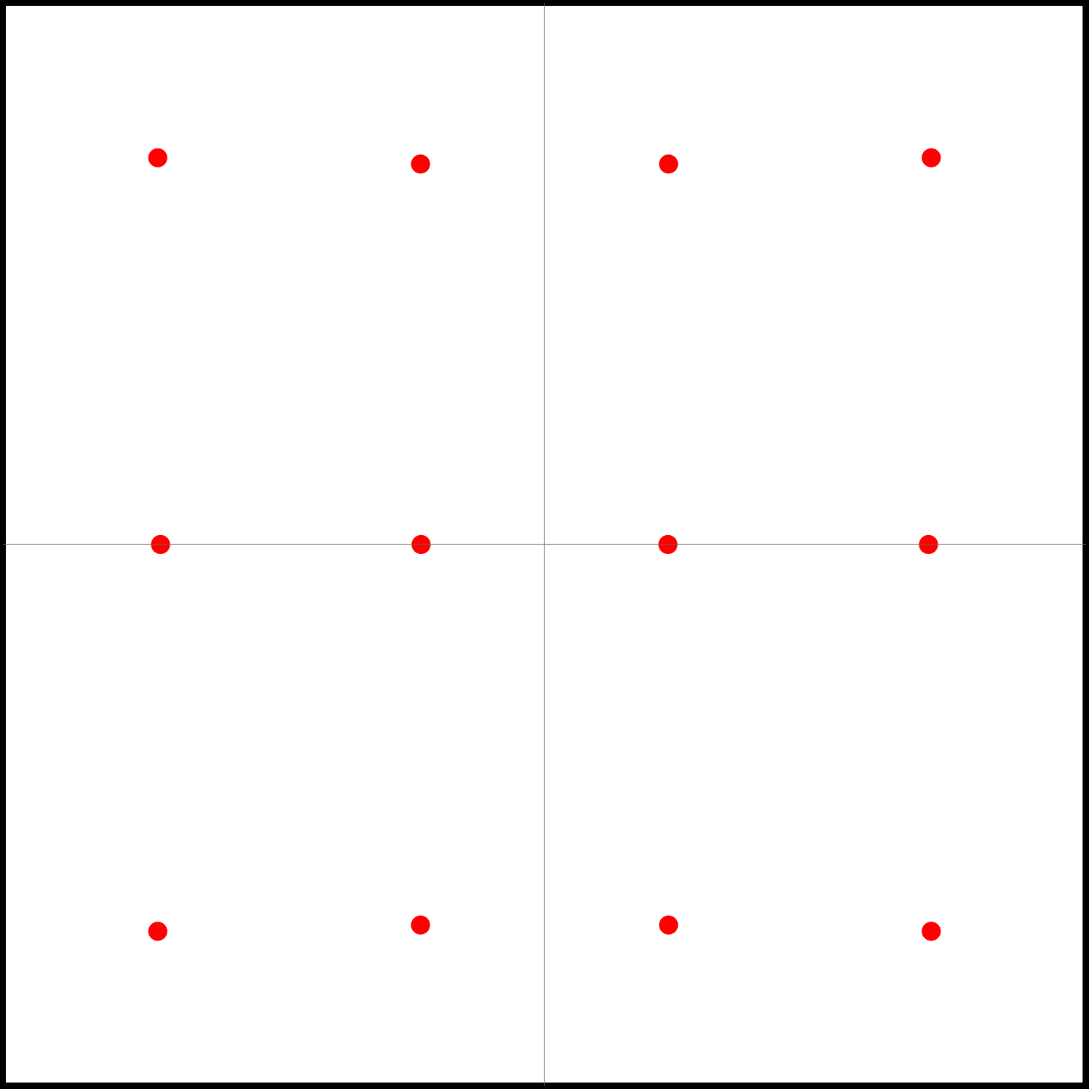} &
\includegraphics[width=0.11\textwidth]{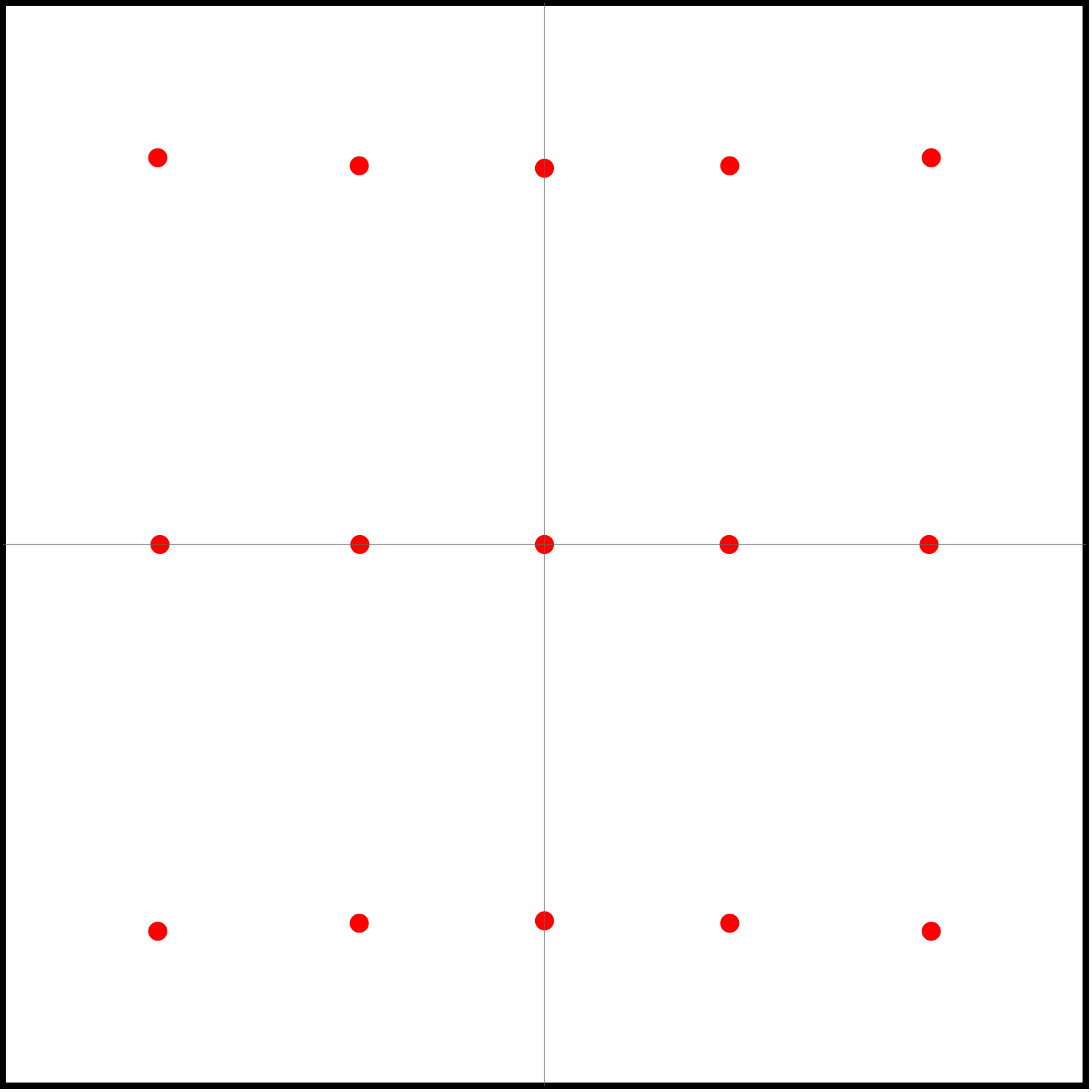} 
\\
\end{tabular}

\caption{Roots of the generalized Hermite polynomials
$H_{d,n}(z)$ with $1\le d\le 3$ and $2\le n\le 5$.
The horizontal and vertical plotting ranges in each panel are
automatically adjusted according to the locations of the roots.}

\label{fig:zeros of the generalized Hermite polynomials}
\end{figure}

\subsection{Generalized Okamoto polynomials}
\label{subsec:generalized-Okamoto-polynomials}

The generalized Okamoto polynomials associated with the \(-2z/3\) hierarchy of rational solutions of \(\mathrm{P}_{\mathrm{IV}}\) can be characterized by nonlinear recurrence relations \cite{NoumiYamada1999,clarkson2003fourth}. Determinant representations for these polynomials were obtained later by Kajiwara and Ohta \cite{kajiwara1998determinant}, while Noumi and Yamada \cite{NoumiYamada1999} derived an equivalent formulation in terms of \(3\)-reduced Schur functions.

For two nonnegative integers \(N_1\) and \(N_2\), let \(N=N_1+N_2\). When \(N\geq1\), take \(\boldsymbol{\mu}=(\mu_1,\mu_2,\ldots,\mu_N)\), where
\[
\mu_j= \begin{cases} 3j-2, & 1\leq j\leq N_1,\\[2mm] 3(j-N_1)-1, & N_1+1\leq j\leq N. \end{cases}
\]
We set \(Q_{0,0}(z)=1\). For \(N\geq1\), the generalized Okamoto polynomial $Q_{N_1,N_2}(z)=P_{\boldsymbol{\mu}}(z)$ can be written as
\begin{equation}
\label{eq:generalized-Okamoto-determinant-expanded}
\begin{aligned} Q_{N_1,N_2}(z) &= C_{\boldsymbol{\mu}}
\begin{vmatrix}
p_1(z) & p_0(z) & \cdots & p_{2-N}(z)\\
p_4(z) & p_3(z) & \cdots & p_{5-N}(z)\\
\vdots & \vdots & \ddots & \vdots\\
p_{3N_1-2}(z) & p_{3N_1-3}(z) & \cdots & p_{2N_1-N_2-1}(z)\\
p_2(z) & p_1(z) & \cdots & p_{3-N}(z)\\
p_5(z) & p_4(z) & \cdots & p_{6-N}(z)\\
\vdots & \vdots & \ddots & \vdots\\
p_{3N_2-1}(z) & p_{3N_2-2}(z) & \cdots & p_{2N_2-N_1}(z)
\end{vmatrix}\\
&=
C_{\boldsymbol{\mu}}
\operatorname{Wr}\left[
p_1(z),p_4(z),\ldots,p_{3N_1-2}(z);
p_2(z),p_5(z),\ldots,p_{3N_2-1}(z)
\right].
\end{aligned}
\end{equation}
The first and second blocks of rows in \eqref{eq:generalized-Okamoto-determinant-expanded} are omitted when \(N_1=0\) and \(N_2=0\), respectively. The normalization factor \(C_{\boldsymbol{\mu}}\) is given by \eqref{eq:Wronskian-normalization} and makes \(Q_{N_1,N_2}(z)\) monic.
By \eqref{eq:Pmu-degree},
\begin{equation}
\label{eq:degree-generalized-Okamoto}
d_{N_1,N_2} := \deg Q_{N_1,N_2} = N_1^2-N_1N_2+N_2^2+N_2.
\end{equation}
Since the generating function \eqref{eq:generating-function-pk} implies
\begin{equation}
p_k(-z)=(-1)^k p_k(z),
\end{equation}
the polynomial gives the parity relation
\begin{equation}
\label{eq:parity-generalized-Okamoto}
Q_{N_1,N_2}(-z) = (-1)^{d_{N_1,N_2}}Q_{N_1,N_2}(z).
\end{equation}
Moreover, \(Q_{N_1,N_2}(z)\) has real coefficients. Combining this fact with the parity relation \eqref{eq:parity-generalized-Okamoto}, we find that whenever \(z_0\) is a root, so are \(-z_0\), \(z_0^*\), and \(-z_0^*\).   It has been proved that all roots of the generalized Okamoto polynomials are simple \cite{NoumiYamada1999,roffelsen2025real}. Consequently, the origin is a root if and only if \(d_{N_1,N_2}\) is odd. Their root patterns were investigated by Clarkson \cite{clarkson2003fourth}.  Roffelsen and Stokes later derived exact formulas for the numbers of their real and purely imaginary roots. They also established several interlacing properties \cite{roffelsen2025real}.

Figure~\ref{fig:zeros-generalized-Okamoto-polynomials} illustrates the root configurations of \(Q_{N_1,N_2}(z)\), whose relation to the rogue wave patterns will be discussed in subsequent sections.

\begin{figure}[H]
\centering

\begingroup
\setlength{\tabcolsep}{2pt}
\renewcommand{\arraystretch}{1.02}

\begin{tabular}{c cccc}
&
$N_1=1$ &
$N_1=2$ &
$N_1=3$ &
$N_1=4$
\\


\rotatebox{90}{\shortstack{{$\quad N_2=1$}}} &
\includegraphics[width=0.11\textwidth]
{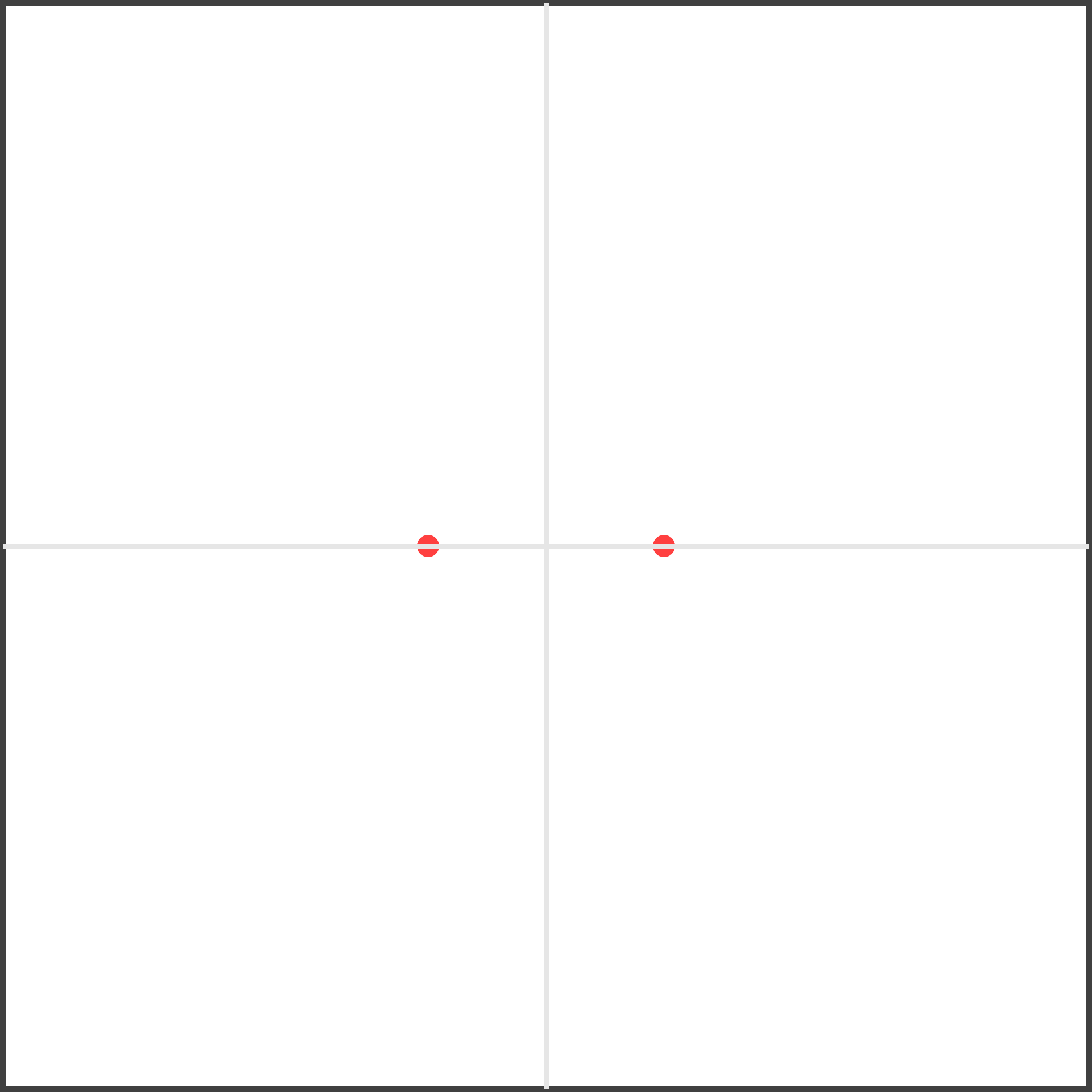} &
\includegraphics[width=0.11\textwidth]
{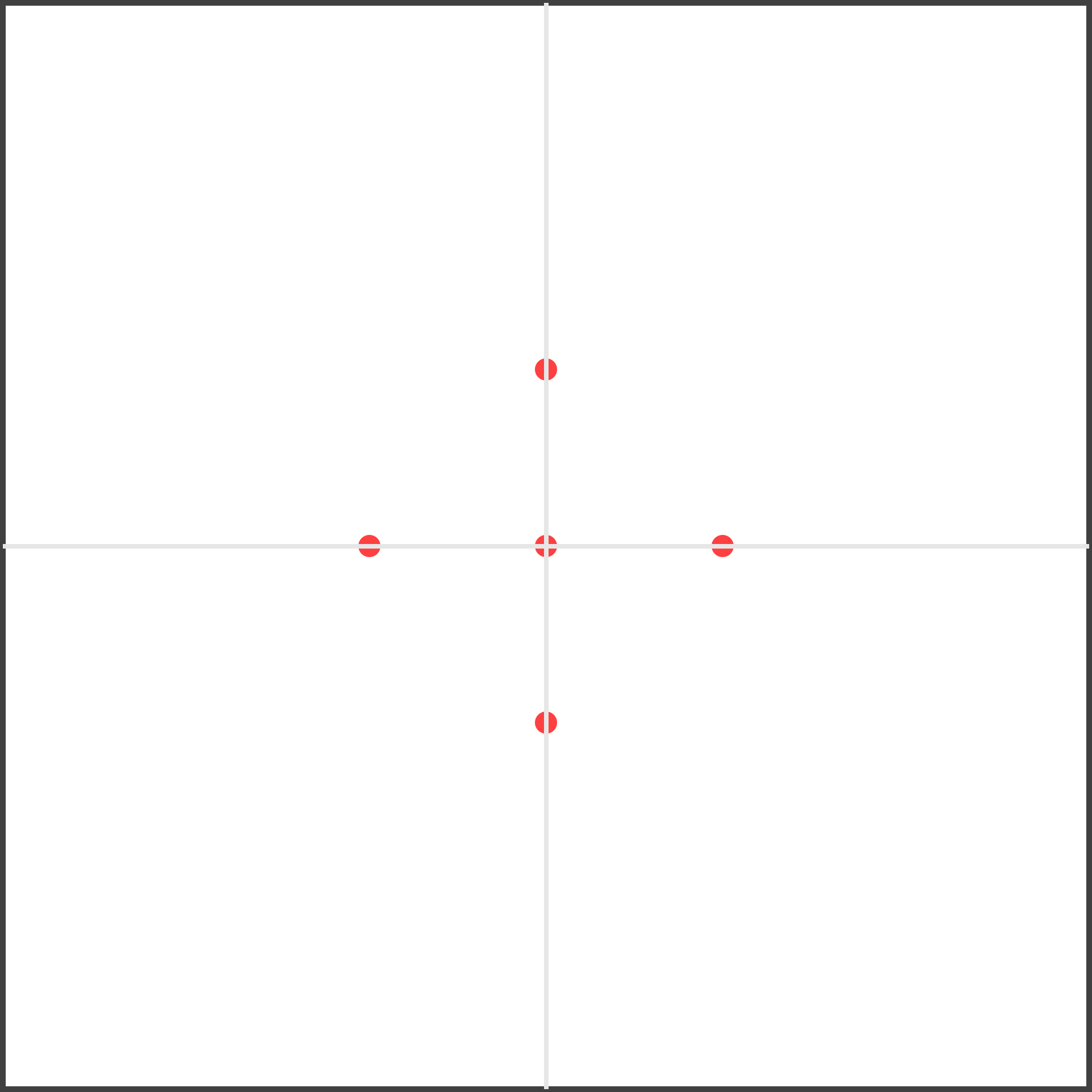} &
\includegraphics[width=0.11\textwidth]
{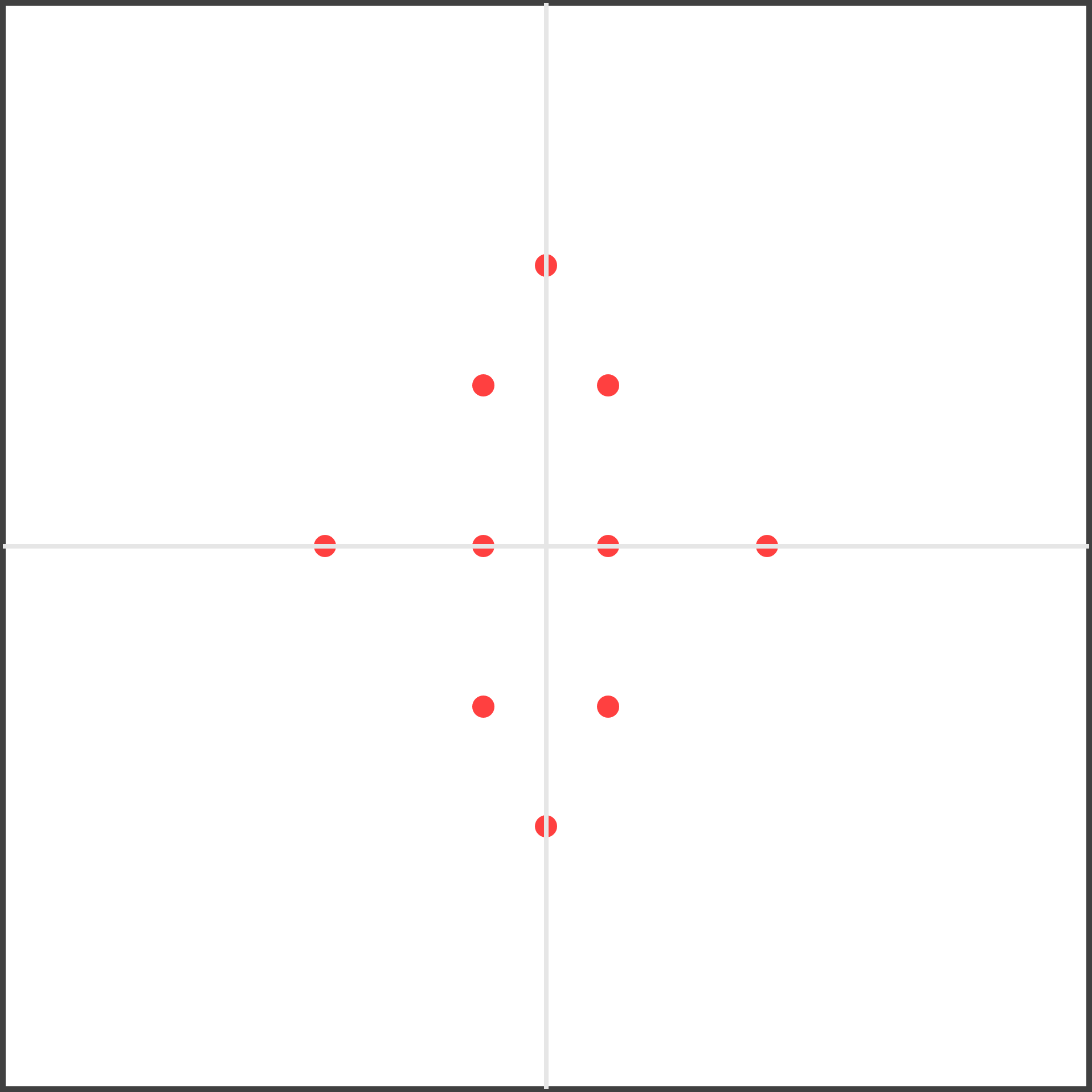} &
\includegraphics[width=0.11\textwidth]
{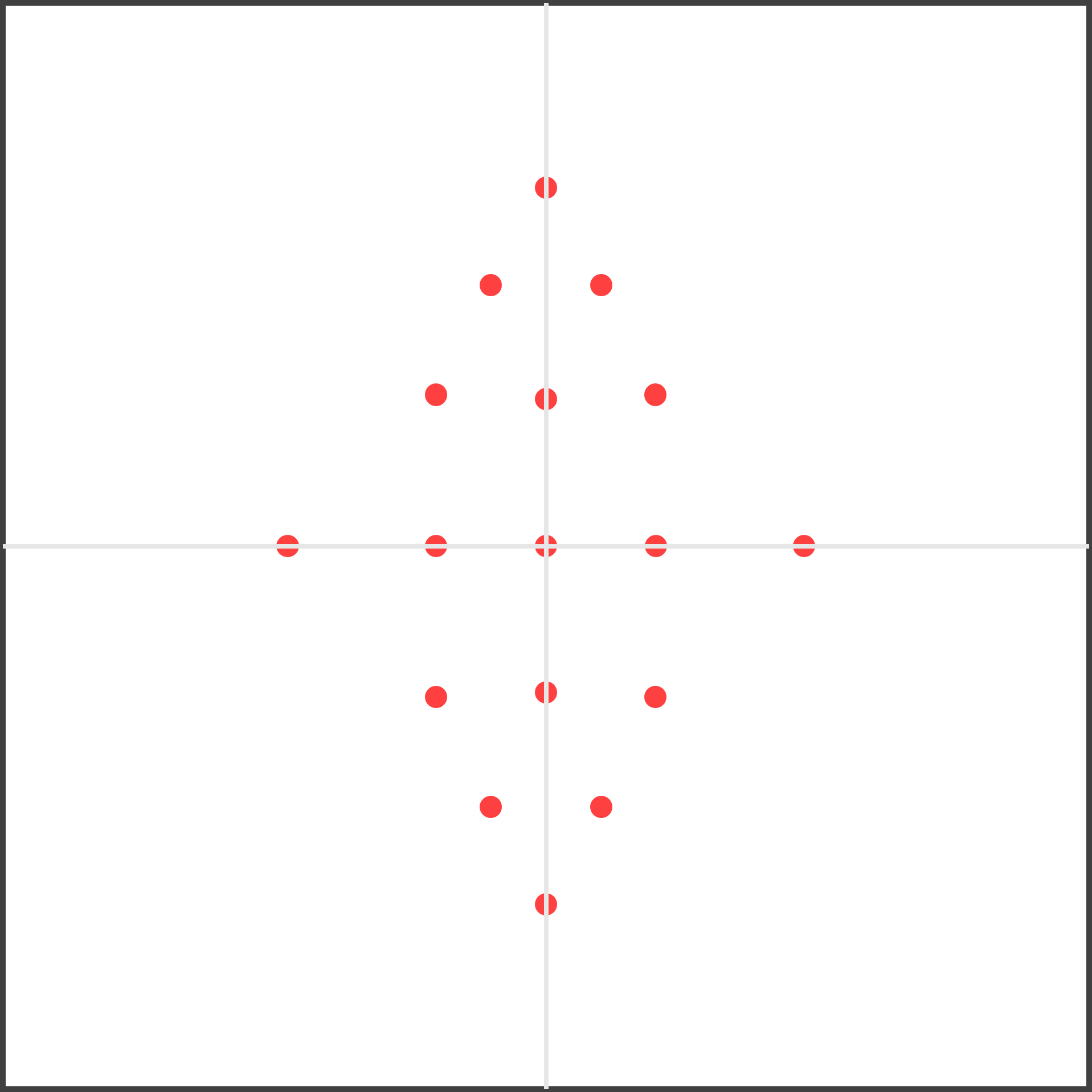}
\\

\rotatebox{90}{\shortstack{{$\quad N_2=2$}}} &
\includegraphics[width=0.11\textwidth]
{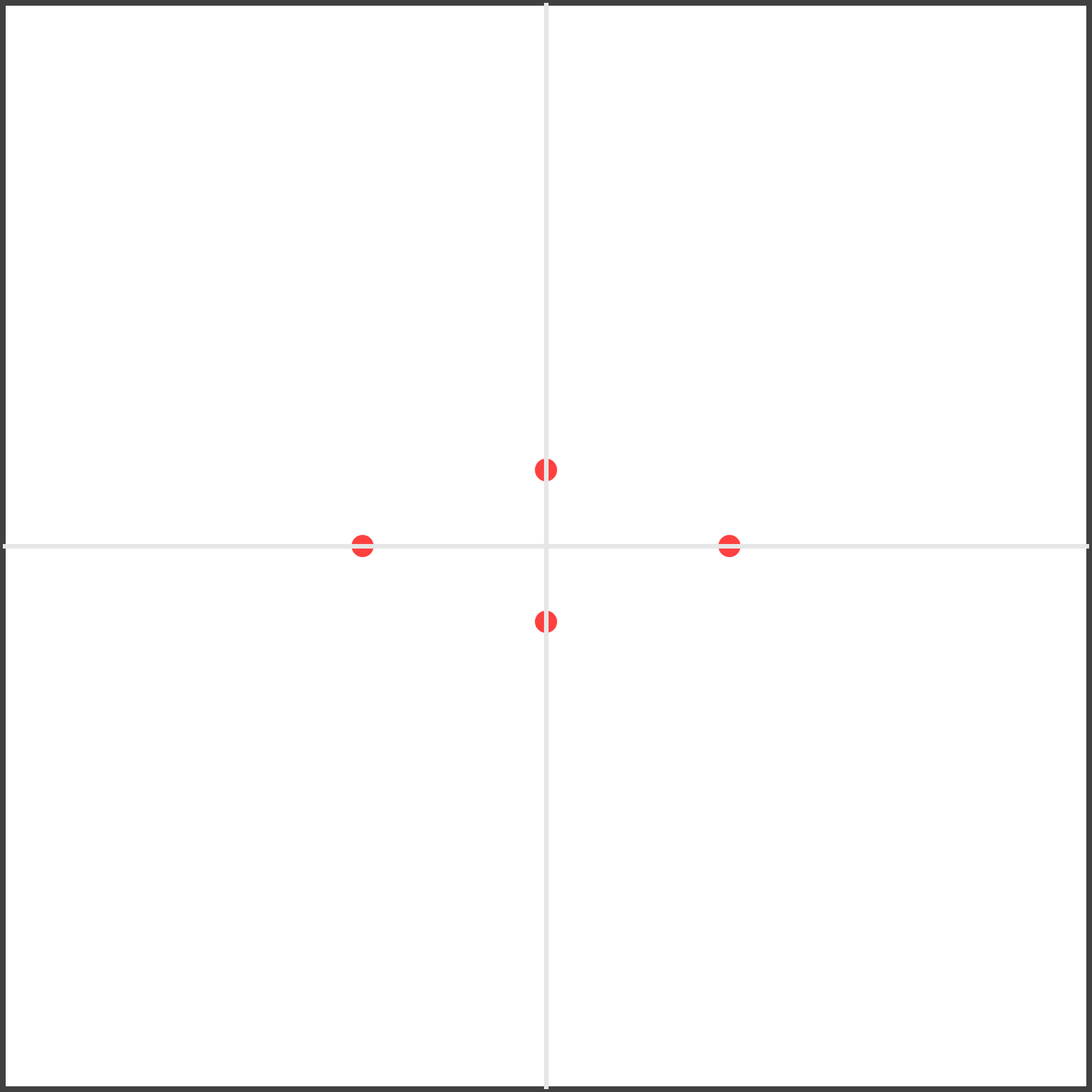} &
\includegraphics[width=0.11\textwidth]
{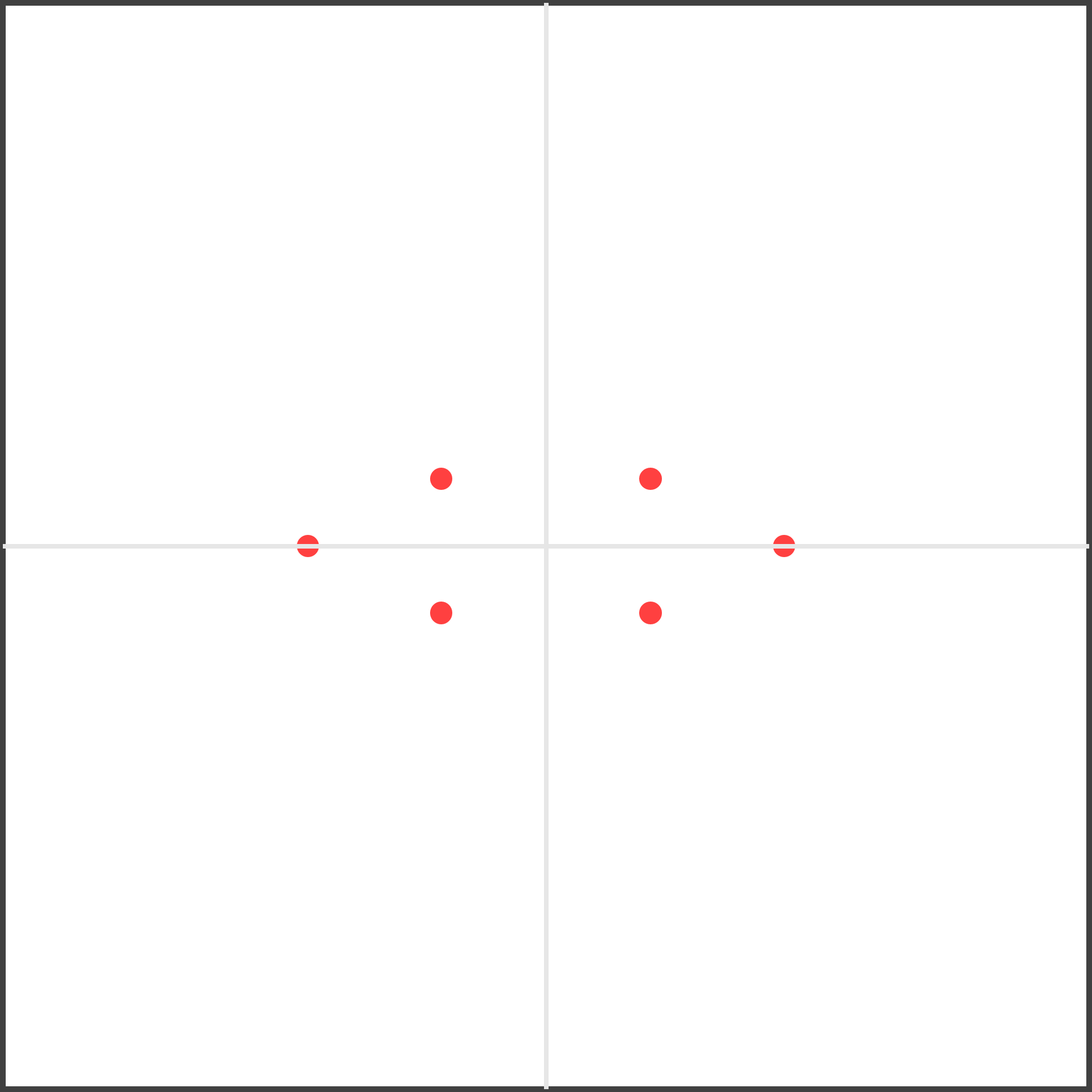} &
\includegraphics[width=0.11\textwidth]
{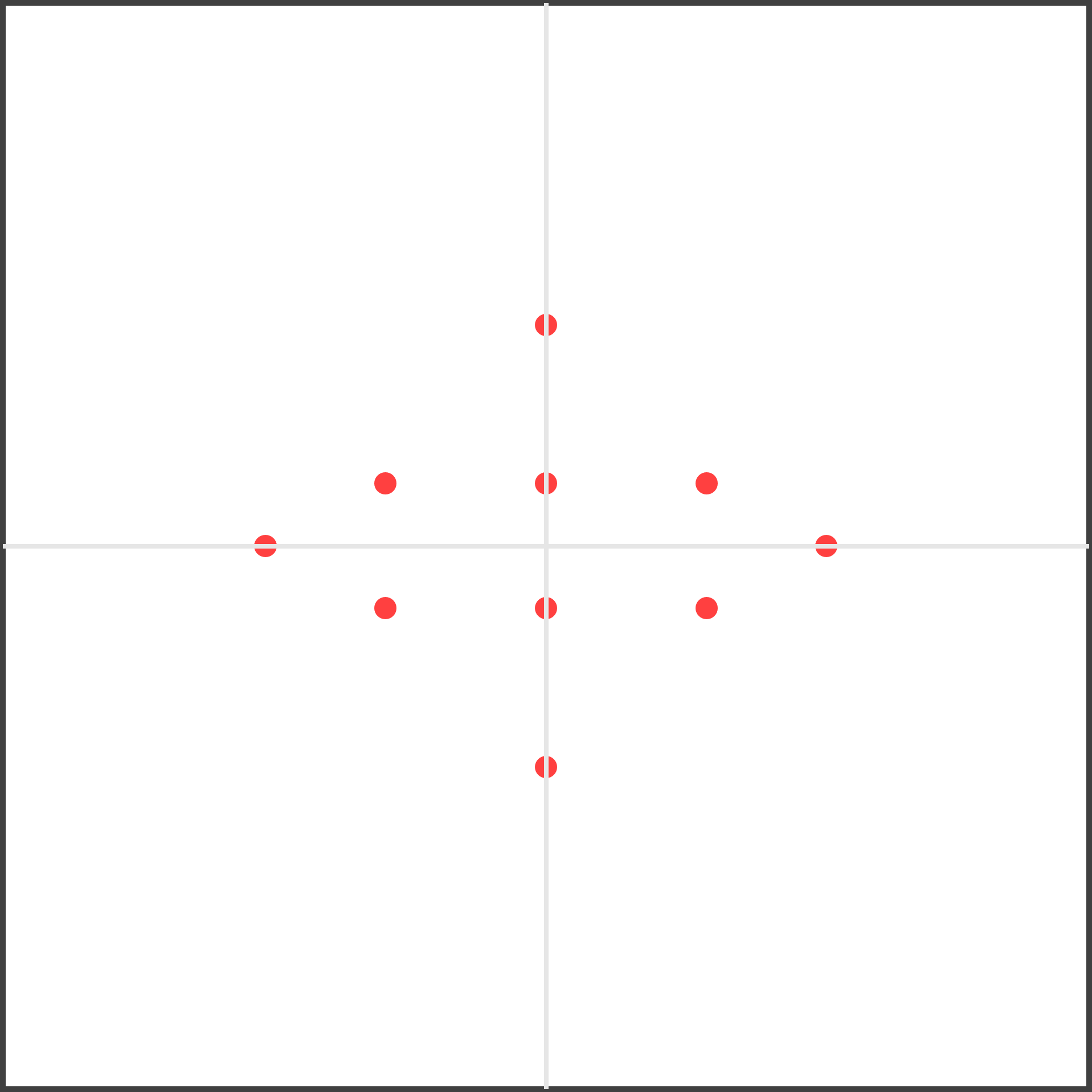} &
\includegraphics[width=0.11\textwidth]
{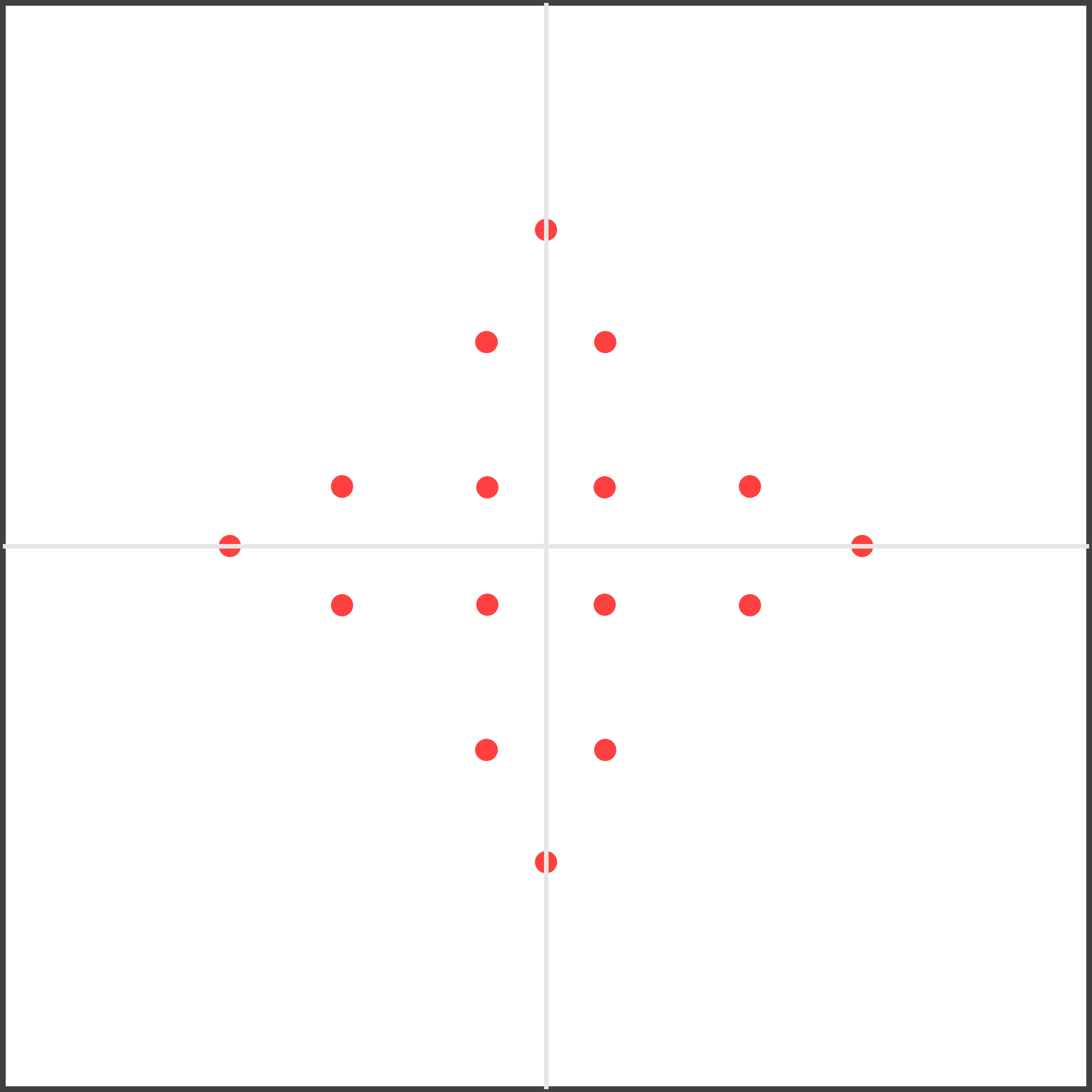}
\\

\rotatebox{90}{\shortstack{{$\quad N_2=3$}}} &
\includegraphics[width=0.11\textwidth]
{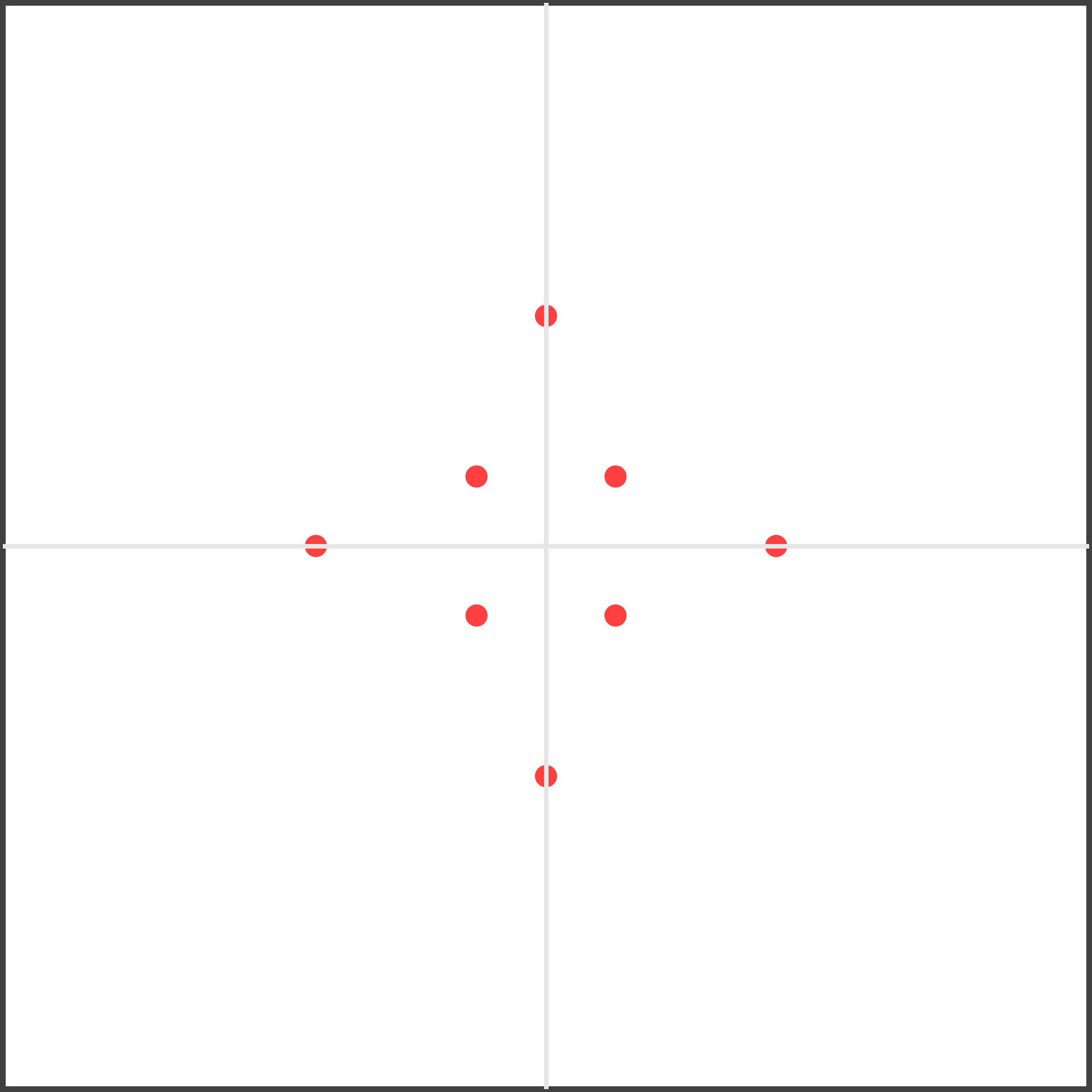} &
\includegraphics[width=0.11\textwidth]
{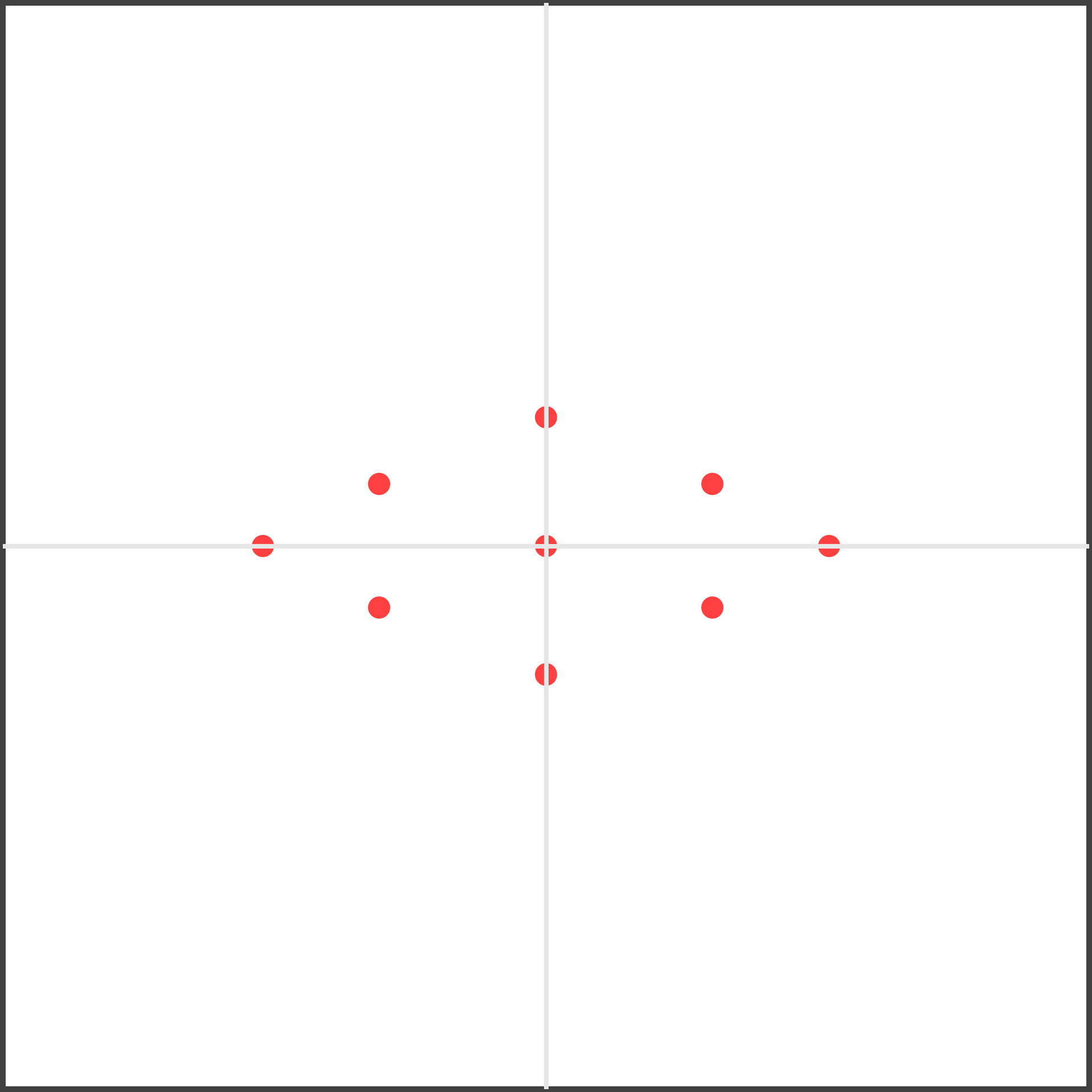} &
\includegraphics[width=0.11\textwidth]
{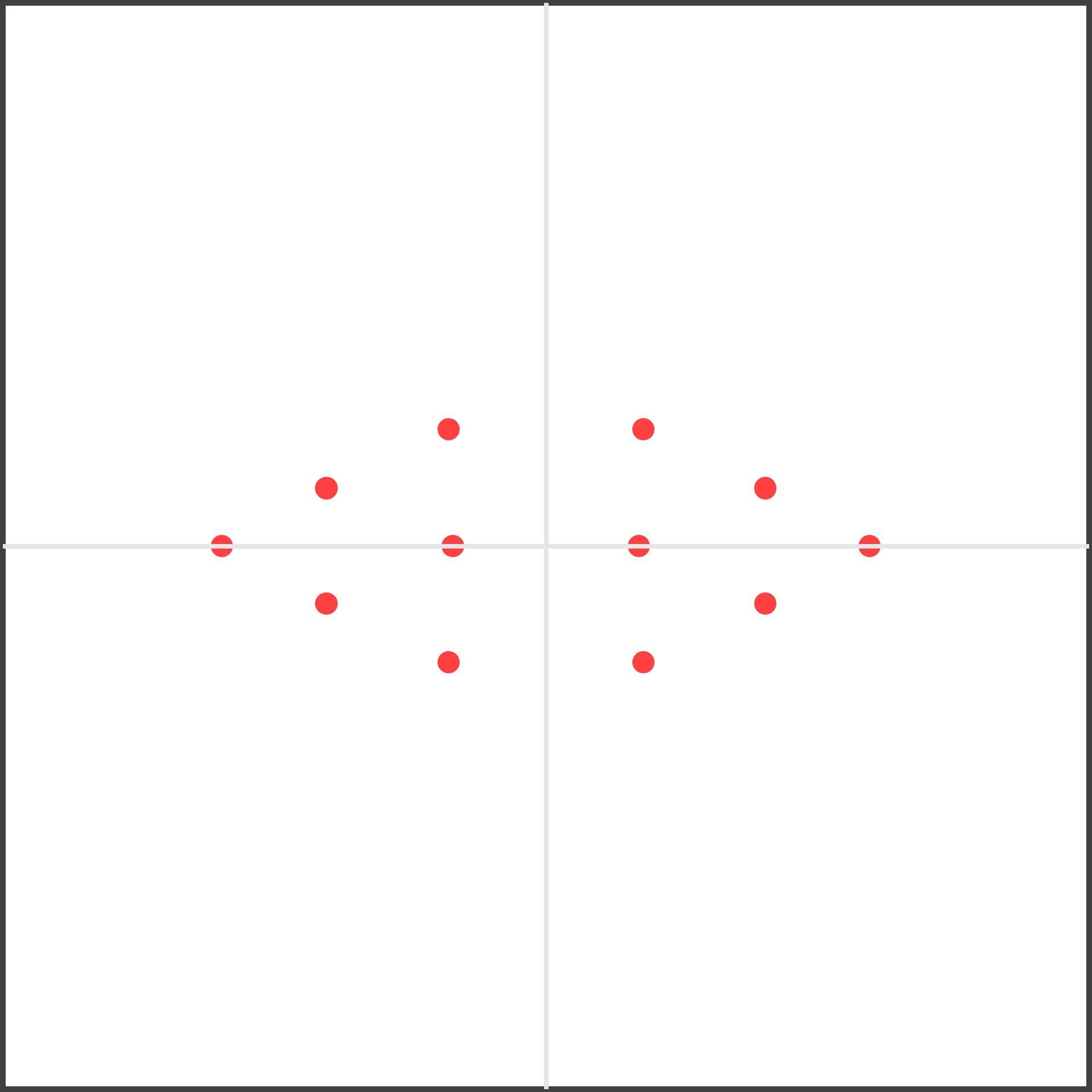} &
\includegraphics[width=0.11\textwidth]
{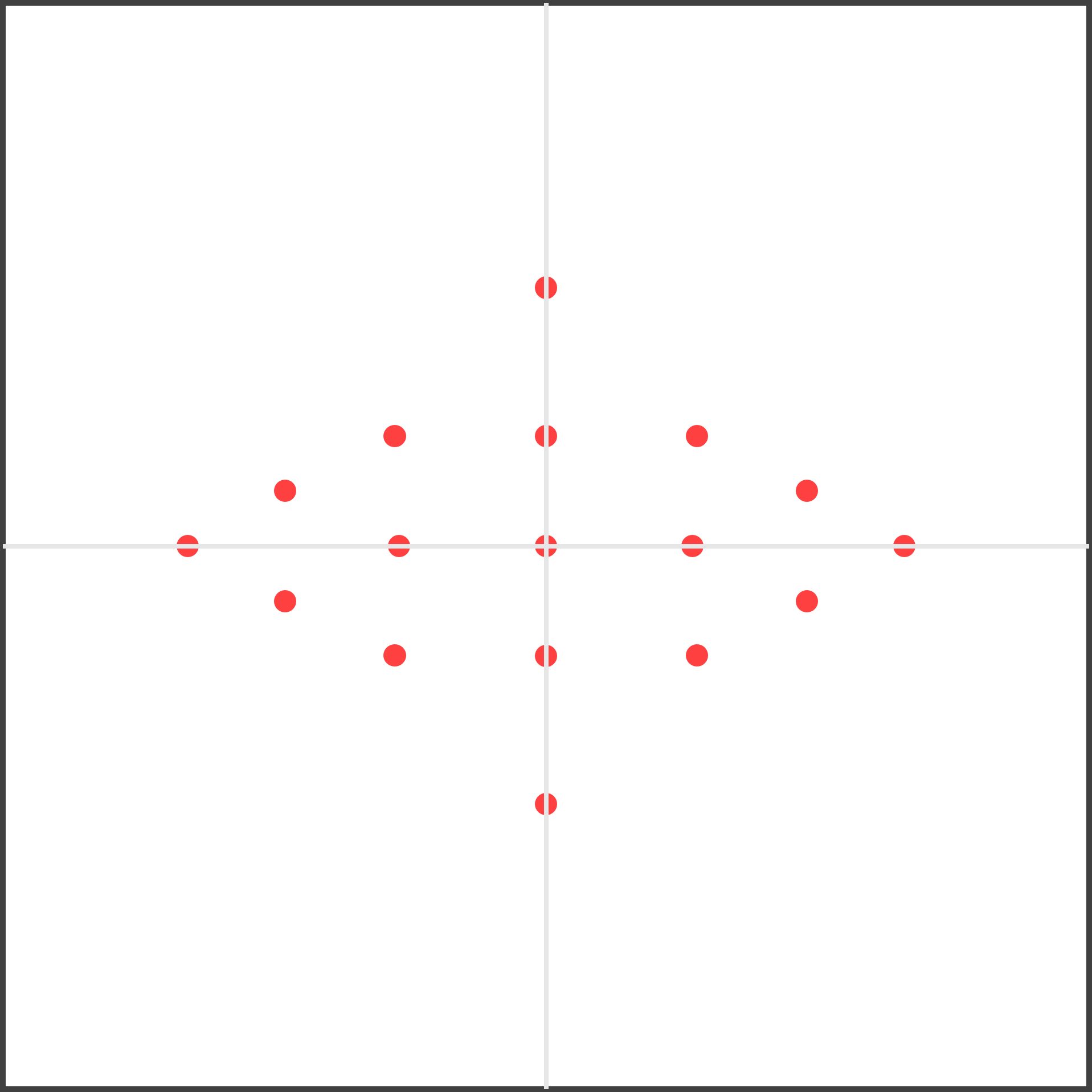}
\\

\rotatebox{90}{\shortstack{{$\quad N_2=4$}}} &
\includegraphics[width=0.11\textwidth]
{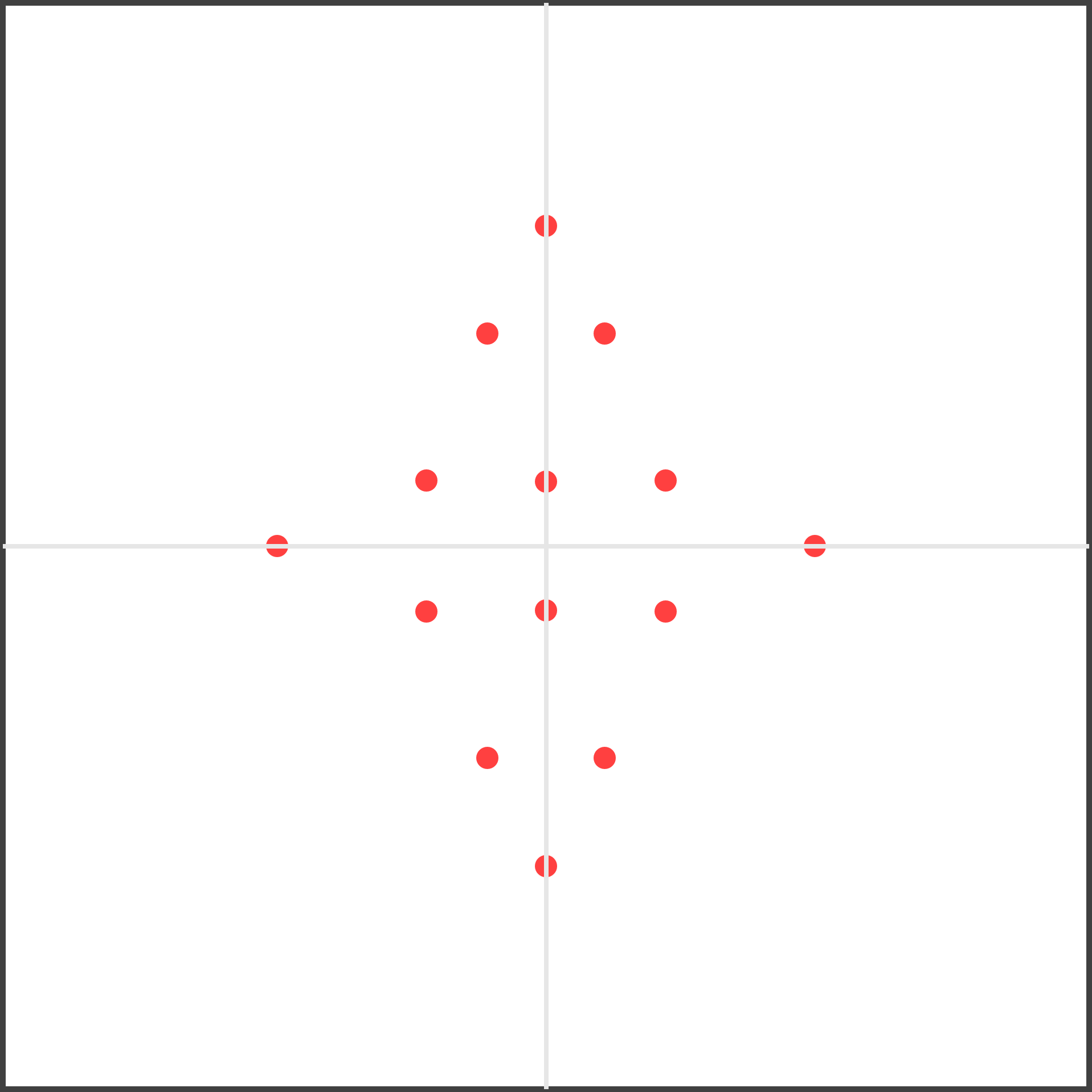} &
\includegraphics[width=0.11\textwidth]
{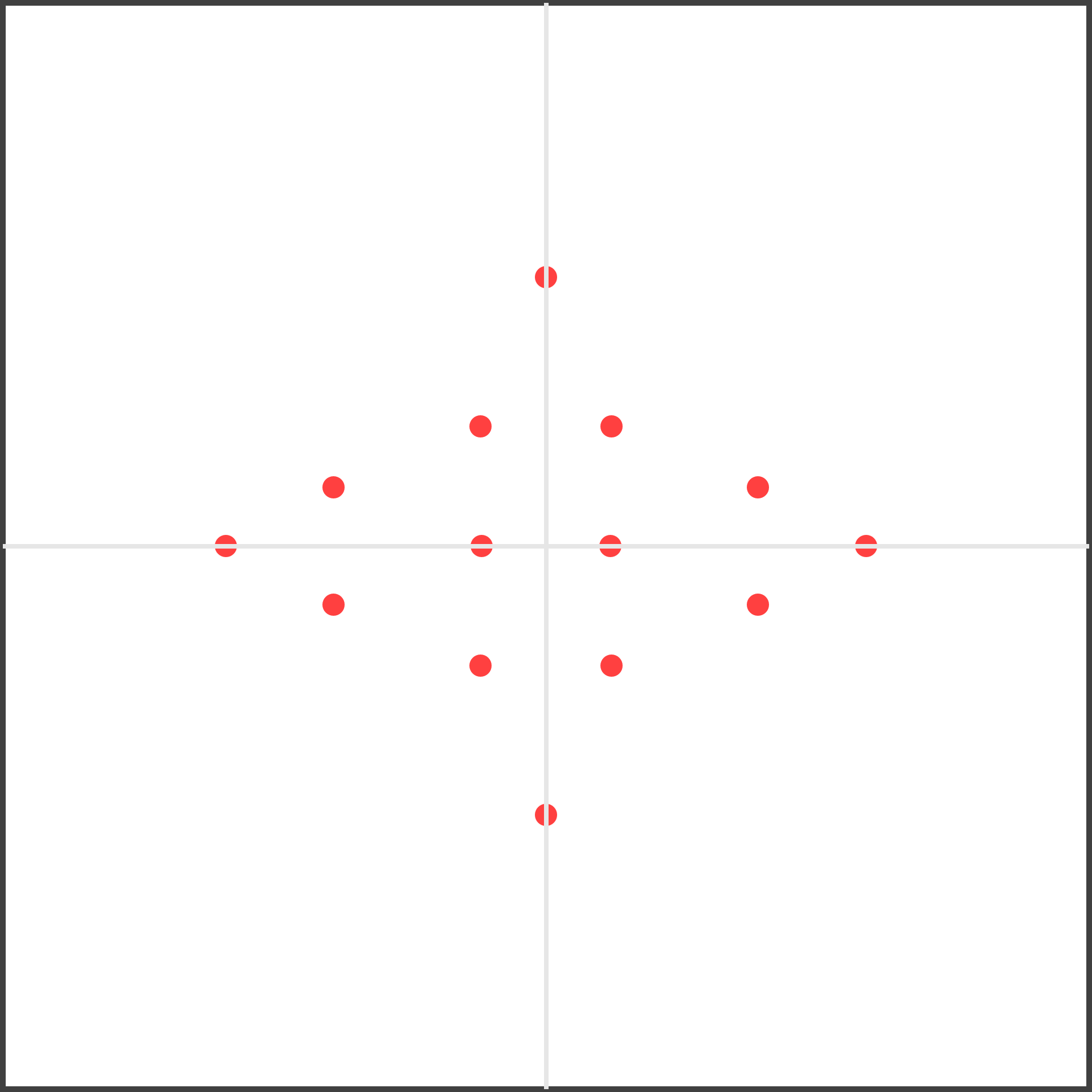} &
\includegraphics[width=0.11\textwidth]
{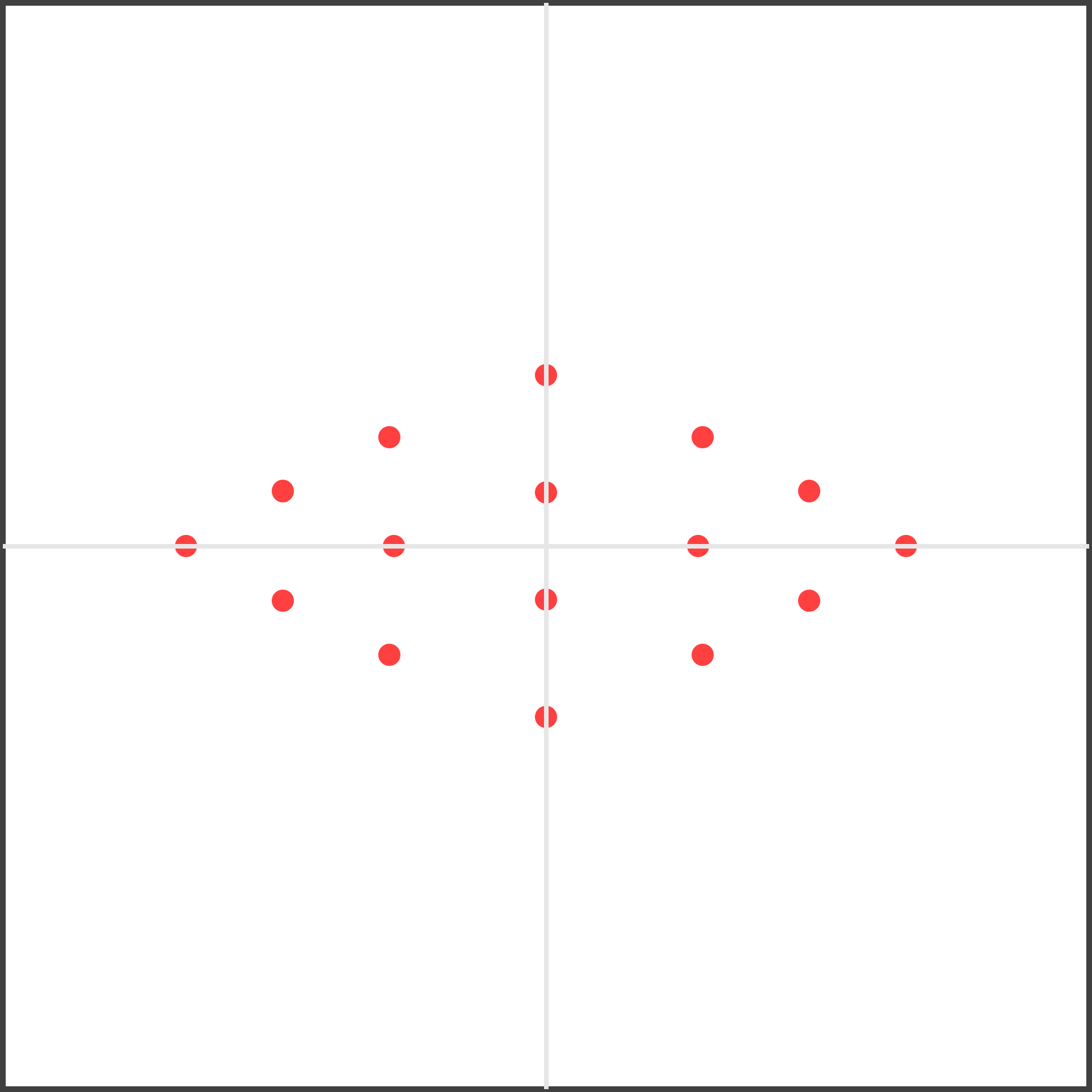} &
\includegraphics[width=0.11\textwidth]
{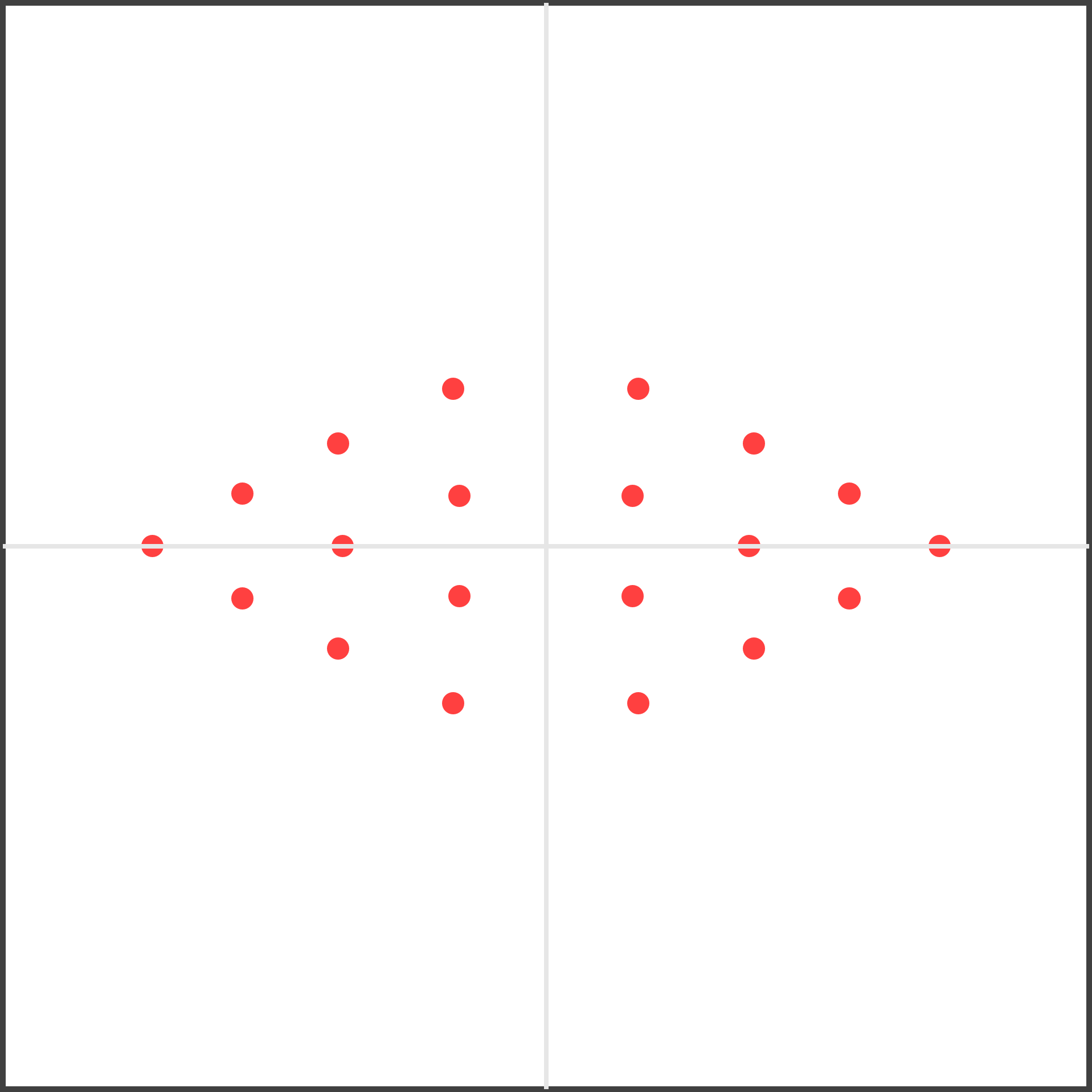}
\\


\end{tabular}

\endgroup


\caption{Roots of the generalized Okamoto polynomials
$Q_{N_1,N_2}(z)$ for $1\leq N_1,N_2\leq4$.
The horizontal and vertical plotting ranges in each panel are
automatically adjusted according to the locations of the roots.}
\label{fig:zeros-generalized-Okamoto-polynomials}
\end{figure}

\newpage
\section{Preliminaries}

In this section, we present the rogue wave solutions for the multi-component NLS and Hirota equations. We first introduce some notation required for the rogue wave solutions and their patterns. 
The Schur polynomials $S_n(\mathbf{x})$ are defined by
\begin{equation} \sum_{n=0}^{\infty}S_n(\mathbf{x})\lambda^n=\exp\left(\sum_{k=1}^{\infty}x_k\lambda^k\right),
\end{equation}
where $\mathbf{x}=(x_1,x_2,\cdots)$. To be more specific, we have
\begin{equation}\label{Schur polynomials}
  S_{0}(\mathbf{x})=1, \quad S_{1}(\mathbf{x})=x_{1}, \quad S_{2}(\mathbf{x})=\frac{1}{2} x_{1}^{2}+x_{2},  \ldots, \quad S_{j}(\mathbf{x})=\sum_{l_{1}+2 l_{2}+\cdots+m l_{m}=j}\left(\prod_{i=1}^{m} \frac{x_{i}^{l_{i}}}{l_{i} !}\right).
\end{equation}
Further, we define $S_j(\mathbf{x})\equiv0$ for $ j<0 $.

\begin{lemma}[\cite{zhang2022rogue}]\label{lemma:multiple roots}
Let $M$ be a positive integer and $\lambda_1>0, r_j \not = 0,k_j$ be real constants, with $k_i\neq k_j$ when $i\neq j$, $i,j=1,2,\dots, M$.  Let $\mathcal{R}_M(z)$ be a rational function defined by
\begin{equation}\label{dimension reduction VS rat}
\mathcal{R}_M(z)= \sum_{j=1}^M\frac{r_j }{(z+  k_j)^2} + 1.
\end{equation}
Then $\mathcal{R}_M(z)=0$ has a pair of complex conjugate roots with nonzero imaginary parts of multiplicity $M$
\begin{equation}
     \lambda_1  \cos\left[  \frac{\pi}{M+1}\right] -  k_1  \pm  \mathrm{i}  \lambda_1 \sin\left[  \frac{\pi}{M+1}\right],
\end{equation}
if the parameters $k_j$, $j=2,\ldots,M$, and $r_j$,
$j=1,\ldots,M$, satisfy
\begin{eqnarray}
  k_j  &=& k_1 +  \lambda_1 \left( \sin\left[\frac{\pi}{M+1}\right]  \cot\left[\frac{j\pi}{M+1}\right] -  \cos\left[\frac{\pi}{M+1}\right]\right),
\end{eqnarray}
and
\begin{eqnarray}
  r_j
  = (-1)^{j+1}  \prod_{\substack{i=1 \\ i \not=j}}^M (k_j-k_i)^{-1}  \left(\lambda_1 \frac{\sin\left[  \frac{\pi}{M+1}\right] }{\sin\left[\frac{j\pi}{M+1}\right]} \right)^{M+1}.
\end{eqnarray}
\end{lemma}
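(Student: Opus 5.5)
The plan is to clear denominators in \eqref{dimension reduction VS rat} and show that the numerator equals $\bigl[(z-z_0)(z-z_0^*)\bigr]^M$, where $z_0=\lambda_1 e^{\mathrm{i}\theta}-k_1$ and $\theta=\pi/(M+1)$. This reduces to a partial-fraction statement, which I would prove by exhibiting the relevant rational function as an exact derivative.

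\textbf{Normalization.} Substitute
\[
z=-k_1+\lambda_1\cos\theta+s\,x,\qquad s=\lambda_1\sin\theta>0 .
\]
In the variable $x$ the two claimed roots become $x=\pm\mathrm{i}$. Using the prescribed $k_j$, the poles $z=-k_j$ become
\[
x_j=-\cot(j\theta),\qquad j=1,\dots,M .
\]
The case $j=1$ is consistent, since $z=-k_1$ gives $x=-\cot\theta$. These $x_j$ are distinct, and they are exactly the roots of
\[
Q(x)=\frac{(x+\mathrm{i})^{M+1}-(x-\mathrm{i})^{M+1}}{2\mathrm{i}(M+1)} .
\]
Indeed, $(x+\mathrm{i})/(x-\mathrm{i})=e^{2\mathrm{i}j\theta}$ gives $x=\cot(j\theta)$, and the set $\{\cot(j\theta)\}_{j=1}^M$ is symmetric under $x\mapsto-x$ because $\cot((M+1-j)\theta)=-\cot(j\theta)$. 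Since $Q$ is monic of degree $M$, we get $Q(x)=\prod_j(x-x_j)$.

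\textbf{Key identity.} Put $u=(x+\mathrm{i})^{M+1}$ and $v=(x-\mathrm{i})^{M+1}$. Then
\[
v'u-u'v=(M+1)(x^2+1)^M\bigl[(x+\mathrm{i})-(x-\mathrm{i})\bigr]=2\mathrm{i}(M+1)(x^2+1)^M .
\]
Hence
\[
G(x):=\frac{(x^2+1)^M}{Q(x)^2}=2\mathrm{i}(M+1)\,\frac{d}{dx}\!\left[\frac{v}{u-v}\right].
\]
The bracket has simple poles only at the $x_j$, so its derivative has vanishing residues there. At infinity the bracket equals $x/(2\mathrm{i}(M+1))+O(1)$, and $G\to1$. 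Therefore
\[
G(x)=1+\sum_{j=1}^M\frac{\rho_j}{(x-x_j)^2},\qquad \rho_j=\frac{(x_j^2+1)^M}{Q'(x_j)^2}\neq0 .
\]
Returning to $z$, with $x-x_j=(z+k_j)/s$, this says
\[
1+\sum_{j=1}^M\frac{s^2\rho_j}{(z+k_j)^2}=\frac{\bigl[(z-z_0)(z-z_0^*)\bigr]^M}{\prod_j(z+k_j)^2}.
\]
So with $r_j=s^2\rho_j$ the function $\mathcal R_M$ vanishes exactly at $z_0$ and $z_0^*$, each with multiplicity $M$. Both roots have nonzero imaginary part because $\sin\theta\neq0$. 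Uniqueness of the partial-fraction expansion then gives the stated formula for $k_j$ (already built in) together with this formula for $r_j$.

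\textbf{Matching the stated $r_j$ (main obstacle).} The remaining work is to show that $s^2\rho_j$ equals the closed form in the lemma. In $\rho_j$ I would write one factor of $Q'(x_j)$ as the product $\prod_{i\neq j}(x_j-x_i)$. Since $x_j-x_i=(k_i-k_j)/s$, this product turns into $\prod_{i\ne j}(k_j-k_i)^{-1}$ up to a sign and a power of $s$.

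For the other factor I would use the explicit derivative
\[
Q'(x)=\frac{(x+\mathrm{i})^M-(x-\mathrm{i})^M}{2\mathrm{i}} .
\]
At $x_j=-\cot(j\theta)$ one has
\[
x_j\pm\mathrm{i}=-\frac{e^{\mp\mathrm{i}j\theta}}{\sin(j\theta)} .
\]
Using $(M+1)j\theta=j\pi$, this gives
\[
Q'(x_j)=(-1)^{M+j+1}\,\frac{\sin(j\theta)}{\sin^{M}(j\theta)}\cdot\frac{\sin(Mj\theta)}{\sin(j\theta)}\cdot(\pm1),
\]
which simplifies to a signed power of $1/\sin(j\theta)$. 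Also $x_j^2+1=1/\sin^2(j\theta)$.

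Collecting the powers of $\sin(j\theta)$ and of $s=\lambda_1\sin\theta$ should produce
\[
\bigl(\lambda_1\sin\theta/\sin(j\theta)\bigr)^{M+1}
\]
with overall sign $(-1)^{j+1}$. This sign and exponent bookkeeping is the step I expect to need the most care: it means tracking the factors $(-1)^{j}$ coming from $\sin(Mj\theta)=(-1)^{j+1}\sin(j\theta)$, and the sign flips produced by reversing the order of $(k_i-k_j)$.
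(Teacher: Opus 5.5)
The paper does not prove this lemma; it quotes it from \cite{zhang2022rogue}, so there is no in-paper argument to compare against. Judged on its own, your argument is correct. The normalization $x_j=-\cot(j\theta)$, the identification $Q(x)=\prod_j(x-x_j)$, the identity $G=2\mathrm{i}(M+1)\,\frac{d}{dx}[v/(u-v)]$, and the partial-fraction step all check out. Since the poles $-k_j$ are real and $z_0\notin\mathbb{R}$, nothing cancels, so $z_0$ and $z_0^*$ are roots of multiplicity exactly $M$.

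The only defect is in the step you left open. Your displayed formula for $Q'(x_j)$ carries a spurious factor $(-1)^{j}$ and an unresolved $\pm1$. Substituting $x_j\pm\mathrm{i}=-e^{\mp\mathrm{i}j\theta}/\sin(j\theta)$ directly gives
\[
Q'(x_j)=(-1)^{M+1}\frac{\sin(Mj\theta)}{\sin^{M}(j\theta)}=\frac{(-1)^{M+j}}{\sin^{M-1}(j\theta)}.
\]
Hence $\rho_j=1/\sin^2(j\theta)$, that is, $r_j=\lambda_1^2\sin^2\theta/\sin^2(j\theta)$. On the other side, $\prod_{i\neq j}(k_j-k_i)=(-s)^{M-1}Q'(x_j)=(-1)^{j+1}s^{M-1}\sin^{1-M}(j\theta)$. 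The sign is also visible from the monotonicity of $\cot$, since exactly $j-1$ of the factors are negative. Substituting this into the lemma's expression gives the same value $s^2/\sin^2(j\theta)$.

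A shorter route lands exactly on the stated form. Read $\rho_j$ off the residue of your bracket rather than the double-pole coefficient:
\[
\rho_j=-2\mathrm{i}(M+1)\operatorname{Res}_{x=x_j}\frac{v}{u-v}=-\frac{(x_j-\mathrm{i})^{M+1}}{\prod_{i\neq j}(x_j-x_i)}.
\]
Then use $(x_j-\mathrm{i})^{M+1}=(-1)^{M+1+j}\sin^{-(M+1)}(j\theta)$, which follows from $e^{\mathrm{i}(M+1)j\theta}=(-1)^j$, together with $x_j-x_i=(k_i-k_j)/s$. This reproduces the lemma's formula verbatim.

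In the $z$-variable, $2\mathrm{i}(M+1)s$ times your bracket equals $(z-z_0)^{M+1}/\prod_i(z+k_i)$. This is an antiderivative of $\mathcal R_M$, differing from $z-\sum_j r_j/(z+k_j)$ by a constant. The stated $r_j$ is exactly minus its residue at $z=-k_j$, which explains the product form of the lemma.

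Your double-pole computation reveals something that product form hides: $r_j>0$ for every $j$. This is consistent with Theorem~\ref{RW solutions of vector NLS} taking $\sigma_j=1$, since there $\sigma_j\rho_j^2=2r_j$. It is likewise consistent with Theorem~\ref{thm:rogue wave solutions} taking $c_j=-1$, since there $c_j\rho_j^2=-r_j$.
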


\subsection{Rogue wave solutions in the multi-component NLS equation}
In this subsection, we recall the rogue wave solutions of the multi-component NLS equation. We first define the function $\mathcal{G}_M(p)$ by
\begin{eqnarray}
\mathcal{G}_M(p) &=& \sum_{j=1}^M\frac{\sigma_j \rho_j^2}{p-\mathrm{i} k_j} +2 p, \label{definition of G(p))} 
\end{eqnarray}
where $\sigma_j = \pm 1$ and $\rho_j >0, k_j$ are real constants,  $j=1,2,\cdots,M$. We may deduce from Lemma \ref{lemma:multiple roots} that if   $\sigma_j, \rho_j $ and $ k_j, j=1,2,\cdots,M$, satisfy the constraints
\begin{equation}\label{contraints of multiple zeros}
\begin{aligned}
 k_j  &= k_1 +  \lambda_1 \left( \sin[  \pi /(M+1)]  \cot[j \pi /(M+1)] -  \cos[ \pi /(M+1)]\right),\\
 \sigma_j\rho_j^2
  &= 2 (-1)^{j+1}  \prod_{\substack{i=1 \\ i \not=j}}^M (k_j-k_i)^{-1}  \left(\lambda_1 \frac{\sin[  \pi /(M+1)] }{\sin[j \pi /(M+1)]} \right)^{M+1},
  \end{aligned}
\end{equation}
then the algebraic equation
\begin{equation}
\mathcal{G}_M^{\prime}(p)=0
\end{equation}
has a pair of non-imaginary roots of multiplicity $M$ given by
\begin{equation}
 \label{eq:NLS-multiple-root}
      \pm  \lambda_1 \sin[ \pi /(M+1)]-\mathrm{i} \lambda_1  \cos[ \pi /(M+1)] + \mathrm{i} k_1 .
\end{equation}
This is the case that we will encounter later. Additionally, we will need the function $p(\kappa)$ defined by 
\begin{eqnarray}
   \mathcal{G}_M(p(\kappa)) &=& \dfrac{\mathcal{G}_M(p(0))}{M+1} \sum_{n=1}^{M+1} \exp\left(\exp\left(\dfrac{2 n \pi \mathrm{i}}{M+1}\right) \kappa\right)  \nonumber
   \\
    &=& \dfrac{\mathcal{G}_M(p(0))}{M+1} \sum_{n=1}^{M+1} \exp\left({\cos{\left(\dfrac{2 n \pi }{M+1}\right)} \kappa}\right) \cos\left(\sin \left(\dfrac{2 n \pi }{M+1}\right)\kappa\right). \label{definition of p(kappa))}
\end{eqnarray}
\begin{theorem}
[\cite{zhang2022rogue}] \label{RW solutions of vector NLS}
 Let $M$ be a positive integer,  $\rho_j >0, k_j$ be real constants, and $\sigma_j=1 $, where $j=1,2,\cdots,M$. Assume  $ \rho_j $ and $ k_j$ are given by \eqref{contraints of multiple zeros}. Let $\mathcal{G}_M(p), p(\kappa)$ be functions defined by \eqref{definition of G(p))} and \eqref{definition of p(kappa))} respectively, and $p_0$ be a root of $\mathcal{G}_M^{\prime}(p)$ of multiplicity $M$.
Let $\boldsymbol{x}_{I}^{\pm}=\left(x_{1,I}^{\pm}, x_{2,I}^{\pm}, \cdots\right), I = 1, 2, \dots, M$, and $\boldsymbol{s}=\left(s_{1}, s_{2}, \cdots\right)$ be the vectors defined by
\begin{eqnarray}
&&x_{i,I}^{+}=\alpha_i x+\beta_i \mathrm{i} t+ \sum_{j=1}^{M}n_j \theta_{ij} + a_{i,I}, \label{values of x+} \\
&&x_{i,I}^{-}=\alpha_i^* x-\beta_i^* \mathrm{i} t-\sum_{j=1}^{M}n_j \theta_{ij}^* +a_{i,I}^*,\\
&&\ln \left[\frac{1}{\kappa}\left(\frac{p_{0}+p_{0}^{*}}{p_{1}}\right)\left(\frac{p(\kappa)-p_{0}}{p(\kappa)+p_{0}^{*}}\right)\right]=\sum_{r=1}^{\infty} s_{r} \kappa^{r}, 
\end{eqnarray}
where the asterisk `$*$' represents complex conjugation, $p_0=p(0),~p_1=p^{\prime}(0)$, the $a_{i,I}$'s are arbitrary constants, and $\alpha_i,~\beta_i,~\theta_{ij},~j = 1, 2, \dots, M$, are defined by the expansions
$$
p(\kappa)-p_{0}=\sum_{r=1}^{\infty} \alpha_{r} \kappa^{r}, \quad p^{2}(\kappa)-p_{0}^{2}=\sum_{r=1}^{\infty} \beta_{r} \kappa^{r},\quad\ln \frac{p(\kappa)-\mathrm{i} k_{j}}{p_{0}-\mathrm{i} k_{j}}=\sum_{r=1}^{\infty} \theta_{rj} \kappa^{r}.
$$
In this case, the $M$-component NLS equation \eqref{vector NLS} admits $\mathcal{N}$-th order rogue wave solutions
\begin{equation}
 u_{j,\mathcal{N}} = \rho_{j}\frac{g_{j,\mathcal{N}}}{f_{\mathcal{N}}}  e^{\mathrm{i}\left(k_j x+w_j t\right)} , \quad j=1,2,\cdots,M,
 \label{eq:NLS-solution}
\end{equation}
where 
\begin{equation} \label{definition of N-theorem}
w_j=\sum_{i=1}^{M}\sigma_i \rho_i^2 - k_j^2, \quad \mathcal{N} = \left(N_1, N_2, \ldots, N_M\right),
\end{equation}
with $N_j~(j=1,2,\cdots,M)$ being nonnegative integers,
and $f_{\mathcal N}$ and $g_{j,\mathcal N}$ are given by
\begin{equation}
f_{\mathcal{N}}=\tau_{\mathbf{n}_0}, \quad g_{j,\mathcal{N}}=\tau_{\mathbf{n}_j},
\end{equation}
with
$$
\mathbf{n}_0=(0,0, \ldots, 0) \in \mathbb{R}^M, \quad \mathbf{n}_j= \boldsymbol{e}_j,
$$
and $\boldsymbol{e}_j$ being the standard unit vector in $\mathbb{R}^M$. Here, $\tau_{\mathbf{n}}$ is given by the following $K \times K \,(1\leq K \leq M) $ block determinant
\begin{equation}
\label{tau-block matrix-theorem-1}
\tau_{\mathbf{n}}=\det\left(\begin{array}{llll}
\tau_{\mathbf{n}}^{[I_1,I_1]} & \tau_{\mathbf{n}}^{[I_1,I_2]}&\cdots&\tau_{\mathbf{n}}^{[I_1,I_K]} \\
\tau_{\mathbf{n}}^{[I_2,I_1]} & \tau_{\mathbf{n}}^{[I_2,I_2]}&\cdots&\tau_{\mathbf{n}}^{[I_2,I_K]} \\ \vdots & \vdots& \ddots &\vdots \\\tau_{\mathbf{n}}^{[I_K,I_1]} & \tau_{\mathbf{n}}^{[I_K,I_2]}&\cdots&\tau_{\mathbf{n}}^{[I_K,I_K]}
\end{array}\right)_{N \times N},
\end{equation}
where
\begin{eqnarray}
&&\mathbf{n}=\left(n_1, n_2, \ldots, n_M\right),
\\
&& 1\leq I_1<I_2<\cdots<I_K\leq M,
\\
&& \tau_{\mathbf{n}}^{[I, J]}=\left(m_{(M+1)(i-1)+I,(M+1) (j-1)+J}^{(\mathbf{n},I, J)}\right)_{1 \leq i \leq N_I, 1 \leq j \leq N_J}, \quad 1 \leq I, J \leq M,\label{tau-entry of block matrix-theorem-1}
\end{eqnarray}
$n_1, n_2, \ldots, n_M $ are integers,  $ I_j, N_{I_j} \, (j=1,2,\cdots,K)$  are positive integers with $N_{I_1}+N_{I_2}+\cdots+N_{I_K} = N, N \geq K$ and $N_l = 0$ for $l \in \{1,2,\cdots,M\}  \backslash \{I_1,I_2,\cdots,I_K\}$, and the corresponding matrix elements of \eqref{tau-entry of block matrix-theorem-1} are defined by
\begin{eqnarray} \label{eq:m-entry}
    m_{i, j}^{(\mathbf{n},I, J)}&=&\sum_{v=0}^{\min (i, j)}\left[\frac{\left|p_{1}\right|^{2}}{\left(p_{0}+p_{0}^{*}\right)^{2}}\right]^{v} S_{i-v}\left(\boldsymbol{x}_I^{+}(\mathbf{n})+v \boldsymbol{s}\right) S_{j-v}\left(\boldsymbol{x}_J^{-}(\mathbf{n})+v \boldsymbol{s}^{*}\right).
\end{eqnarray}
\end{theorem}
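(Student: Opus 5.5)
This theorem is quoted from \cite{zhang2022rogue}, so the plan is to reconstruct the standard KP-reduction argument behind it rather than to invent a new one. We start from the Gram-determinant solutions of the multi-component KP hierarchy. Take the tau function
\[
\tau_{\mathbf n}=\det_{1\le i,j\le N}\bigl(m_{ij}^{(\mathbf n)}\bigr),
\]
where the entries are obtained by applying differential operators in $p$ and $q$ to
\[
\frac{1}{p+q}\prod_{k=1}^M\Bigl(-\frac{p-\mathrm{i}k_k}{q+\mathrm{i}k_k}\Bigr)^{n_k}e^{\xi+\eta},
\qquad \xi=px_1+p^2x_2+\cdots .
\]
Such determinants satisfy the bilinear equations
\[
(D_{x_1}^2-D_{x_2})\,\tau_{\mathbf n+\mathbf e_j}\cdot\tau_{\mathbf n}=0,
\qquad
D_{x_1}D_{y_j}\,\tau_{\mathbf n}\cdot\tau_{\mathbf n}=-2\,\tau_{\mathbf n+\mathbf e_j}\tau_{\mathbf n-\mathbf e_j}.
\]
The first step is to check that, after the substitution $u_j=\rho_j(g_j/f)e^{\mathrm{i}(k_jx+w_jt)}$, the NLS system \eqref{vector NLS} is equivalent to the bilinear system
\[
(\mathrm{i}D_t+D_x^2+2\mathrm{i}k_jD_x)\,g_j\cdot f=0,\qquad
\Bigl(D_x^2-2\sum_k\sigma_k\rho_k^2\Bigr)f\cdot f=-2\sum_k\sigma_k\rho_k^2\,g_jg_j^*.
\]
This KP system maps onto it through the standard identifications $x_1=x$, $x_2=-\mathrm{i}t$, together with a reduction eliminating the auxiliary variables $y_j$.

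The second and central step is the dimension reduction. We need $\tau_{\mathbf n}$ to satisfy
\[
\Bigl(\partial_{x_1}+\sum_j \tfrac{\sigma_j\rho_j^2}{2}\,\partial_{y_j}\Bigr)\tau=C\tau .
\]
Acting on the exponential kernel, this operator multiplies it by $\mathcal G_M(p)+\mathcal G_M(q)$ up to normalization. The usual trick replaces the differential operators $(p\partial_p)^i$ by derivatives with respect to a parameter $\kappa$, with $p=p(\kappa)$. The reduction condition then holds provided $\mathcal G_M(p(\kappa))$ is invariant under $\kappa\mapsto e^{2\pi\mathrm{i}/(M+1)}\kappa$. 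This is exactly what the definition \eqref{definition of p(kappa))} guarantees. The definition is consistent only when $p_0$ is a critical point of $\mathcal G_M$ of order $M$, i.e.\ a root of $\mathcal G_M'$ of multiplicity $M$, so that $\mathcal G_M(p)-\mathcal G_M(p_0)\sim c(p-p_0)^{M+1}$ and an analytic $(M+1)$-th root exists. Lemma~\ref{lemma:multiple roots}, applied via the constraints \eqref{contraints of multiple zeros}, supplies this multiple root \eqref{eq:NLS-multiple-root}.

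Consequently only indices congruent to $I$ modulo $M+1$ survive in the $I$-th block, which produces the index pattern $(M+1)(i-1)+I$ in \eqref{tau-entry of block matrix-theorem-1}. Different blocks can be combined, and taking any subset $I_1<\dots<I_K$ still satisfies the reduction.

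Third, we carry out the algebra converting the Gram entries into the Schur form \eqref{eq:m-entry}. Expanding the generating function $\frac{(p_0+p_0^*)}{p(\kappa)+p^*(\hat\kappa)}\exp(\cdots)$ in $\kappa$ and $\hat\kappa$ gives the sum over $v$ with the factor $|p_1|^2/(p_0+p_0^*)^2$. The vectors $\boldsymbol x^\pm_I$ and $\boldsymbol s$ then arise from the expansions of $p(\kappa)$, $p^2(\kappa)$, $\ln(p-\mathrm{i}k_j)$ and the logarithm defining $s_r$. The free constants $a_{i,I}$ come from the freedom of multiplying the generator by $\exp(\sum a_r\kappa^r)$.

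Finally we impose the complex-conjugation reduction $q=p^*$, which gives $g_{j,\mathbf n}^{*}=\tau_{-\mathbf e_j}$, and show that $f$ is real and positive. Positivity uses $\sigma_j=1$ and a Cauchy--Binet/Gram argument, which ensures nonsingularity. The main obstacle is the dimension-reduction step, in particular verifying that the reduced index sets are consistent when blocks with different $I$ are mixed. We would handle this by showing that each block is separately annihilated by the reduction operator, using the invariance of $\mathcal G_M(p(\kappa))$ under $\kappa\mapsto\omega\kappa$.
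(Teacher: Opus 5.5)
The paper contains no proof of this statement; it is quoted from \cite{zhang2022rogue}, and your outline is essentially the KP-reduction argument of that reference, with one point to rephrase: mixing blocks is harmless not because each block is ``annihilated,'' but because $\mathcal G_M(p(\kappa))$ is a power series in $\kappa^{M+1}$, so the reduction operator sends the row with index $(M+1)(i-1)+I$, $1\le I\le M$, to a constant multiple of itself plus rows of the same block, and any shift below the block's first row lands on a negative index where the entries vanish (likewise for columns). Several normalizations should be corrected when you write it out: with $\xi=px_1+p^2x_2+\cdots$ you need $x_2=\mathrm{i}t$ to match $\beta_i\mathrm{i}t$, the KP identities are $(D_{x_1}^2-D_{x_2}+2\mathrm{i}k_jD_{x_1})\tau_{\mathbf n+\mathbf e_j}\cdot\tau_{\mathbf n}=0$ and $(D_{x_1}D_{y_j}-2)\tau_{\mathbf n}\cdot\tau_{\mathbf n}=-2\tau_{\mathbf n+\mathbf e_j}\tau_{\mathbf n-\mathbf e_j}$, the second NLS bilinear equation is $\bigl(D_x^2+\sum_k\sigma_k\rho_k^2\bigr)f\cdot f=\sum_k\sigma_k\rho_k^2\,g_kg_k^*$, and $\sigma_j=1$ is forced by \eqref{contraints of multiple zeros} (positivity of $\rho_j^2$) rather than being what makes $f>0$, which follows from Cauchy--Binet alone.
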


\subsection{Rogue wave solutions in the multi-component Hirota equation}

In this subsection, we recall the rogue wave solutions of the multi-component Hirota equation. First, we define the rational function $\mathcal{F}_M(p)$ associated with the multi-component Hirota equation as
\begin{equation}\label{eqn:definition of F(p)}
\mathcal{F}_M(p) = -\sum_{j=1}^M \frac{c_j \rho_j^2}{p-\mathrm{i} k_j} + p, 
\end{equation}
where $c_j = \pm 1$, and $\rho_j > 0, k_j$ are real constants for $j=1,2,\dots,M$. It follows from Lemma \ref{lemma:multiple roots} that if the parameters $\{c_j, \rho_j, k_j\}_{j=1}^M$ satisfy the following constraints:
\begin{equation}\label{eqn:constraints of multiple zeros}
\begin{aligned}
 k_j &= k_1 + \lambda_1 \left( \sin\left[\frac{\pi}{M+1}\right] \cot\left[\frac{j \pi}{M+1}\right] - \cos\left[\frac{\pi}{M+1}\right]\right), \\
 c_j \rho_j^2 &= (-1)^{j} \prod_{\substack{i=1 \\ i \not=j}}^M (k_j-k_i)^{-1}  \left(\lambda_1 \frac{\sin\left[  \frac{\pi}{M+1}\right] }{\sin\left[\frac{j\pi}{M+1}\right]} \right)^{M+1},
 \end{aligned}
\end{equation}
then $\mathcal{F}_M^{\prime}(p)=0$ possesses a pair of non-imaginary roots with multiplicity $M$, given by
\begin{equation} \label{root of multiplicity M}
 p = \pm \lambda_1 \sin\left[\frac{\pi}{M+1}\right] - \mathrm{i} \lambda_1 \cos\left[\frac{\pi}{M+1}\right] + \mathrm{i} k_1.
\end{equation}
Next, we define the auxiliary function $p(\kappa)$ by
\begin{equation}\label{eqn:definition of p(kappa)}
\begin{aligned}
 \mathcal{F}_M(p(\kappa)) 
= \frac{\mathcal{F}_M(p(0))}{M+1} \sum_{n=1}^{M+1} \exp\left[\cos\left(\frac{2 n \pi}{M+1}\right) \kappa\right] \cos\left[\sin \left(\frac{2 n \pi}{M+1}\right)\kappa\right],
\end{aligned}
\end{equation}
where $p(0)$ is given by \eqref{root of multiplicity M}.
Then, one can obtain the following rogue wave solutions and rational solutions of the multi-component Hirota equation \eqref{eqn: vector-Hirota}.

\begin{theorem}[\cite{wu2026zerosgeneralizedwronskianhermitepolynomials}]
\label{thm:rogue wave solutions}
Let $M$ be a positive integer,  $\rho_j >0, k_j$ be real constants, and $c_j=-1 $, where $j=1,2,\cdots,M$. Assume  $ \rho_j $ and $ k_j$ are given by \eqref{eqn:constraints of multiple zeros}. Let $\mathcal{F}_M(p), p(\kappa)$ be functions defined by \eqref{eqn:definition of F(p)} and \eqref{eqn:definition of p(kappa)} respectively, and $p_0$ be a root of $\mathcal{F}_M^{\prime}(p)$ of multiplicity $M$ with $\Re{p_0} \Im{p_0} \neq 0$. Then, the $M$-component Hirota equation \eqref{eqn: vector-Hirota} admits $\mathcal{N}$-th order rogue wave solutions
\begin{equation} \label{eq:Hirota-solution}
 v_{j,\mathcal{N}}(x,t) = \rho_{j}\frac{g_{j,\mathcal{N}}(x,t)}{f_{\mathcal{N}}(x,t)}  e^{\mathrm{i}\left(k_j x-w_j t\right)} , \quad j=1,2,\cdots,M,
\end{equation}
where $w_j=k_j^3+3k_j\sum_{i=1}^{M}c_i \rho_i^2+3\sum_{i=1}^{M}c_i\rho_i^2 k_i$, $\mathcal{N} = \left(N_1, N_2, \ldots, N_M\right)$ with $N_j~(j=1,2,\cdots,M)$ being nonnegative integers,
and $f_{\mathcal N}$ and $g_{j,\mathcal N}$ are given by
\begin{equation}
f_{\mathcal{N}}(x,t)=\tau_{\mathbf{n}_0}\left(x-3\sum_{i=1}^{M}c_i \rho_i^2 t,t\right), \quad g_{j,\mathcal{N}}(x,t)=\tau_{\mathbf{n}_j}\left(x-3\sum_{i=1}^{M}c_i \rho_i^2 t,t\right),
\end{equation}
with $\mathbf{n}_0=(0,0, \ldots, 0) \in \mathbb{R}^M$ and $\mathbf{n}_j= \mathbf{e}_j$ being the standard unit vector in $\mathbb{R}^M$. Here, $\tau_{\mathbf{n}}$ is given by the following $K \times K \, (1\leq K \leq M) $ block determinant
\begin{equation} \label{tau-block matrix-theorem-2}
\tau_{\mathbf{n}}=\det\left(\begin{array}{llll}
\tau_{\mathbf{n}}^{[I_1,I_1]} & \tau_{\mathbf{n}}^{[I_1,I_2]}&\cdots&\tau_{\mathbf{n}}^{[I_1,I_K]} \\
\tau_{\mathbf{n}}^{[I_2,I_1]} & \tau_{\mathbf{n}}^{[I_2,I_2]}&\cdots&\tau_{\mathbf{n}}^{[I_2,I_K]} \\ \vdots & \vdots& \ddots &\vdots \\\tau_{\mathbf{n}}^{[I_K,I_1]} & \tau_{\mathbf{n}}^{[I_K,I_2]}&\cdots&\tau_{\mathbf{n}}^{[I_K,I_K]}
\end{array}\right)_{N \times N},
\end{equation}
where
\begin{eqnarray}
&&\mathbf{n}=\left(n_1, n_2, \ldots, n_M\right),
\\
&& 1\leq I_1<I_2<\cdots<I_K\leq M,
\\
&& \tau_{\mathbf{n}}^{[I, J]}=\left(m_{(M+1)(i-1)+I,(M+1) (j-1)+J}^{(\mathbf{n},I, J)}\right)_{1 \leq i \leq N_I, 1 \leq j \leq N_J}, \quad 1 \leq I, J \leq M,  \label{tau-entry of block matrix-theorem-2}
\end{eqnarray}
$n_1, n_2, \ldots, n_M $ are integers,  $ I_j, N_{I_j} \, (j=1,2,\cdots,K)$  are positive integers with $N_{I_1}+N_{I_2}+\cdots+N_{I_K} = N, N \geq K$ and $N_l = 0$ for $l \in \{1,2,\cdots,M\}  \backslash \{I_1,I_2,\cdots,I_K\}$, and the corresponding matrix elements of \eqref{tau-entry of block matrix-theorem-2} are defined by
\begin{eqnarray} \label{eqn: definition of mij}
    m_{i, j}^{(\mathbf{n},I, J)}=\sum_{\nu=0}^{\min (i, j)}\left[\frac{\left|p_{1}\right|^{2}}{\left(p_{0}+p_{0}^{*}\right)^{2}}\right]^{\nu} S_{i-\nu}\left(\mathbf{x}_I^{+}(\mathbf{n})+\nu \mathbf{s}\right) S_{j-\nu}\left(\mathbf{x}_J^{-}(\mathbf{n})+\nu \mathbf{s}^{*}\right).
\end{eqnarray}
The vectors $\mathbf{x}_{I}^{\pm}
=\left(x_{1,I}^{\pm}, x_{2,I}^{\pm}, \cdots\right)$, $I=1,2,\dots,M$,
and $\mathbf{s}=\left(s_{1}, s_{2}, \cdots\right)$
are given by
\begin{eqnarray}
&&x_{i,I}^{+}=\alpha_i x+\beta_i t+ \sum_{j=1}^{M}n_j \theta_{ij} + a_{i,I}, \\
&&x_{i,I}^{-}=\alpha_i^* x+\beta_i^* t-\sum_{j=1}^{M}n_j \theta_{ij}^* +a_{i,I}^*,\\
&&\ln \left[\frac{1}{\kappa}\left(\frac{p_{0}+p_{0}^{*}}{p_{1}}\right)\left(\frac{p(\kappa)-p_{0}}{p(\kappa)+p_{0}^{*}}\right)\right]=\sum_{r=1}^{\infty} s_{r} \kappa^{r}, \label{sr}
\end{eqnarray}
where the asterisk `$*$' represents complex conjugation, $p_0=p(0),~p_1=p^{\prime}(0)$, the $a_{i,I}$'s are arbitrary constants, and $\alpha_i,~\beta_i,~\theta_{ij},~j = 1, 2, \dots, M$, are defined by the expansions
$$
p(\kappa)-p_{0}=\sum_{r=1}^{\infty} \alpha_{r} \kappa^{r}, \quad p^{3}(\kappa)-p_{0}^{3}=\sum_{r=1}^{\infty} \beta_{r} \kappa^{r},\quad\ln \left[\frac{p(\kappa)-\mathrm{i} k_{j}}{p_{0}-\mathrm{i} k_{j}}\right]=\sum_{r=1}^{\infty} \theta_{rj} \kappa^{r}.
$$
\end{theorem}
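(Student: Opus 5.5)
The proof follows the standard KP-hierarchy reduction, as in the NLS case (Theorem~\ref{RW solutions of vector NLS}), adapted to the third-order Hirota flow. First, I introduce the Gram-type $\tau$-function
\[
m_{ij}^{(\mathbf n,I,J)}=\mathcal A_i^{(I)}\mathcal B_j^{(J)}\frac{1}{p+\bar p}\left(-\frac{p-\mathrm i k_1}{\bar p+\mathrm i k_1}\right)^{n_1}\cdots\left(-\frac{p-\mathrm i k_M}{\bar p+\mathrm i k_M}\right)^{n_M}e^{\xi+\bar\xi}\Big|_{p=p(\kappa),\,\bar p=\bar p(\bar\kappa)}.
\]
Here $\xi=px+p^3 t+\cdots$, and $\mathcal A_i,\mathcal B_j$ are differential operators in $\kappa$ and $\bar\kappa$ of the form $\frac{1}{i!}[f_1(\kappa)\partial_p]^i$. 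By the Gram determinant identities, $\tau_{\mathbf n}=\det(m_{ij})$ satisfies the bilinear equations of the multi-component KP hierarchy. These include the equations involving the $x_1$, $x_2$, $x_3$ flows and the shifts in $n_j$, i.e.\ the bilinear form of the vector Hirota equation with a Galilean-shifted variable. This is why the formula uses $x-3\sum c_i\rho_i^2 t$.

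Second, I perform the dimension reduction. Rogue waves require the $\tau$-function to be independent of the auxiliary KP times. This happens only if the combination $\partial_{x_1}$ plus a suitable sum of $n_j$-shifts annihilates $\tau$, up to a gauge factor. In the spectral variable this becomes the requirement that $\mathcal F_M(p)$ be fixed under the operator substitution. The choice \eqref{eqn:definition of p(kappa)} handles exactly this. The right-hand side is the $(M+1)$-fold cyclic average of $e^{\kappa}$, so it contains only powers $\kappa^{(M+1)m}$. Hence $[\mathcal F_M(p(\kappa))-\mathcal F_M(p_0)]$ is invariant under $\kappa\mapsto\omega\kappa$ with $\omega^{M+1}=1$.

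The multiplicity-$M$ root $p_0$ of $\mathcal F_M'$, given by Lemma~\ref{lemma:multiple roots} together with \eqref{eqn:constraints of multiple zeros}, is what allows a solution $p(\kappa)$ analytic at $0$ with $p_1\ne0$. It follows that only the indices $i\equiv I \pmod{M+1}$ survive in each block. The nonzero terms produce exactly the block structure \eqref{tau-entry of block matrix-theorem-2} with indices $(M+1)(i-1)+I$. I then expand the operator action through the generating functions for $\alpha_r,\beta_r,\theta_{rj},s_r$. This yields the Schur-polynomial entries \eqref{eqn: definition of mij}, by the same computation as in the NLS case with $p^2$ replaced by $p^3$.

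Third, I impose the complex-conjugation reduction. I set $\bar p=p^*$, $\bar\kappa=\kappa^*$, and $a$-parameters conjugate. With $c_j=-1$ real, this gives $\tau_{-\mathbf e_j}=\tau_{\mathbf e_j}^*$ and $\tau_{\mathbf 0}$ real. The bilinear equations then reduce to \eqref{eqn: vector-Hirota} via the substitution $v_j=\rho_j\frac{g_j}{f}e^{\mathrm i(k_jx-w_jt)}$. Finally, I check that $f>0$, so the solution is nonsingular. This follows because $f$ is a Gram determinant of a positive-definite Hermitian form once $\Re p_0>0$. The hypothesis $\Re p_0\Im p_0\ne0$ lets one select the root with positive real part.

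The main obstacle I expect is the dimension-reduction step for the Hirota flow. One must verify that the third-order time derivative and the $n_j$-shift relations combine into a reduction condition that uses only $\mathcal F_M$. This requires matching $\partial_{x_3}$ against $p^3$ and absorbing the $\sum c_i\rho_i^2$ terms into the Galilean shift of $x$ and the frequencies $w_j$. I would attempt it by writing $p^3-\mathrm i$-shifted terms as a polynomial in $\mathcal F_M(p)$ plus $n_j$-shift contributions.
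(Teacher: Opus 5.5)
The paper gives no proof of this theorem. It is quoted from the cited work, where, like Theorem~\ref{RW solutions of vector NLS}, it is obtained by bilinear KP reduction. Your skeleton follows that route:
\begin{itemize}
\item a Gram determinant with $\kappa$-derivative operators;
\item an $(M+1)$-fold reduction, coming from the fact that $\mathcal F_M(p(\kappa))$ contains only powers $\kappa^{(M+1)m}$;
\item a conjugation reduction;
\item positivity of $f$.
\end{itemize}
However, the one step where Hirota differs from NLS is not supplied, and you locate the difficulty in the wrong place.

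\textbf{The reduction does not involve $x_3$.} The reduction operator is $\partial_{x_1}-\sum_j c_j\rho_j^2\partial_{x_{-1}^{(j)}}$, where $x_{-1}^{(j)}$ are the negative flows carried by the terms $x_{-1}^{(j)}/(p-\mathrm{i}k_j)$ in $\xi$. These are not $n_j$-shifts; a discrete shift cannot enter such a condition. On the seed function the operator acts as multiplication by $p+\bar p-\sum_j c_j\rho_j^2\bigl[(p-\mathrm{i}k_j)^{-1}+(\bar p+\mathrm{i}k_j)^{-1}\bigr]$, which involves neither $x_2$ nor $x_3$. So the dimension reduction is word for word the NLS computation with $\mathcal G_M$ replaced by $\mathcal F_M$. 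Nothing needs to be matched between $\partial_{x_3}$ and $p^3$, and rewriting $p^3$-terms through $\mathcal F_M$ is not how the argument closes.

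A smaller correction here: the indices do not ``survive''. You choose rows $(M+1)(i-1)+I$, $1\le I\le M$, in block $I$, all sharing the parameters $a_{r,I}$. Each Leibniz term $m_{\mu-(M+1)m,\,\cdot}$ is then either another row of the same block or zero, so the extra determinants vanish.

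\textbf{The missing idea.} You assert that the Gram identities already ``include the bilinear form of the vector Hirota equation''; this is where the gap is. The third-flow identity $(D_{x_1}^3+3D_{x_1}D_{x_2}-4D_{x_3}+\cdots)\tau_{\mathbf n+\mathbf e_j}\cdot\tau_{\mathbf n}=0$ contains $D_{x_1}D_{x_2}$, and
\[
\frac{D_{x_1}D_{x_2}\,g\cdot f}{f^2}=\Bigl(\frac gf\Bigr)_{x_1x_2}+2\,\frac gf\,(\ln f)_{x_1x_2}.
\]
The first term is handled by differentiating the second-flow equation. The term $(\ln f)_{x_1x_2}$ is not supplied by the NLS-type equations; at best they fix it up to an $x_1$-independent function.

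The standard way to close this is an additional identity linking the $x_2$-flow to the negative flows. It is of the type $D_{x_2}D_{x_{-1}^{(j)}}\tau_{\mathbf n}\cdot\tau_{\mathbf n}\propto D_{x_1}\tau_{\mathbf n+\mathbf e_j}\cdot\tau_{\mathbf n-\mathbf e_j}$, up to $k_j$-dependent terms. Weight it by $c_j\rho_j^2$, sum over $j$, and apply the reduction; the gauge constant drops out, and $(\ln f)_{x_1x_2}$ becomes a combination of $D_{x_1}g_k\cdot g_k^*$. Inserted into the third-flow equation, this (together with $\partial_{x_1}$ of the reduced relation for $(\ln f)_{x_1x_1}$) produces the coupling term $-3v_j\sum_k c_kv_k^*v_{k,x}$. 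The constant part of $(\ln f)_{x_1x_1}$ produces the shift $x_1=x-3\sum_i c_i\rho_i^2t$ and, with the $k_j$-terms, the frequencies $w_j$. Only after all this can you set $x_2=0$. Without this step, you have not verified that the formula solves \eqref{eqn: vector-Hirota}.

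\textbf{Smaller points.}
\begin{itemize}
\item Both roots in \eqref{root of multiplicity M} have $\Re p_0=\pm\lambda_1\sin[\pi/(M+1)]\neq0$.
\item $f>0$ holds for either root. For $\mathbf n=\mathbf 0$, the expansion \eqref{eq:Cauchy-Binet} is a sum of terms $\lvert D^+_{\boldsymbol\nu}\rvert^2$, and the term with $\boldsymbol\nu$ equal to the sorted $\boldsymbol\mu$ equals $h_0^{2\sum_i\mu_i}>0$. Restricting to $\Re p_0>0$ therefore drops half the statement for no reason.
\item The real role of $\Re p_0\Im p_0\neq0$ is that $\Im q=2\Re p_0\Im p_0\neq0$, where $q=p_0^2-\sum_j c_j\rho_j^2$. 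This makes $\eta_1=p_1(x+3qt)$ a nondegenerate map $\mathbb R^2\to\mathbb C$, so the rational solution is localized in both $x$ and $t$, i.e.\ it is a rogue wave. Your outline never checks this.
\item $\tau_{-\mathbf e_j}=\tau_{\mathbf e_j}^*$ follows from $\bar p=p^*$ alone. The sign $c_j=-1$ is simply forced by \eqref{eqn:constraints of multiple zeros} together with $\rho_j^2>0$.
\end{itemize}
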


\subsection{Rogue wave solutions associated with special indices}
\label{subsec:index-vectors-special-polynomials}

In suitable large-parameter regimes, the special polynomials governing rogue wave patterns are determined by the row indices in the underlying Schur-polynomial determinant. An index jump of two gives rise to the Yablonskii--Vorob'ev polynomial hierarchy, whereas an index jump of three leads to the Okamoto polynomial hierarchies \cite{yang2021rogue,yang2021universal,yang2023rogue}. More general index choices can produce broader families of generalized Wronskian--Hermite polynomials \cite{zhang2022rogue,wu2026zerosgeneralizedwronskianhermitepolynomials}.

For either determinant solution in the preceding subsections, the
\((I,J)\) block is
\[
\tau_{\mathbf n}^{[I,J]}
=
\left(
m_{(M+1)(i-1)+I,\,(M+1)(j-1)+J}^{(\mathbf n,I,J)}
\right)_{\substack{1\leq i\leq N_I, \, 1\leq j\leq N_J}}.
\]
Therefore, as the local row number \(i\) runs from
\(1\) to \(N_I\), the row indices contributed by the \(I\)-th block are
\[
I,\ M+1+I,\ \ldots,\ (M+1)(N_I-1)+I.
\]
Let \(I_1<I_2<\cdots<I_K\) be the active block indices. Reading the
row blocks from top to bottom, and the rows within each block in their
natural order, gives the index vector
\[
\boldsymbol{\mu}
=
\bigl(
I_1,M+1+I_1,\ldots,(M+1)(N_{I_1}-1)+I_1;
\ldots;
I_K,M+1+I_K,\ldots,(M+1)(N_{I_K}-1)+I_K
\bigr).
\]
The column blocks produce the same ordered indices. The length of \(\boldsymbol{\mu}\) is
\[
N=N_{I_1}+N_{I_2}+\cdots+N_{I_K}.
\]
This is the index vector used in
\(P_{\boldsymbol{\mu}}\) defined by \eqref{eq:Pmu}.

We first consider consecutive index vectors. Fix a positive integer \(M\). For each pair \((d,n)\) satisfying
\[
d\geq1,\qquad n\geq2,\qquad d+n-1\leq M,
\]
we choose
\begin{equation}
\label{N choose}
\mathcal N=(N_1,N_2,\ldots,N_M),\qquad
N_d=N_{d+1}=\cdots=N_{d+n-1}=1,
\qquad
N_i=0\quad\text{for all other }i.
\end{equation}
For this choice,
\[
\boldsymbol{\mu}=(d,d+1,\ldots,d+n-1),
\]
and hence
\[
P_{\boldsymbol{\mu}}(z)=H_{d,n}(z).
\]
Moreover, the \(\tau\)-functions in \eqref{tau-block matrix-theorem-1} and \eqref{tau-block matrix-theorem-2} reduce to
\begin{equation}
\tau_{\mathbf n}
=
\det
\begin{pmatrix}
m_{d,d} & m_{d,d+1} & \cdots & m_{d,d+n-1}\\
m_{d+1,d} & m_{d+1,d+1} & \cdots & m_{d+1,d+n-1}\\
\vdots & \vdots & \ddots & \vdots\\
m_{d+n-1,d} & m_{d+n-1,d+1} & \cdots & m_{d+n-1,d+n-1}
\end{pmatrix}_{n\times n},
\end{equation}
where \(m_{i,j}\) is defined in the same way as \(m_{i,j}^{(\mathbf n,I,J)}\). We call the resulting solutions consecutive-index rogue wave solutions.

For fixed \(M\), each admissible pair \((d,n)\) specifies a consecutive-index solution family associated with \(H_{d,n}(z)\), and
there are \(M(M-1)/2\) such families. Conversely, every generalized Hermite polynomial \(H_{d,n}(z)\) with \(d\geq1\) and \(n\geq2\) can be realized by this construction. Its smallest admissible number of components is \(M_0=d+n-1\). For every \(M\geq M_0\), the choice \eqref{N choose}, with all additional entries set to zero, yields a solution family associated with \(H_{d,n}(z)\). Thus, \(H_{d,n}(z)\) first occurs when \(M=M_0\) and continues to occur for every \(M>M_0\).

The second specialization used below occurs when \(M=2\). For \(\mathcal N=(N_1,N_2)\), the associated index vector is
\[
\boldsymbol{\mu}
=
\bigl(
1,4,\ldots,3N_1-2;
2,5,\ldots,3N_2-1
\bigr),
\]
where the first or second segment is omitted when \(N_1=0\) or \(N_2=0\), respectively. Consequently,
\[
P_{\boldsymbol{\mu}}(z)=Q_{N_1,N_2}(z).
\]
Thus, the generalized Hermite and generalized Okamoto polynomials used later arise from two different choices of the same determinant index vector.

\section{Rogue wave patterns associated with generalized Hermite and generalized Okamoto polynomials}
\label{sec:patterns}

\subsection{Notations}
\label{Notations}
For every active block, take a common second internal parameter and write
\begin{equation}
\label{eq:large-parameter}
a_{2,I}=a_2=B^2,\qquad
a_{r,I}=0\quad(r\ne2),\qquad
\abs{B}\to\infty.
\end{equation}
A branch of \(B=\sqrt{a_2}\) is fixed throughout. For the multi-component NLS equation, put
\begin{equation}
\label{eq:xi}
 \xi_r(x,t)=\alpha_rx+\i \beta_rt,
 \qquad
 \xi_1(x,t)=p_1(x+2\i  p_0t).
\end{equation}
For the multi-component Hirota equation, put
\begin{equation}
\label{eq:eta}
\eta_r(x,t)=\alpha_rx+\gamma_rt,
\qquad
\gamma_r=\beta_r-3\alpha_r\sum_{j=1}^{M}c_j\rho_j^2.
\end{equation}
In particular,
\begin{equation}
\label{eq:eta1}
\eta_1(x,t)=p_1(x+3qt),
\qquad
q=p_0^2-\sum_{j=1}^{M}c_j\rho_j^2.
\end{equation}
The definitions of \(\gamma_r\) and \(\eta_r\) absorb the coordinate
shift \(x\mapsto x-3\sum_{j=1}^{M}c_j\rho_j^2t\) appearing in the rogue wave solution of the multi-component Hirota equation. Thus, throughout the remainder of this
paper, the tau-function is written simply as
\(\tau_{\bm n}(x,t)\).

For both equations, we also set 
\[
h_0=\frac{\abs{p_1}}{p_0+p_0^*}.
\]

The conditions $\Re(p_0)\ne0$ for the multi-component NLS equation and $\Re(p_0)\Im(p_0)\ne0$ for the Hirota equation ensure that, for every $w\in\mathbb C$, the equations $\xi_1(x,t)=w$ and $\eta_1(x,t)=w$, respectively, have unique solutions $(x,t)\in\mathbb R^2$.

Let \(P_{\boldsymbol{\mu}}\) be defined by \eqref{eq:Pmu}. For \(\ell=1,\ldots,N\), let \(P_{\boldsymbol\mu}^{[\ell]}(z)\) be the determinant obtained from \eqref{eq:Pmu} by replacing every \(p_k(z)\) in its \(\ell\)-th column with \(p_{k-2}(z)\), while keeping the same normalization factor \(C_{\boldsymbol{\mu}}\).

\begin{lemma}
\label{lem:Euler-correction}
Let \(\Lambda=\deg P_{\boldsymbol\mu}(z)\). Then
\begin{equation}
\label{eq:Euler-correction}
2\sum_{\ell=1}^N P_{\boldsymbol\mu}^{[\ell]}(z)=\Lambda P_{\boldsymbol\mu}(z)-zP_{\boldsymbol\mu}'(z).
\end{equation}
Consequently, at any simple root \(z_0\) of \(P_{\boldsymbol\mu}\),
\begin{equation}
\label{eq:root-correction-ratio}
\frac{\displaystyle\sum_{\ell=1}^N P_{\boldsymbol\mu}^{[\ell]}(z_0)}{P_{\boldsymbol\mu}'(z_0)}=-\frac{z_0}{2}.
\end{equation}
\end{lemma}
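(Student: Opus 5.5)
The approach is to exploit a weighted homogeneity of the $p_k$. From the generating function \eqref{eq:generating-function-pk}, set $\epsilon\mapsto\lambda\epsilon$, $z\mapsto z/\lambda$, and introduce a separate weight on the coefficient of $\epsilon^2$. Concretely, put
\[
\exp(z\epsilon+a\epsilon^2)=\sum_k p_k(z;a)\epsilon^k .
\]
Then $p_k(z;a)$ is weighted homogeneous of degree $k$ when $z$ has weight $1$ and $a$ has weight $2$. Euler's relation gives
\[
z\,\partial_z p_k+2a\,\partial_a p_k=k\,p_k .
\]
From the generating function we also have $\partial_a p_k=p_{k-2}$ and $\partial_z p_k=p_{k-1}$. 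Setting $a=1$ yields the identity
\[
2p_{k-2}(z)=k\,p_k(z)-z\,p_k'(z).
\]

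Next I lift this to the determinant. The unnormalized determinant $D(z;a)=\det[p_{\mu_i-j+1}(z;a)]$ is weighted homogeneous of degree $\sum_i\mu_i-N(N-1)/2=\Lambda$, by \eqref{eq:Pmu-degree}. This holds because each term of the Leibniz expansion has degree $\sum_i(\mu_i-\sigma(i)+1)$. Euler then gives
\[
2a\,\partial_aD=\Lambda D-z\,\partial_zD .
\]
To compute $\partial_aD$, I differentiate column by column. In column $\ell$ every entry $p_k$ becomes $\partial_a p_k=p_{k-2}$. Hence $\partial_a D=\sum_\ell D^{[\ell]}$, where $D^{[\ell]}$ is exactly the determinant defining $P^{[\ell]}_{\boldsymbol\mu}$ before normalization. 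Setting $a=1$ and multiplying by $C_{\boldsymbol\mu}$ gives \eqref{eq:Euler-correction}.

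The only care needed is the convention $p_k\equiv0$ for $k<0$. This is consistent with both identities, since $\partial_a p_0=0=p_{-2}$ and $\partial_a p_1=0=p_{-1}$, so entries with small or negative indices cause no trouble. Alternatively, one can avoid the parameter $a$ altogether. Apply the scalar identity entrywise in column $\ell$ and sum over $\ell$ by multilinearity. This produces $\sum_\ell(\text{column-}\ell\text{ scaled by }k)$ minus $z\,\partial_z D$. The first sum is $\Lambda D$: the row-index weights contribute $\sum\mu_i$ and the column offsets contribute $-\sum_j(j-1)$, since the diagonal-scaling operator $\sum_\ell \mathrm{diag}$ acting on the entries $k=\mu_i-j+1$ splits as a row part plus a column part. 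I would present this second version as the main argument, since it is the step requiring the most bookkeeping, and use homogeneity as a check.

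Finally, at a simple root $z_0$ we have $P_{\boldsymbol\mu}(z_0)=0$ and $P'_{\boldsymbol\mu}(z_0)\neq0$. Evaluating \eqref{eq:Euler-correction} there gives
\[
2\sum_\ell P^{[\ell]}_{\boldsymbol\mu}(z_0)=-z_0P'_{\boldsymbol\mu}(z_0),
\]
and dividing by $2P'_{\boldsymbol\mu}(z_0)$ yields \eqref{eq:root-correction-ratio}.
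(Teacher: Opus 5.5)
Your proposal is correct. The homogeneity argument you give first is the paper's proof: the paper introduces $p_k(z,y)$ via $\exp(z\epsilon+y\epsilon^2)$ and notes $P_{\boldsymbol\mu}(\lambda z,\lambda^2y)=\lambda^\Lambda P_{\boldsymbol\mu}(z,y)$. It then applies Euler's relation, identifies $\partial_yP_{\boldsymbol\mu}(z,1)=\sum_\ell P^{[\ell]}_{\boldsymbol\mu}(z)$ through $\partial_yp_k=p_{k-2}$, and sets $y=1$.

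The entrywise version you want to make primary is a more hands-on route. It needs only the three-term relation $2p_{k-2}=kp_k-zp_k'$ together with multilinearity in columns. You can obtain that relation without any auxiliary parameter: differentiating $\exp(z\epsilon+\epsilon^2)$ in $\epsilon$ gives $kp_k=zp_{k-1}+2p_{k-2}$, valid for all integers $k$ under the convention $p_k\equiv0$ for $k<0$.

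The one step you state loosely is the ``row part''. Write column $\ell$ of the rescaled matrix as $A\,m_\ell-(\ell-1)m_\ell$ with $A=\mathrm{diag}(\mu_1,\ldots,\mu_N)$. You then need $\sum_\ell\det(m_1,\ldots,Am_\ell,\ldots,m_N)=\operatorname{tr}(A)\det(m_1,\ldots,m_N)$. This follows from $\frac{d}{dt}\det\bigl((I+tA)M\bigr)\big|_{t=0}$, or from the Leibniz expansion, where each permutation term picks up the factor $\sum_\ell\mu_{\sigma^{-1}(\ell)}=\sum_i\mu_i$.

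With that made explicit, the two routes compare as follows. The entrywise route works entirely at $y=1$ and shows directly why the coefficient is $\sum_i\mu_i-N(N-1)/2$: row weights minus column shifts. The paper's route is shorter because the weighted homogeneity of the determinant packages that bookkeeping into a single degree count. Both are complete.
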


\begin{proof}
Introduce \(p_k(z,y)\) by
\begin{equation}
\sum_{k\geq0}p_k(z,y)\epsilon^k=\exp\left(z\epsilon+y\epsilon^2\right),
\end{equation}
with \(p_k(z,y)\equiv0\) for \(k<0\), and, using \(C_{\boldsymbol\mu}\) given by \eqref{eq:Wronskian-normalization}, define
\begin{equation}
P_{\boldsymbol\mu}(z,y)=C_{\boldsymbol\mu}\det\left(p_{\mu_i-j+1}(z,y)\right)_{i,j=1}^N.
\end{equation}
It follows from this definition and \eqref{eq:Pmu-degree} that
\begin{equation}
P_{\boldsymbol\mu}(\lambda z,\lambda^2y)=\lambda^\Lambda P_{\boldsymbol\mu}(z,y).
\end{equation}
Euler's identity gives
\begin{equation}
z\partial_zP_{\boldsymbol\mu}(z,y)+2y\partial_yP_{\boldsymbol\mu}(z,y)=\Lambda P_{\boldsymbol\mu}(z,y).
\end{equation}
Since \(\partial_y p_k=p_{k-2}\), we have
\begin{equation}
\partial_yP_{\boldsymbol\mu}(z,1)=\sum_{\ell=1}^N P_{\boldsymbol\mu}^{[\ell]}(z).
\end{equation}
Setting \(y=1\) proves \eqref{eq:Euler-correction}, and evaluating it at a simple root proves \eqref{eq:root-correction-ratio}.
\end{proof}

For a root $z_0$ of $P_{\boldsymbol\mu}$, let $(\widetilde x_0,\widetilde t_0) \in \R^2$ be the point satisfying
\begin{equation}
 \xi_1(\widetilde x_0,\widetilde t_0)=Bz_0
\end{equation}
for the multi-component NLS equation, and define
\begin{equation}
\label{eq:Delta}
 \Delta_{P_{\boldsymbol\mu}}(z_0;B)=\frac{\xi_2(\widetilde x_0,\widetilde t_0)}{B}
 \frac{\displaystyle\sum_{\ell=1}^{N}P_{\boldsymbol\mu}^{[\ell]}(z_0)}{P_{\boldsymbol\mu}'(z_0)}=-\frac{z_0\xi_2(\widetilde x_0,\widetilde t_0)}{2B}.
\end{equation}
For each root $z_0$ of $P_{\boldsymbol\mu}$, the predicted center of the NLS rogue wave is determined by
\begin{equation}
\label{eq:NLS-center-implicit}
 \xi_1(\widehat x_0,\widehat t_0)=Bz_0-\Delta_{P_{\boldsymbol\mu}}(z_0;B),
\end{equation}
which yields
\begin{equation}
\label{eq:NLS-center}
\widehat x_0=\frac1{\Re p_0}\Re\left[\frac{p_0^*}{p_1}\bigl(Bz_0-\Delta_{P_{\boldsymbol\mu}}(z_0;B)\bigr)\right], \quad
\widehat t_0=\frac1{2\Re p_0}\Im\left[\frac{Bz_0-\Delta_{P_{\boldsymbol\mu}}(z_0;B)}{p_1}\right].
\end{equation}

For the Hirota equation, define $(\widetilde x_0,\widetilde t_0)$ by
$\eta_1(\widetilde x_0,\widetilde t_0)=Bz_0$ and set
\begin{equation}
\label{eq:Gamma}
 \Gamma_{P_{\boldsymbol\mu}}(z_0;B)=\frac{\eta_2(\widetilde x_0,\widetilde t_0)}{B}
 \frac{\displaystyle\sum_{\ell=1}^{N}P_{\boldsymbol\mu}^{[\ell]}(z_0)}{P_{\boldsymbol\mu}'(z_0)}=-\frac{z_0\eta_2(\widetilde x_0,\widetilde t_0)}{2B}.
\end{equation}
For each root $z_0$ of $P_{\boldsymbol\mu}$, the predicted center of the Hirota rogue wave satisfies
\begin{equation}
\label{eq:Hirota-center-implicit}
 \eta_1(\widehat x_0,\widehat t_0)=Bz_0-\Gamma_{P_{\boldsymbol\mu}}(z_0;B),
\end{equation}
and hence
\begin{equation}
\label{eq:Hirota-center}
\widehat x_0=\frac1{\Im q}\Im\left[q\left(\frac{Bz_0-\Gamma_{P_{\boldsymbol\mu}}(z_0;B)}{p_1}\right)^*\right], \quad
\widehat t_0=\frac1{3\Im q}\Im\left[\frac{Bz_0-\Gamma_{P_{\boldsymbol\mu}}(z_0;B)}{p_1}\right].
\end{equation}
For fixed $z_0$, the defining equations for $(\widetilde x_0,\widetilde t_0)$ imply that $(\widetilde x_0,\widetilde t_0)=O(\abs{B})$. Since $\xi_2$ and $\eta_2$ are linear in $(x,t)$, it follows from \eqref{eq:Delta} and \eqref{eq:Gamma} that $\Delta_{P_{\boldsymbol\mu}}(z_0;B)=O(1)$ and $\Gamma_{P_{\boldsymbol\mu}}(z_0;B)=O(1)$ as $\abs{B}\to\infty$. If $z_0=0$, then $(\widetilde x_0,\widetilde t_0)=(0,0)$ and hence $\Delta_{P_{\boldsymbol\mu}}(0;B)=\Gamma_{P_{\boldsymbol\mu}}(0;B)=0$. Consequently, \eqref{eq:NLS-center-implicit} and \eqref{eq:Hirota-center-implicit} give $(\widehat x_0,\widehat t_0)=(0,0)$.

\subsection{Rogue wave patterns associated with generalized Hermite and generalized Okamoto polynomials}

To describe rogue wave patterns, we first define
\begin{equation}
\label{eq:fundamental-profile}
 \mathcal P_j(\zeta)=
 \frac{(\zeta+\theta_{1j})(\zeta^*-\theta_{1j}^*)+h_0^2}
 {\abs{\zeta}^2+h_0^2}.
\end{equation}
The normalized fundamental profiles are
\begin{equation}
\label{eq:fundamental-profiles}
 \widehat u_j(x,t)=\mathcal P_j(\xi_1(x,t)),
 \qquad
 \widehat v_j(x,t)=\mathcal P_j(\eta_1(x,t)).
\end{equation}
They are the lowest-order members of the determinant families in
\eqref{eq:NLS-solution} and \eqref{eq:Hirota-solution}, respectively.

\begin{theorem}
\label{thm:common-splitting}
Consider either of the following two index choices.

\begin{enumerate}
\item (Generalized Hermite polynomials) Let \(d\geq1\), \(n\geq2\), and \(M\geq d+n-1\), and take 
\[
\mathcal N=(N_1,N_2,\ldots,N_M),\qquad
N_d=N_{d+1}=\cdots=N_{d+n-1}=1,
\qquad
N_i=0\quad\text{for all other }i.
\]
For this choice, 
\[
\boldsymbol{\mu}=(d,d+1,\ldots,d+n-1),
\qquad
P_{\boldsymbol{\mu}}=H_{d,n},
\qquad
\Lambda=dn.
\]

\item (Generalized Okamoto polynomials) Let \(M=2\) , and take
\[
\mathcal N=(N_1,N_2),\qquad N_1+N_2\geq2.
\]
For this choice,
\[
\boldsymbol{\mu}
=
\bigl(1,4,\ldots,3N_1-2;
      2,5,\ldots,3N_2-1\bigr),
\qquad
P_{\boldsymbol{\mu}}=Q_{N_1,N_2},
\qquad
\Lambda=d_{N_1,N_2},
\]
where  the first or second segment is omitted when \(N_1=0\) or \(N_2=0\), respectively; \(d_{N_1,N_2}\) is given by
\eqref{eq:degree-generalized-Okamoto}.
\end{enumerate}

Let \(z_1,\ldots,z_{\Lambda}\) be the roots of \(P_{\boldsymbol\mu}\). For the
multi-component NLS equation, let \(\tau_{\bm n}(x,t)\) denote the
determinant defined in \eqref{tau-block matrix-theorem-1}, set
\(\chi_r=\xi_r\), and define the predicted centers by
\eqref{eq:NLS-center}. For the multi-component Hirota equation, let
\(\tau_{\bm n}(x,t)\) denote the determinant defined in
\eqref{tau-block matrix-theorem-2}, set \(\chi_r=\eta_r\), and define
the predicted centers by \eqref{eq:Hirota-center}. For both equations,
we impose the common large-parameter choice
\eqref{eq:large-parameter}.

For each \(\alpha=1,\ldots,\Lambda\), when $\sqrt{(x-\widehat x_\alpha)^2+(t-\widehat t_\alpha)^2}=O(1)$, we have
\begin{equation}
\label{eq:tau-ratio-local}
\frac{\tau_{\bm e_j}(x,t)}{\tau_{\bm0}(x,t)}
=
\mathcal P_j\!\left(
\chi_1(x-\widehat x_\alpha,t-\widehat t_\alpha)
\right)
+O(\abs{B}^{-1}),
\qquad j=1,\ldots,M,\quad \text{as} \ \abs{B}\to\infty.
\end{equation}
Moreover, when $\sqrt{x^2+t^2}=O(|B|)$ or $\sqrt{x^2+t^2}=O(1)$, if $(x,t)$ lies outside the $O(1)$ neighborhoods of all the predicted centers $(\widehat x_\alpha,\widehat t_\alpha)$, then
\begin{equation}
\label{eq:tau-ratio-background}
\frac{\tau_{\bm e_j}(x,t)}{\tau_{\bm 0}(x,t)}
=
1+o(1),
\qquad j=1,\ldots,M.
\end{equation}
\end{theorem}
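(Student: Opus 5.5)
We follow the Yang--Yang large-parameter strategy and carry out the analysis uniformly for both equations, writing \(\chi_r\) for \(\xi_r\) or \(\eta_r\).

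\textbf{Step 1: Cauchy--Binet expansion.} We first write \(\tau_{\bm n}\) as a \(2N\times 2N\) (or \(3N\times3N\)) determinant of the form
\[
\tau_{\bm n}=\sum_{0\le\nu_1<\cdots<\nu_N}\det_{i,j}\bigl[h_0^{\nu_j}S_{\mu_i-\nu_j}(\bm x^+ +\nu_j\bm s)\bigr]\det_{i,j}\bigl[h_0^{\nu_j}S_{\mu_i-\nu_j}(\bm x^- +\nu_j\bm s^*)\bigr].
\]
This is the standard Laplace/Cauchy--Binet identity for the entries \(m_{i,j}\), valid because all blocks share the same \(\bm s\).

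\textbf{Step 2: scaling of the Schur polynomials.} With \(a_{2}=B^2\) and the other \(a_r=0\), we have
\[
S_k(\bm x^++\nu\bm s)=\sum_j S_j(\chi_1+\nu s_1,\ldots)\,B^{k-j}p_{k-j}\text{-type terms}.
\]
More precisely, the generating function \(\exp(\chi_1\epsilon+(\chi_2+B^2+\nu s_2)\epsilon^2+\cdots)\) gives
\[
S_k=B^k\Bigl[p_k(z)+B^{-1}\bigl(\chi_2\,p_{k-2}(z)+\ldots\bigr)+O(B^{-2})\Bigr],\qquad z=\chi_1/B.
\]
Here the \(\chi_2\) correction is of relative order \(B^{-1}\), since \(\chi_2=O(B)\) in the region \(|(x,t)|=O(|B|)\). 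The additional terms \(\chi_r\) for \(r\ge3\) and \(\nu s_r\) enter at relative orders \(B^{-2}\) or lower, apart from \(\nu s_1\), which shifts \(\chi_1\) by \(O(1)\).

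\textbf{Step 3: the regime away from the centers.} When \(\chi_1/B\) is away from all roots of \(P_{\bm\mu}\), the dominant term in the sum is \(\nu=(0,1,\ldots,N-1)\). This term gives
\[
\tau_{\bm n}\sim |C_{\bm\mu}|^{-2}h_0^{N(N-1)}|B|^{2\Lambda}\,|P_{\bm\mu}(z)|^2\bigl(1+O(B^{-1})\bigr).
\]
The \(\bm n\)-dependence enters only through the \(\theta_{1j}\) shift in \(\chi_1\), which is of relative order \(O(B^{-1})\). Hence \(\tau_{\bm e_j}/\tau_{\bm0}=1+O(B^{-1})\). For \(|(x,t)|=O(1)\), \(z\) is near \(0\). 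If \(P_{\bm\mu}(0)\neq0\), the same argument applies. If \(0\) is a root, the point is covered by Step 4 with \(\widehat x=\widehat t=0\).

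\textbf{Step 4: the local regime near a center, and the main obstacle.} Near a root \(z_\alpha\), the leading term \(P_{\bm\mu}(z)\) vanishes. Simplicity of the root (Duran; Noumi--Yamada; Roffelsen--Stokes) is essential here. We expand
\[
\chi_1=Bz_\alpha+\hat\chi,\qquad \hat\chi=O(1).
\]
In the top index set \(\nu=(0,\ldots,N-1)\), the \(O(B^{\Lambda-1})\) coefficient becomes
\[
P'_{\bm\mu}(z_\alpha)\bigl(\hat\chi+\nu s_1\text{ terms}+\theta\text{-shift}\bigr)+\frac{\chi_2}{B}\sum_\ell P^{[\ell]}_{\bm\mu}(z_\alpha).
\]
By Lemma~\ref{lem:Euler-correction}, the \(\chi_2\) contribution equals \(P'_{\bm\mu}(z_\alpha)\Delta\) (or \(\Gamma\)). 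This is exactly why the predicted center is shifted by \(\Delta_{P_{\bm\mu}}\).

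We must also retain the competing index set \(\nu=(0,1,\ldots,N-2,N)\). It contributes at the same order \(|B|^{2\Lambda-2}\), with an extra factor \(h_0^2\) and the same derivative \(P'_{\bm\mu}(z_\alpha)\). This is because replacing \(\nu_N=N-1\) by \(N\) lowers one column index, which is the \(p_{k-1}\) column, i.e.\ the derivative.

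Collecting both contributions yields
\[
\tau_{\bm n}\approx c\,|B|^{2\Lambda-2}|P'_{\bm\mu}(z_\alpha)|^2\Bigl[(\hat\chi+\theta_{1j}\text{-shift})(\hat\chi^*-\theta_{1j}^*)+h_0^2\Bigr],
\]
where \(\hat\chi=\chi_1(x-\widehat x_\alpha,t-\widehat t_\alpha)\). Taking the ratio gives \(\mathcal P_j\) with an \(O(B^{-1})\) error.

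The main obstacle is the bookkeeping in this step. We must verify that all other index sets \(\nu\), and the cross terms involving \(\chi_{r\ge3}\) and \(s_{r\ge2}\), are \(O(|B|^{2\Lambda-3})\). We must also check that the \(\nu s_1\) shifts cancel between the two factors. We would handle this as Yang--Yang do for the Yablonskii--Vorob'ev case: reduce the \(\bm x^\pm\) vectors by subtracting linear combinations of columns so that \(s_1\) is eliminated. This is legitimate because \(s_1=0\) follows from the normalization in \eqref{sr}.
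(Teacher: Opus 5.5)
Your plan follows the paper's proof step for step. The shared steps are the Cauchy--Binet expansion \eqref{eq:Cauchy-Binet}, the scaling $S_k=B^k[p_k(z)+O(\abs B^{-1})]$, and dominance of $\boldsymbol\nu^{(0)}$ away from the roots. Near a simple root, both arguments use the pair $\boldsymbol\nu^{(0)},\boldsymbol\nu^{(1)}$, and the $\chi_2$ correction becomes the center shift $\Delta$ (or $\Gamma$) through Lemma~\ref{lem:Euler-correction}. Both also split the $O(1)$ region according to whether $0$ is a root. Your reading of the $\boldsymbol\nu^{(1)}$ minor as the derivative of the Wronskian is exactly what gives \eqref{eq:D1-local}. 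One small slip: in Step 2 the $\chi_2$ term should read $B^k\cdot B^{-2}\chi_2\,p_{k-2}(z)$. With $B^{-1}$ it would not be of relative order $\abs B^{-1}$, as you yourself then state.

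The one load-bearing point you get wrong is $s_1=0$. The normalization in \eqref{sr} only removes the constant term of the logarithm. Expanding gives $s_1=\frac{p''(0)}{2p_1}-\frac{p_1}{p_0+p_0^*}$, which is not zero for an arbitrary parametrization. For example, $p=\lambda_1(1+\kappa)$ with $\lambda_1>0$ satisfies the same normalization but gives $s_1=-\tfrac12$. The vanishing comes from the specific definition of $p(\kappa)$: the right-hand side is a function of $\kappa^{M+1}$, imposed at an $M$-fold critical point $p_0$ of $\mathcal G_M$ (resp.\ $\mathcal F_M$). Combined with the structure of these rational functions, this forces the needed relation between $p''(0)$, $p_1$ and $\Re p_0$. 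The paper imports this from prior work as Lemma~\ref{lem:s1-zero}.

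Your fallback is also incorrect. If $s_1\neq0$, only the last column of the $\boldsymbol\nu^{(0)}$ minor would contribute at first order. Then $D^+_{\boldsymbol\nu^{(0)}}$ would be proportional to $\chi_1(x-\widehat x_\alpha,t-\widehat t_\alpha)+(N-1)s_1+\sum_j n_j\theta_{1j}$. Likewise $D^-_{\boldsymbol\nu^{(0)}}$ would be proportional to the conjugate expression with $+(N-1)s_1^*$ and $-\sum_j n_j\theta_{1j}^*$. These shifts do not cancel between the two factors. They translate each fundamental rogue wave by an $O(1)$ amount, so the centers \eqref{eq:NLS-center} and \eqref{eq:Hirota-center} would be wrong. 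No column operation is needed or helps here. Replace that remark by an appeal to (or a proof of) Lemma~\ref{lem:s1-zero}; with that, your argument is complete at the level of the paper's.
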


\begin{corollary}[NLS rogue wave patterns]
\label{cor:NLS-PIV-patterns}
Under either index choice in Theorem~\ref{thm:common-splitting}, consider rogue wave solutions of the multi-component NLS equation. For each root \(z_\alpha\) of \(P_{\boldsymbol\mu}\), let \((\widehat x_\alpha,\widehat t_\alpha)\) be the predicted center given by \eqref{eq:NLS-center}. Then, as \(\abs{B}\to\infty\), the $\mathcal N$-th order rogue wave solution splits into $\Lambda$ fundamental rogue waves. More precisely, in an \(O(1)\)-neighborhood of each predicted center,
\begin{equation}
\label{eq:NLS-local-asymptotic}
u_{j,\mathcal N}(x,t)
=
\rho_j\widehat u_j(x-\widehat x_\alpha,t-\widehat t_\alpha)
e^{\i(k_jx+w_jt)}
+O(\abs{B}^{-1}),
\qquad j=1,\ldots,M.
\end{equation}
When $\sqrt{x^2+t^2}=O(|B|)$ or $\sqrt{x^2+t^2}=O(1)$, the solution $u_{j,\mathcal N}(x,t)$ asymptotically approaches the background outside the $O(1)$ neighborhoods of the predicted centers.
\end{corollary}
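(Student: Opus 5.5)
The corollary will follow from Theorem~\ref{thm:common-splitting} once it is inserted into the solution formula \eqref{eq:NLS-solution}. By Theorem~\ref{RW solutions of vector NLS}, $u_{j,\mathcal N}=\rho_j\,(g_{j,\mathcal N}/f_{\mathcal N})\,e^{\i(k_jx+w_jt)}$ with $f_{\mathcal N}=\tau_{\bm 0}$ and $g_{j,\mathcal N}=\tau_{\bm e_j}$. So $u_{j,\mathcal N}e^{-\i(k_jx+w_jt)}/\rho_j$ is exactly the ratio $\tau_{\bm e_j}/\tau_{\bm 0}$ controlled by the theorem. We apply the theorem in its NLS form: $\chi_r=\xi_r$, the centers are given by \eqref{eq:NLS-center}, and the parameter choice is \eqref{eq:large-parameter}. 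The hypotheses of Theorem~\ref{RW solutions of vector NLS} require $\sigma_j=1$ and the constraints \eqref{contraints of multiple zeros}; under these, $p_0$ in \eqref{eq:NLS-multiple-root} has $\Re p_0=\pm\lambda_1\sin[\pi/(M+1)]\neq0$. Hence each $\xi_1(x,t)=w$ has a unique real solution, and the centers are well defined.

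\emph{Local asymptotics.} Fix $\alpha$ and take $(x,t)$ with $\sqrt{(x-\widehat x_\alpha)^2+(t-\widehat t_\alpha)^2}=O(1)$. Equation \eqref{eq:tau-ratio-local} gives
\[
\tau_{\bm e_j}/\tau_{\bm 0}=\mathcal P_j\bigl(\xi_1(x-\widehat x_\alpha,t-\widehat t_\alpha)\bigr)+O(\abs{B}^{-1}).
\]
By \eqref{eq:fundamental-profiles} the leading term is $\widehat u_j(x-\widehat x_\alpha,t-\widehat t_\alpha)$. Multiplying by the bounded factor $\rho_je^{\i(k_jx+w_jt)}$, which has modulus $\rho_j$, keeps the error $O(\abs{B}^{-1})$. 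This gives \eqref{eq:NLS-local-asymptotic}.

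Each such local profile is a genuine fundamental (Peregrine-type) vector rogue wave. The denominator $\abs{\zeta}^2+h_0^2$ of $\mathcal P_j$ never vanishes because $h_0\neq0$. Moreover $\mathcal P_j(\zeta)\to1$ as $\abs\zeta\to\infty$, so each profile is localized on the background.

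\emph{Counting the waves.} The $\Lambda$ roots are simple: for $H_{d,n}$ by \cite{Duran2020}, and for $Q_{N_1,N_2}$ by \cite{NoumiYamada1999,roffelsen2025real}. Hence the $z_\alpha$ are distinct. For $\alpha\neq\beta$, $\xi_1(\widehat x_\alpha-\widehat x_\beta,\widehat t_\alpha-\widehat t_\beta)=B(z_\alpha-z_\beta)+O(1)$, since $\Delta_{P_{\boldsymbol\mu}}=O(1)$. Because $\xi_1$ is an invertible real-linear map of $\R^2$, the centers are separated by a distance of order $\abs B$. So the $O(1)$ neighborhoods are disjoint for large $\abs B$, and the solution splits into exactly $\Lambda$ separated fundamental rogue waves.

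\emph{Background.} In the region $\sqrt{x^2+t^2}=O(\abs B)$ or $O(1)$, away from all the centers, \eqref{eq:tau-ratio-background} gives $\tau_{\bm e_j}/\tau_{\bm 0}=1+o(1)$. Hence $u_{j,\mathcal N}=\rho_je^{\i(k_jx+w_jt)}(1+o(1))$, which is the plane-wave background.

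\emph{Main obstacle.} No genuine obstacle remains at the level of the corollary: all the analytic work sits in Theorem~\ref{thm:common-splitting}. The only points that need care are checking that the NLS hypotheses ($\Re p_0\neq0$, $\sigma_j=1$) match those under which the theorem is stated, and observing that the exponential prefactor is unimodular, so it preserves the error orders.
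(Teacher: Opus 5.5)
Your proposal is correct and takes the same approach as the paper. The paper also gets the corollary by inserting the $\tau$-function asymptotics of Theorem~\ref{thm:common-splitting} (via \eqref{eq:tau-local}) into $u_{j,\mathcal N}=\rho_j(\tau_{\bm e_j}/\tau_{\bm 0})e^{\i(k_jx+w_jt)}$. Your additional checks make explicit what the paper leaves implicit: $\Re p_0\neq0$, $\abs{\zeta}^2+h_0^2>0$, and the $O(\abs B)$ separation of the centers that follows from simplicity of the roots.
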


\begin{corollary}[Hirota rogue wave patterns]
\label{cor:Hirota-PIV-patterns}
Under either index choice in Theorem~\ref{thm:common-splitting}, consider rogue wave solutions of the multi-component Hirota equation. For each root \(z_\alpha\) of \(P_{\boldsymbol\mu}\), let \((\widehat x_\alpha,\widehat t_\alpha)\) be the predicted center given by \eqref{eq:Hirota-center}. Then, as \(\abs{B}\to\infty\), the $\mathcal{N}$-th order rogue wave solution splits into $\Lambda$ fundamental rogue waves. More precisely, in an
$O(1)$-neighborhood of each predicted center,
\begin{equation}
\label{eq:Hirota-local-asymptotic}
v_{j,\mathcal N}(x,t)
=
\rho_j\widehat v_j(x-\widehat x_\alpha,t-\widehat t_\alpha)
e^{\i(k_jx-w_jt)}
+O(\abs{B}^{-1}),
\qquad j=1,\ldots,M.
\end{equation}
When $\sqrt{x^2+t^2}=O(|B|)$ or $\sqrt{x^2+t^2}=O(1)$, the solution $v_{j,\mathcal N}(x,t)$ asymptotically approaches the background outside the $O(1)$ neighborhoods of the predicted centers.
\end{corollary}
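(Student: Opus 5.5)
The statement to prove is the Hirota corollary, Corollary~\ref{cor:Hirota-PIV-patterns}. We derive it directly from Theorem~\ref{thm:common-splitting}, specialized to the Hirota case $\chi_r=\eta_r$, together with the explicit solution formula of Theorem~\ref{thm:rogue wave solutions}.

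By Theorem~\ref{thm:rogue wave solutions}, and after the coordinate shift has been absorbed into $\eta_r$ as explained in Section~\ref{Notations},
\[
v_{j,\mathcal N}=\rho_j\frac{\tau_{\bm e_j}(x,t)}{\tau_{\bm 0}(x,t)}e^{\i(k_jx-w_jt)}.
\]
The proof then has two parts.

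\emph{Local asymptotics.} Fix a root $z_\alpha$ and let $(\widehat x_\alpha,\widehat t_\alpha)$ be given by \eqref{eq:Hirota-center}. For $(x,t)$ within $O(1)$ of this center, \eqref{eq:tau-ratio-local} gives
\[
\frac{\tau_{\bm e_j}}{\tau_{\bm 0}}=\mathcal P_j\bigl(\eta_1(x-\widehat x_\alpha,t-\widehat t_\alpha)\bigr)+O(\abs{B}^{-1}).
\]
By \eqref{eq:fundamental-profiles}, the right-hand side equals $\widehat v_j(x-\widehat x_\alpha,t-\widehat t_\alpha)+O(\abs{B}^{-1})$. Multiplying by $\rho_j e^{\i(k_jx-w_jt)}$ gives \eqref{eq:Hirota-local-asymptotic}. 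The error stays $O(\abs{B}^{-1})$ because this prefactor has modulus $\rho_j$, which does not depend on $B$.

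\emph{Counting the fundamental waves.} The roots of $P_{\boldsymbol\mu}$ are simple. This holds for $H_{d,n}$ by \cite{Duran2020} and for $Q_{N_1,N_2}$ by \cite{NoumiYamada1999,roffelsen2025real}, so there are exactly $\Lambda$ distinct roots $z_\alpha$. We need the corresponding centers to be pairwise separated by distances tending to infinity. Since $\Gamma_{P_{\boldsymbol\mu}}(z_\alpha;B)=O(1)$, we have $\eta_1(\widehat x_\alpha,\widehat t_\alpha)=Bz_\alpha+O(1)$. The map $(x,t)\mapsto\eta_1(x,t)$ is a real-linear bijection of $\mathbb R^2$ onto $\mathbb C$ because $\Im q\neq0$, which is exactly the condition $\Re(p_0)\Im(p_0)\neq0$. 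Hence distinct centers are separated by distances of order $\abs{B}\min_{\alpha\ne\beta}\abs{z_\alpha-z_\beta}$. Their $O(1)$ neighborhoods are therefore disjoint for large $\abs{B}$, and the solution splits into exactly $\Lambda$ fundamental rogue waves.

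\emph{Background.} In the regimes $\sqrt{x^2+t^2}=O(\abs{B})$ or $O(1)$, away from the centers, \eqref{eq:tau-ratio-background} gives $\tau_{\bm e_j}/\tau_{\bm 0}=1+o(1)$. Hence
\[
v_{j,\mathcal N}=\rho_je^{\i(k_jx-w_jt)}\bigl(1+o(1)\bigr),
\]
which is the background plane wave.

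The only point needing care is keeping the conventions consistent with Theorem~\ref{thm:common-splitting}. The coordinate shift $x\mapsto x-3\sum_j c_j\rho_j^2t$ must be carried by $\eta_r$ and $\gamma_r$, and it must not be applied a second time in the exponential factor. Once that is fixed, all the analytic content is contained in Theorem~\ref{thm:common-splitting}, and the corollary follows by the substitutions above.
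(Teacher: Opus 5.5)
Your proposal is correct and follows essentially the same route as the paper, which obtains the corollary by substituting the $\tau$-ratio asymptotics of Theorem~\ref{thm:common-splitting} (with $\chi_r=\eta_r$ and the shift $x\mapsto x-3\sum_j c_j\rho_j^2t$ absorbed into $\eta_r$) into $v_{j,\mathcal N}=\rho_j\,(\tau_{\bm e_j}/\tau_{\bm 0})\,e^{\i(k_jx-w_jt)}$. Your explicit check that the $\Lambda$ predicted centers are pairwise separated by distances of order $\abs{B}$ makes precise a point the paper leaves implicit; it uses simplicity of the roots together with invertibility of $\eta_1$, where $\Im q=2\Re(p_0)\Im(p_0)\neq0$.
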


\begin{remark}
Although we establish the generalized Okamoto rogue wave patterns only
for the two-component NLS and Hirota equations, the same connection
extends to \(M\)-component integrable systems whenever the $\tau$ functions of their
rogue wave solutions can be expressed as determinants whose entries
involve Schur polynomials with an index jump of three. For the
\(M\)-component NLS and Hirota equations, such solutions arise by
choosing \(p_0\) as a double root of \(\mathcal G_M'(p)=0\) and
\(\mathcal F_M'(p)=0\), respectively.

\end{remark}

\begin{remark}
For a fixed \(M\), only finitely many generalized Hermite polynomials
can be associated with rogue wave patterns of the \(M\)-component NLS
and Hirota equations through the rogue wave solutions given in
Theorems~\ref{RW solutions of vector NLS}
and~\ref{thm:rogue wave solutions}, respectively.
Specifically, the consecutive-index choice
\[
\boldsymbol{\mu}=(d,d+1,\ldots,d+n-1),
\qquad d\geq1,\quad n\geq2,\quad d+n-1\leq M,
\]
leads to the generalized Hermite polynomial \(H_{d,n}(z)\). Thus, for
a fixed \(M\), there are \(M(M-1)/2\) admissible ordered pairs \((d,n)\).
Conversely, for any generalized Hermite polynomial \(H_{d,n}(z)\),
its associated rogue wave patterns can be realized in both equations
by choosing any sufficiently large number of components
\(M\geq d+n-1\) and taking the corresponding consecutive indices.
\end{remark}

\subsection{Numerical illustrations}

In this subsection, we present several examples to illustrate
Theorem~\ref{thm:common-splitting}.
  We first consider the patterns associated with generalized
Hermite polynomials in the 9-component NLS equation and then those
associated with generalized Okamoto polynomials in the 2-component
Hirota equation. 

In the first example, we take
\begin{equation}
\label{eq:nls-hermite-numerical-parameters}
M=9,\qquad
k_1=\frac{1}{2},\qquad
\lambda_1=1,\qquad
\sigma_j=1\quad (j=1,\ldots,9).
\end{equation}
The remaining \(k_j\) and \(\rho_j\) are determined by
\eqref{contraints of multiple zeros}. According to \eqref{eq:NLS-multiple-root}, we take
\[
p_0
=
\sin\frac{\pi}{10}
-\mathrm{i}\cos\frac{\pi}{10}
+\frac{\mathrm{i}}{2}.
\]
We consider the four cases
\[
(d,n)\in\{(1,6),(5,2),(4,3),(5,5)\}.
\]
For each case, we set
\[
a_{2,I}=B^2=50, \qquad B=\sqrt{50}>0,
\]
set all other internal parameters to
zero, and display
\(\lvert u_{5,\mathcal N}(x,t)\rvert\).

In Figure~\ref{fig:nls-hermite-numerical-confirmation}, the fifth-component amplitudes
\(\lvert u_{5,\mathcal{N}}(x,t)\rvert\) of the four exact rogue wave
solutions are plotted in the third row. For $$\mathcal{N}=(1,1,1,1,1,1,0,0,0),$$ the exact solution contains $6$ rogue waves arranged approximately along a straight line. For $$\mathcal{N}=(0,0,0,0,1,1,0,0,0)\quad \text{and} \quad \mathcal{N}=(0,0,0,1,1,1,0,0,0), $$ the  exact solutions
contain  $10$ and $12$  rogue waves, which approximately form skewed \(5\times2\) and \(4\times3\) arrays, respectively. For $$\mathcal{N}=(0,0,0,0,1,1,1,1,1),$$
the exact solution contains $25$  rogue waves forming a skewed \(5\times5\) array.

Next, we use~\eqref{eq:NLS-center} to predict the locations of the rogue waves in these exact solutions. The generalized Hermite polynomials \(H_{1,6}(z)\), \(H_{5,2}(z)\), \(H_{4,3}(z)\), and \(H_{5,5}(z)\), whose root configurations are shown in the first row of Figure 3, have 6, 10, 12, and 25 roots, respectively. Then, according to Theorem~\ref{thm:common-splitting}, the four true solutions shown in the third row of Figure~\ref{fig:nls-hermite-numerical-confirmation} are predicted to asymptotically split into \(6\), \(10\), \(12\), and \(25\) fundamental vector rogue waves, respectively. These predicted rogue wave locations are plotted in the second row of Figure~\ref{fig:nls-hermite-numerical-confirmation}. A comparison of the second and third rows shows a close correspondence between the predicted locations and the positions of true rogue waves. 

\begin{figure}[H]

\centering
\begingroup

\setlength{\tabcolsep}{0pt}
\renewcommand{\arraystretch}{1}

\newcommand{\figpanel}[1]{%
  \includegraphics[
    width=\linewidth,
    keepaspectratio
  ]{#1}%
}

\newcommand{\hermiteamppanel}[1]{%
  \makebox[\linewidth][c]{%
    \includegraphics[
      width=0.956\linewidth,
      keepaspectratio
    ]{#1}%
  }%
}

\newcommand{\figrowlabel}[1]{%
  \parbox[b][0.156\textwidth][c]{\linewidth}{%
    \makebox[\linewidth][c]{%
      \rotatebox[origin=c]{90}{\footnotesize #1}%
    }%
  }%
}

\begin{tabular}{
@{}
>{\centering\arraybackslash}m{0.028\textwidth}
@{\hspace{0.2mm}}
*{4}{
>{\centering\arraybackslash}m{0.208\textwidth}
@{\hspace{1.2mm}}
}
>{\centering\arraybackslash}m{0.014\textwidth}
@{}
}

&
{\small $(d,n)=(1,6)$}
&
{\small $(d,n)=(5,2)$}
&
{\small $(d,n)=(4,3)$}
&
{\small $(d,n)=(5,5)$}
&
\\[1mm]

\figrowlabel{Roots}
&
\figpanel{
Figures/Numerical\_confirmation/Vector\_NLS\_H\_1\_6\_A\_50\_Hermite\_roots.png
}
&
\figpanel{
Figures/Numerical\_confirmation/Vector\_NLS\_H\_5\_2\_A\_50\_Hermite\_roots.png
}
&
\figpanel{
Figures/Numerical\_confirmation/Vector\_NLS\_H\_4\_3\_A\_50\_Hermite\_roots.png
}
&
\figpanel{
Figures/Numerical\_confirmation/Vector\_NLS\_H\_5\_5\_A\_50\_Hermite\_roots.png
}
&
{\small $t$}
\\[1.2mm]

\figrowlabel{Prediction}
&
\figpanel{
Figures/Numerical\_confirmation/Vector\_NLS\_H\_1\_6\_A\_50\_RW\_locations.png
}
&
\figpanel{
Figures/Numerical\_confirmation/Vector\_NLS\_H\_5\_2\_A\_50\_RW\_locations.png
}
&
\figpanel{
Figures/Numerical\_confirmation/Vector\_NLS\_H\_4\_3\_A\_50\_RW\_locations.png
}
&
\figpanel{
Figures/Numerical\_confirmation/Vector\_NLS\_H\_5\_5\_A\_50\_RW\_locations.png
}
&
{\small $t$}
\\[1.2mm]

\figrowlabel{$\lvert u_{5,\mathcal N}(x,t)\rvert$}
&
\hermiteamppanel{
Figures/Numerical\_confirmation/NLS\_M9\_H1\_6\_u5\_a2\_50\_amplitude.png
}
&
\hermiteamppanel{
Figures/Numerical\_confirmation/NLS\_M9\_H5\_2\_u5\_a2\_50\_amplitude.png
}
&
\hermiteamppanel{
Figures/Numerical\_confirmation/NLS\_M9\_H4\_3\_u5\_a2\_50\_amplitude.png
}
&
\hermiteamppanel{
Figures/Numerical\_confirmation/NLS\_M9\_H5\_5\_u5\_a2\_50\_amplitude.png
}
&
{\small $t$}
\\[-0.5mm]

&
{\small $x$}
&
{\small $x$}
&
{\small $x$}
&
{\small $x$}
&

\end{tabular}
\endgroup

\caption{
Numerical illustration of rogue wave patterns associated with generalized Hermite polynomials in the 9-component NLS equation. The first row shows the roots of $H_{d,n}(z)$ in the complex $z$-plane, the second row shows the corresponding locations in the $(x,t)$-plane predicted by \eqref{eq:NLS-center}, and the third row shows the fifth-component amplitude $\lvert u_{5,\mathcal N}(x,t)\rvert$ of the true rogue wave solutions. The $(x,t)$-ranges in the first row, where
$x=\operatorname{Re}(z)$ and $t=\operatorname{Im}(z)$, are chosen from
left to right as follows: first column: $-6.35\leq x\leq6.35$,
$-4.76\leq t\leq4.76$; second column: $-6.94\leq x\leq6.94$,
$-5.21\leq t\leq5.21$; third column: $-5.42\leq x\leq5.42$,
$-4.06\leq t\leq4.06$; fourth column: $-6.33\leq x\leq6.33$,
$-4.75\leq t\leq4.75$. The physical $(x,t)$-ranges are chosen identically
in the second and third rows as follows: first column:
$-325\leq x\leq325$, $-250\leq t\leq250$; second column:
$-575\leq x\leq575$, $-425\leq t\leq425$; third column:
$-475\leq x\leq475$, $-350\leq t\leq350$; fourth column:
$-575\leq x\leq575$, $-425\leq t\leq425$.
}
\label{fig:nls-hermite-numerical-confirmation}

\end{figure}

In the second example, we take
\begin{equation}\label{eq:hirota-okamoto-numerical-parameters} M=2,\qquad k_1=1,\qquad \lambda_1=1,\qquad c_j=-1\quad (j=1,2). \end{equation}
The remaining \(k_j\) and \(\rho_j\) are determined by \eqref{eqn:constraints of multiple zeros}, which gives
\[k_2=0,\qquad \rho_1=\rho_2=1.\]
According to \eqref{root of multiplicity M}, we take
\[p_0=\sin\frac{\pi}{3}-\mathrm{i}\cos\frac{\pi}{3}+\mathrm{i}=\frac{\sqrt{3}}{2}+\frac{\mathrm{i}}{2}.\]
We consider the four order-index choices
\[\mathcal N=(N_1,N_2)\in\{(1,2),(2,2),(3,3),(4,4)\}.\]
For each case, we set
\[a_2=B^2=50,\qquad B=\sqrt{50}>0,\]
and all other internal parameters to zero, and display \(\lvert v_{1,\mathcal N}(x,t)\rvert\).

\begin{figure}[H]
\centering
\begingroup

\setlength{\tabcolsep}{0pt}
\renewcommand{\arraystretch}{1}

\newcommand{\figpanel}[1]{%
  \includegraphics[
    width=\linewidth,
    keepaspectratio
  ]{#1}%
}

\newcommand{\okamotoamppanel}[1]{%
  \makebox[\linewidth][c]{%
    \includegraphics[
      width=0.968\linewidth,
      keepaspectratio
    ]{#1}%
  }%
}

\newcommand{\figrowlabel}[1]{%
  \parbox[b][0.156\textwidth][c]{\linewidth}{%
    \makebox[\linewidth][c]{%
      \rotatebox[origin=c]{90}{\footnotesize #1}%
    }%
  }%
}

\begin{tabular}{
@{}
>{\centering\arraybackslash}m{0.028\textwidth}
@{\hspace{0.2mm}}
*{4}{
>{\centering\arraybackslash}m{0.208\textwidth}
@{\hspace{1.2mm}}
}
>{\centering\arraybackslash}m{0.014\textwidth}
@{}
}

&
{\small $(N_1,N_2)=(1,2)$}
&
{\small $(N_1,N_2)=(2,2)$}
&
{\small $(N_1,N_2)=(3,3)$}
&
{\small $(N_1,N_2)=(4,4)$}
&
\\[1mm]

\figrowlabel{Roots}
&
\figpanel{
Figures/Numerical\_confirmation/Hirota\_Q\_1\_2\_a2\_50\_Okamoto\_roots.png
}
&
\figpanel{
Figures/Numerical\_confirmation/Hirota\_Q\_2\_2\_a2\_50\_Okamoto\_roots.png
}
&
\figpanel{
Figures/Numerical\_confirmation/Hirota\_Q\_3\_3\_a2\_50\_Okamoto\_roots.png
}
&
\figpanel{
Figures/Numerical\_confirmation/Hirota\_Q\_4\_4\_a2\_50\_Okamoto\_roots.png
}
&
{\small $t$}
\\[1.2mm]

\figrowlabel{Prediction}
&
\figpanel{
Figures/Numerical\_confirmation/Hirota\_Q\_1\_2\_a2\_50\_Hirota\_corrected\_centers.png
}
&
\figpanel{
Figures/Numerical\_confirmation/Hirota\_Q\_2\_2\_a2\_50\_Hirota\_corrected\_centers.png
}
&
\figpanel{
Figures/Numerical\_confirmation/Hirota\_Q\_3\_3\_a2\_50\_Hirota\_corrected\_centers.png
}
&
\figpanel{
Figures/Numerical\_confirmation/Hirota\_Q\_4\_4\_a2\_50\_Hirota\_corrected\_centers.png
}
&
{\small $t$}
\\[1.2mm]

\figrowlabel{$\lvert v_{1,\mathcal N}(x,t)\rvert$}
&
\okamotoamppanel{
Figures/Numerical\_confirmation/Hirota\_Q1\_2\_a2\_50\_v1\_amplitude.png
}
&
\okamotoamppanel{
Figures/Numerical\_confirmation/Hirota\_Q2\_2\_a2\_50\_v1\_amplitude.png
}
&
\okamotoamppanel{
Figures/Numerical\_confirmation/Hirota\_Q3\_3\_a2\_50\_v1\_amplitude.png
}
&
\okamotoamppanel{
Figures/Numerical\_confirmation/Hirota\_Q4\_4\_a2\_50\_v1\_amplitude.png
}
&
{\small $t$}
\\[-0.5mm]

&
{\small $x$}
&
{\small $x$}
&
{\small $x$}
&
{\small $x$}
&

\end{tabular}
\endgroup

\caption{
Numerical illustration of rogue wave patterns associated with generalized Okamoto polynomials in the 2-component Hirota equation. The first row shows the roots of $Q_{N_1,N_2}(z)$ in the complex $z$-plane, the second row shows the corresponding locations in the $(x,t)$-plane predicted by \eqref{eq:Hirota-center}, and the third row shows the first-component amplitude $\lvert v_{1,\mathcal N}(x,t)\rvert$ of the true rogue wave solutions. The $(x,t)$-ranges in the first row, where $x=\operatorname{Re}(z)$ and $t=\operatorname{Im}(z)$, are chosen from left to right as follows: first column: $-3.05\leq x\leq3.05$, $-2.28\leq t\leq2.28$; second column: $-3.08\leq x\leq3.08$, $-2.31\leq t\leq2.31$; third column: $-4.19\leq x\leq4.19$, $-3.14\leq t\leq3.14$; fourth column: $-5.09\leq x\leq5.09$, $-3.82\leq t\leq3.82$. The physical $(x,t)$-ranges are chosen identically in the second and third rows as follows: first column: $-75\leq x\leq75$, $-11\leq t\leq11$; second column: $-45\leq x\leq45$, $-5\leq t\leq5$; third column: $-60\leq x\leq60$, $-8\leq t\leq8$; fourth column: $-75\leq x\leq75$, $-10\leq t\leq10$.
}
\label{fig:hirota-okamoto-numerical-confirmation}

\end{figure}

In Figure~\ref{fig:hirota-okamoto-numerical-confirmation}, the first-component amplitudes \(\lvert v_{1,\mathcal{N}}(x,t)\rvert\) of the four exact rogue wave solutions are plotted in the third row. For \(\mathcal{N}=(1,2)\), the  exact solution contains \(5\) well-separated rogue waves forming an inclined cross-shaped configuration. For \(\mathcal{N}=(2,2)\), \(\mathcal{N}=(3,3)\), and \(\mathcal{N}=(4,4)\), the exact solutions contain \(6\), \(12\), and \(20\) well-separated rogue waves, respectively. In each of these three cases, the rogue waves form an inclined cone-like configuration in the \((x,t)\)-plane.

For the Hirota equation, we apply~\eqref{eq:Hirota-center} to predict the locations of the rogue waves. The root configurations of the generalized Okamoto polynomials \(Q_{1,2}(z)\), \(Q_{2,2}(z)\), \(Q_{3,3}(z)\), and \(Q_{4,4}(z)\), displayed in the first row of Figure~\ref{fig:hirota-okamoto-numerical-confirmation}, contain \(5\), \(6\), \(12\), and \(20\) roots, respectively. It then follows from Theorem~\ref{thm:common-splitting} that the four true solutions displayed in the third row of Figure~\ref{fig:hirota-okamoto-numerical-confirmation} are expected to decompose asymptotically into \(5\), \(6\), \(12\), and \(20\) fundamental vector rogue waves, respectively. The predicted locations are displayed in the second row of Figure~\ref{fig:hirota-okamoto-numerical-confirmation}. A comparison with the corresponding rogue wave locations in the true solutions reveals excellent agreement. 

\section{Proof of Theorem~\ref{thm:common-splitting}}
\label{sec:proof}

We first recall the following lemma.

\begin{lemma}[{\cite[Lemma~6.3]{wu2026zerosgeneralizedwronskianhermitepolynomials}}]
\label{lem:s1-zero}
Under the assumptions of either Theorem~\ref{RW solutions of vector NLS}
or Theorem~\ref{thm:rogue wave solutions}, the coefficient \(s_1\)
defined in the corresponding theorem satisfies
\[
s_1=0.
\]
\end{lemma}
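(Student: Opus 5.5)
The plan is a direct Taylor expansion of the generating function defining \(s_r\). Write \(p(\kappa)=p_0+p_1\kappa+\tfrac12 p_2\kappa^2+O(\kappa^3)\). Then
\[
\frac{p(\kappa)-p_0}{\kappa p_1}=1+\frac{p_2}{2p_1}\kappa+O(\kappa^2),\qquad
\frac{p_0+p_0^*}{p(\kappa)+p_0^*}=1-\frac{p_1}{p_0+p_0^*}\kappa+O(\kappa^2).
\]
Taking the logarithm gives
\[
s_1=\frac{p_2}{2p_1}-\frac{p_1}{p_0+p_0^*}.
\]
So the claim is equivalent to the identity \(p_2(p_0+p_0^*)=2p_1^2\). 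I will prove this for the NLS function \(\mathcal G_M\); the Hirota case with \(\mathcal F_M\) is identical up to the constant in front of \(p\) and the sign of the sum.

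\textbf{Computing \(p_1\) and \(p_2\).} The right-hand side of \eqref{definition of p(kappa))} equals \(\mathcal G_M(p_0)\bigl(1+\kappa^{M+1}/(M+1)!+O(\kappa^{2M+2})\bigr)\), because \(\sum_n\omega^{nm}\) vanishes unless \((M+1)\mid m\). Since \(p_0\) is a root of \(\mathcal G_M'\) of multiplicity \(M\), we have
\[
\mathcal G_M(p)-\mathcal G_M(p_0)=c_{M+1}(p-p_0)^{M+1}\bigl(1+\delta(p-p_0)+O((p-p_0)^2)\bigr),
\qquad
\delta=\frac{\mathcal G_M^{(M+2)}(p_0)}{(M+2)\,\mathcal G_M^{(M+1)}(p_0)}.
\]
Taking \((M+1)\)-th roots of the defining relation gives \((p-p_0)\bigl(1+\tfrac{\delta}{M+1}(p-p_0)+\cdots\bigr)=p_1\kappa+O(\kappa^{M+2})\). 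Hence
\[
p_2=-\frac{2\delta}{M+1}\,p_1^2,
\]
and the target identity becomes
\[
\delta=-\frac{M+1}{p_0+p_0^*}.
\]

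\textbf{Evaluating \(\delta\).} To compute \(\delta\) I will use a factorization of \(\mathcal G_M'\). The symmetry \(\mathcal G_M(-p^*)=-\mathcal G_M(p)^*\) shows that \(-p_0^*\) is also a root of \(\mathcal G_M'\) of multiplicity \(M\); this is the other root in \eqref{eq:NLS-multiple-root}. Counting degrees and leading coefficients then gives
\[
\mathcal G_M'(p)=\frac{2(p-p_0)^M(p+p_0^*)^M}{\prod_{j}(p-\mathrm i k_j)^2}=(p-p_0)^M h(p).
\]
From this,
\[
\mathcal G_M^{(M+1)}(p_0)=M!\,h(p_0),\qquad \mathcal G_M^{(M+2)}(p_0)=(M+1)!\,h'(p_0),
\]
so \(\delta=\frac{M+1}{M+2}\,\frac{h'(p_0)}{h(p_0)}\), with
\[
\frac{h'(p_0)}{h(p_0)}=\frac{M}{p_0+p_0^*}-2\sum_j\frac{1}{p_0-\mathrm i k_j}.
\]
Substituting, the required identity reduces to
\[
\sum_{j=1}^M\frac{1}{p_0-\mathrm i k_j}=\frac{M+1}{p_0+p_0^*}.
\]
The right-hand side equals \((M+1)/\bigl(2\lambda_1\sin\frac{\pi}{M+1}\bigr)\).

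\textbf{Main obstacle: the trigonometric sum.} This is the only non-formal step, and I would attack it with the explicit constraints \eqref{contraints of multiple zeros}. Put \(\theta=\pi/(M+1)\). Then
\[
p_0-\mathrm i k_j=\lambda_1\sin\theta\,\bigl(1-\mathrm i\cot(j\theta)\bigr)=\frac{-\mathrm i\lambda_1\sin\theta\, e^{\mathrm i j\theta}}{\sin(j\theta)},
\qquad\text{so}\qquad
\frac{1}{p_0-\mathrm i k_j}=\frac{1-e^{-2\mathrm i j\theta}}{2\lambda_1\sin\theta}.
\]
Summing over \(j=1,\dots,M\), and using that \(\sum_{j=1}^{M}e^{-2\mathrm i j\theta}=-1\) (sum of the nontrivial \((M+1)\)-th roots of unity), gives \((M+1)/(2\lambda_1\sin\theta)\), which is exactly what is needed.

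If the sign conventions do not line up (for instance \(p_0\) on the other branch \(-p_0^*\), or the Hirota signs in \eqref{eqn:constraints of multiple zeros}), I would redo this last computation with the corresponding conjugated root, where the same root-of-unity cancellation should apply.
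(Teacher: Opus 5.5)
Your proof is correct. There is no in-paper argument to compare it with: the paper does not prove this lemma but imports it from \cite[Lemma~6.3]{wu2026zerosgeneralizedwronskianhermitepolynomials}, so your computation is a self-contained replacement.

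Each reduction checks out:
\begin{itemize}
\item $s_1=\frac{p_2}{2p_1}-\frac{p_1}{p_0+p_0^*}$.
\item Inverting $\mathcal G_M(p)-\mathcal G_M(p_0)=\mathcal G_M(p_0)\kappa^{M+1}/(M+1)!+O(\kappa^{2M+2})$ gives $p_2=-2\delta p_1^2/(M+1)$.
\item The factorization of $\mathcal G_M'$ follows from the symmetry $p\mapsto -p^*$, a degree count, and $p_0\neq -p_0^*$.
\item Your roots-of-unity evaluation of $\sum_j(p_0-\mathrm{i}k_j)^{-1}$ is right.
\end{itemize}

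The hedge in your last paragraph is unnecessary, for two reasons.
\begin{itemize}
\item For the other root $-p_0^*$, both sides of $\sum_j(p_0-\mathrm{i}k_j)^{-1}=(M+1)/(p_0+p_0^*)$ transform by $w\mapsto -w^*$, so the identity carries over.
\item For Hirota, the two constraint sets give the same $k_j$ and the same $p_0$, and $-2c_j\rho_j^2$ equals the NLS value of $\sigma_j\rho_j^2$. Hence $\mathcal G_M=2\mathcal F_M$ identically. Since the relation defining $p(\kappa)$ is invariant under rescaling, $p(\kappa)$ and $s_1$ are literally the same in both cases.
\end{itemize}

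Your trigonometric step is a valid explicit check, but it can be bypassed. The numerator of $\mathcal G_M(p)-\mathcal G_M(p_0)$ over $\prod_j(p-\mathrm{i}k_j)$ has degree $M+1$, leading coefficient $2$, and a zero of order $M+1$ at $p_0$. Therefore $\mathcal G_M(p)-\mathcal G_M(p_0)=2(p-p_0)^{M+1}/\prod_j(p-\mathrm{i}k_j)$, which gives $\delta=-\sum_j(p_0-\mathrm{i}k_j)^{-1}$ directly. Equating this with your $\delta=\frac{M+1}{M+2}h'(p_0)/h(p_0)$ produces exactly the identity $\sum_j(p_0-\mathrm{i}k_j)^{-1}=(M+1)/(p_0+p_0^*)$.

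So $s_1=0$ follows as soon as $\mathcal G_M'$ (or $\mathcal F_M'$) has the two $M$-fold roots $p_0$ and $-p_0^*$. The explicit values of $k_j$ from Lemma~\ref{lemma:multiple roots} are not needed beyond guaranteeing that root structure.
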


We now prove Theorem~\ref{thm:common-splitting}. For each \(i\), let \(b_i\) denote the index of the block containing
\(\mu_i\), and set
\[
\mu_{\max}=\max_{1\leq i\leq N}\mu_i.
\]

For either index choice, applying the Cauchy--Binet formula to
\(\tau_{\bm n}\), we obtain
\begin{equation}
\label{eq:Cauchy-Binet}
\tau_{\bm n}
=
\sum_{0\leq\nu_1<\cdots<\nu_N\leq\mu_{\max}}
D_{\boldsymbol\nu}^+(\bm n)
D_{\boldsymbol\nu}^-(\bm n),
\end{equation}
where 
\begin{align}
D_{\boldsymbol\nu}^+(\bm n)
&=\det_{1\leq i,j\leq N}\left[
(h_0)^{\nu_j}S_{\mu_i-\nu_j}
\bigl(\bm x_{b_i}^+(\bm n)+\nu_j\bm s\bigr)\right],
\label{eq:Dplus}\\
D_{\boldsymbol\nu}^-(\bm n)
&=\det_{1\leq i,j\leq N}\left[
(h_0^*)^{\nu_j}S_{\mu_i-\nu_j}
\bigl(\bm x_{b_i}^-(\bm n)+\nu_j\bm s^*\bigr)\right].
\label{eq:Dminus}
\end{align}

For \((x,t)\) in the region
\(\sqrt{x^2+t^2}=O(\abs B)\), we introduce the scaled variable $z=B^{-1}\chi_1(x,t)$. Using the generating function of the Schur polynomials together with
\eqref{eq:large-parameter}, we obtain
\begin{equation}
\label{eq:outer-Schur}
 S_k\left(\bm x_I^+(\bm n)+\nu\bm s\right)
 =B^k\left[p_k(z)+O(\abs B^{-1})\right].
\end{equation}
Consequently,
\begin{equation}
\label{eq:Dnu-outer}
 D_{\boldsymbol\nu}^+(\bm n)
 =h_0^{\sum_j\nu_j}B^{\sum_i\mu_i-\sum_j\nu_j}
 \left[\det_{1\leq i,j\leq N}[p_{\mu_i-\nu_j}(z)]+O(\abs B^{-1})\right].
\end{equation}
The leading-order contribution in \eqref{eq:Cauchy-Binet} comes from the
unique index choice
\begin{equation}
\boldsymbol\nu^{(0)}=(0,1,\ldots,N-1).
\end{equation}
For this index choice, using \eqref{eq:Pmu} and \eqref{eq:Pmu-degree}, \eqref{eq:Dnu-outer} becomes
\begin{equation}
\label{eq:D0-outer}
D_{\boldsymbol\nu^{(0)}}^+(\bm n)
=
C_{\boldsymbol\mu}^{-1}h_0^{N(N-1)/2}B^\Lambda
\bigl[P_{\boldsymbol\mu}(z)+O(\abs B^{-1})\bigr].
\end{equation}
Similarly, 
\begin{equation}
\label{eq:D0minus-outer}
D_{\boldsymbol\nu^{(0)}}^-(\bm n)
=
C_{\boldsymbol\mu}^{-1}h_0^{N(N-1)/2}(B^*)^\Lambda
\bigl[P_{\boldsymbol\mu}(z^*)+O(\abs B^{-1})\bigr].
\end{equation}
Thus, substituting \eqref{eq:D0-outer} and
\eqref{eq:D0minus-outer} into \eqref{eq:Cauchy-Binet}, we obtain
\begin{equation}
\label{eq:tau-away}
\tau_{\bm n}
=
\abs{C_{\boldsymbol\mu}}^{-2}
h_0^{N(N-1)}\abs B^{2\Lambda}\abs{P_{\boldsymbol\mu}(z)}^2
\bigl[1+O(\abs B^{-1})\bigr],
\end{equation}
provided that \(z\) stays away from the roots of \(P_{\boldsymbol\mu}\). Since this
leading-order asymptotics is independent of \(\bm n\), it follows that
\[
\frac{\tau_{\bm e_j}}{\tau_{\bm0}}
=
1+O(\abs B^{-1}),
\qquad j=1,\ldots,M.
\]
Thus, the rogue wave solution approaches the
background, except when \(z\) is near a root \(z_0\) of \(P_{\boldsymbol\mu}\), where
the leading term in \eqref{eq:tau-away} vanishes.

Next, we consider \((x,t)\) in an \(O(1)\) neighborhood of a point
\((\widetilde x_0,\widetilde t_0)\) satisfying
\[
\chi_1(\widetilde x_0,\widetilde t_0)=Bz_0,
\]
where \(z_0\) is a root of \(P_{\boldsymbol\mu}\). For \((x,t)\) in this neighborhood, define \(L\) by
\begin{equation}
L
=
\chi_1(x-\widetilde x_0,t-\widetilde t_0)
+\sum_{j=1}^{M}n_j\theta_{1j},
\end{equation}
and set
\[
\bm v=(Bz_0+L,B^2,0,0,\ldots).
\]
Then, for \((x,t)\) in this neighborhood, since \(s_1=0\) and
\(\chi_2(\widetilde x_0,\widetilde t_0)=O(\abs B)\), we have
\begin{align}
S_k\bigl(\bm x_I^+(\bm n)+\nu\bm s\bigr)
={}&
S_k(\bm v)
+\chi_2(\widetilde x_0,\widetilde t_0)S_{k-2}(\bm v)
+O(\abs B^{k-2}) \notag\\
={}&
B^kp_k\left(z_0+\frac{L}{B}\right)
+\chi_2(\widetilde x_0,\widetilde t_0)
 B^{k-2}p_{k-2}(z_0)
+O(\abs B^{k-2}),
\label{eq:local-Schur}
\end{align}
where
\[
S_j(\bm v)
=
B^jp_j\left(z_0+\frac{L}{B}\right).
\]

We now derive the leading-order terms in the Cauchy--Binet expansion
\eqref{eq:Cauchy-Binet} in this neighborhood. These terms come from two
index choices, the first being
\[
\boldsymbol\nu^{(0)}=(0,1,\ldots,N-1),
\]
and the second being
\[
\boldsymbol\nu^{(1)}=(0,1,\ldots,N-2,N).
\]

With the first index choice $\boldsymbol\nu^{(0)}$, applying \eqref{eq:local-Schur} to
\eqref{eq:Dplus} gives
\begin{align}
D_{\boldsymbol\nu^{(0)}}^+(\bm n)
={}&
C_{\boldsymbol\mu}^{-1}h_0^{N(N-1)/2}
B^{\Lambda-1}P_{\boldsymbol\mu}'(z_0)
\left[
L+
\frac{\chi_2(\widetilde x_0,\widetilde t_0)}{B}
\frac{\displaystyle\sum_{\ell=1}^{N}P_{\boldsymbol\mu}^{[\ell]}(z_0)}
     {P_{\boldsymbol\mu}'(z_0)}
+O(\abs B^{-1})
\right],
\end{align}
which can be rewritten as
\begin{align}
\label{eq:D0-local}
D_{\boldsymbol\nu^{(0)}}^+(\bm n)
={}&
C_{\boldsymbol\mu}^{-1}h_0^{N(N-1)/2}
B^{\Lambda-1}P_{\boldsymbol\mu}'(z_0)
\left[
\chi_1(x-\widehat x_0,t-\widehat t_0)
+\sum_{j=1}^{M}n_j\theta_{1j}
+O(\abs B^{-1})
\right],
\end{align}
where $(\widehat x_0,\widehat t_0)$ is given in \eqref{eq:NLS-center} for the multi-component NLS equation and \eqref{eq:Hirota-center} for the multi-component Hirota equation.
Similarly,
\begin{align}
\label{eq:D0minus-local}
D_{\boldsymbol\nu^{(0)}}^-(\bm n)
={}&
C_{\boldsymbol\mu}^{-1}h_0^{N(N-1)/2}
(B^*)^{\Lambda-1}\bigl[P_{\boldsymbol\mu}'(z_0)\bigr]^*
\left[
\chi_1(x-\widehat x_0,t-\widehat t_0)^*
-\sum_{j=1}^{M}n_j\theta_{1j}^*
+O(\abs B^{-1})
\right].
\end{align}

Under the second index choice $\boldsymbol\nu^{(1)}$, the leading-order term is
\begin{equation}
\label{eq:D1-local}
D_{\boldsymbol\nu^{(1)}}^+(\bm n)
=
C_{\boldsymbol\mu}^{-1}h_0^{N(N-1)/2+1}
B^{\Lambda-1}P_{\boldsymbol\mu}'(z_0)
\bigl[1+O(\abs B^{-1})\bigr].
\end{equation}
Similarly,
\begin{equation}
\label{eq:D1minus-local}
D_{\boldsymbol\nu^{(1)}}^-(\bm n)
=
C_{\boldsymbol\mu}^{-1}h_0^{N(N-1)/2+1}
(B^*)^{\Lambda-1}\bigl[P_{\boldsymbol\mu}'(z_0)\bigr]^*
\bigl[1+O(\abs B^{-1})\bigr].
\end{equation}

Substituting these leading-order contributions into
\eqref{eq:Cauchy-Binet}, we obtain
\begin{equation}
\label{eq:tau-local}
\begin{aligned}
\tau_{\bm n}
={}&
\abs{C_{\boldsymbol\mu}}^{-2}
h_0^{N(N-1)}
\abs B^{2\Lambda-2}
\abs{P_{\boldsymbol\mu}'(z_0)}^2\\
&\times
\left\{
\left[
\chi_1(x-\widehat x_0,t-\widehat t_0)
+\sum_{j=1}^{M}n_j\theta_{1j}
\right]
\left[
\chi_1(x-\widehat x_0,t-\widehat t_0)^*
-\sum_{j=1}^{M}n_j\theta_{1j}^*
\right]
+h_0^2+O(\abs B^{-1})
\right\}.
\end{aligned}
\end{equation}
Since \(z_0\) is a simple root of \(P_{\boldsymbol\mu}\), \(P_{\boldsymbol\mu}'(z_0)\neq0\). It is then easy to see that \eqref{eq:tau-local} yields
\eqref{eq:NLS-local-asymptotic} and
\eqref{eq:Hirota-local-asymptotic}. 

Finally, we consider the solution in the region
\[
x^2+t^2=O(1).
\]
In this region,
\[
z=B^{-1}\chi_1(x,t)=O(\abs B^{-1})\to0.
\]
By the generating function of the Schur polynomials and
\eqref{eq:large-parameter}, the asymptotic expansions
\eqref{eq:outer-Schur} and \eqref{eq:D0-outer}
also hold in this region.

If zero is not a root of \(P_{\boldsymbol\mu}\), then
\[
P_{\boldsymbol\mu}(z)=P_{\boldsymbol\mu}(0)+O(\abs B^{-1}),
\qquad P_{\boldsymbol\mu}(0)\neq0.
\]
Therefore, \(z\) remains separated from the roots of \(P_{\boldsymbol\mu}\), and
\eqref{eq:tau-away} reduces to
\begin{equation}
\tau_{\bm n}
=
\abs{C_{\boldsymbol\mu}}^{-2}
h_0^{N(N-1)}
\abs B^{2\Lambda}
\abs{P_{\boldsymbol\mu}(0)}^2
\bigl[1+O(\abs B^{-1})\bigr].
\end{equation}
The leading-order term is independent of \(\bm n\), and hence
\begin{equation}
    \frac{\tau_{\bm e_j}}{\tau_{\bm0}}
=
1+O(\abs B^{-1}),
\qquad j=1,\ldots,M.
\end{equation}
Thus, the rogue wave solution approaches the background.

If zero is a root of \(P_{\boldsymbol\mu}\), we have
\[
(\widetilde x_0,\widetilde t_0)
=
(\widehat x_0,\widehat t_0)
=
(0,0).
\]
Since \(\chi_2(0,0)=0\), equation \eqref{eq:local-Schur} reduces to
\[
S_k\bigl(\bm x_I^+(\bm n)+\nu\bm s\bigr)
=
B^kp_k\left(\frac{L}{B}\right)
+O(\abs B^{k-2}).
\]
Evaluating \eqref{eq:D0-local}--\eqref{eq:tau-local} at \(z_0=0\),
we find that the solution approaches a fundamental vector rogue wave
centered at the origin, with the asymptotics given by
\eqref{eq:NLS-local-asymptotic} and
\eqref{eq:Hirota-local-asymptotic}. This completes the proof.

\section{Conclusion}
\label{sec:conclusion}

In this paper, we have established the connection between large parameter asymptotics of higher-order rogue wave solutions of the multi-component NLS and Hirota equations and two families of special polynomials associated with the fourth Painlev\'e equation. Using a common Schur-polynomial determinant framework, we have shown that consecutive determinant indices lead to the generalized Hermite polynomials, whereas the index with jumps of three leads to the generalized Okamoto polynomials. When one of the internal parameters is large, each root of the governing polynomial corresponds to one fundamental vector rogue wave. We derived explicit formulas for its space--time center. It is shown that, near each predicted center, a fundamental rogue wave can be found and, away from these centers, the solution approaches the background. The numerical examples for the multi-component NLS and Hirota equations agree with these predictions.

Although rogue wave patterns have been studied in many integrable
systems \cite{yang2021rogue,yang2021universal,yang2023rogue,
zhang2022rogue,lin2024rogue,yang2024rogue,zhang2025multi,yang2026roguewavelumppatternsassociated,ling2024fokaslenells}, a rigorous large-parameter analysis of such patterns for the
Sasa--Satsuma and coupled Sasa--Satsuma equations is still lacking.
Both equations can be obtained from the multi-component Hirota equation
through suitable reductions, and their soliton and
rogue wave solutions have been constructed by the KP reduction method
\cite{ohta2010dark,feng2022higher,wu2022general,zhang2025dark,zhang2025rogue}.
Establishing their rogue wave patterns will be an interesting direction
for future research.

Rogue wave patterns associated with \(\mathrm{P}_{\mathrm{II}}\) and \(\mathrm{P}_{\mathrm{III}}\) have already been reported \cite{yang2021universal,yang2026roguewavelumppatternsassociated}. The results of the present paper, together with those of \cite{yang2023rogue}, establish corresponding rogue wave patterns for all special polynomials associated with rational solutions of \(\mathrm{P}_{\mathrm{IV}}\), thereby completing the \(\mathrm{P}_{\mathrm{IV}}\) part of the correspondence between special polynomials arising from rational solutions of Painlev\'{e} equations and rogue wave patterns. Since \(\mathrm{P}_{\mathrm{V}}\) and \(\mathrm{P}_{\mathrm{VI}}\) admit rational solutions as well \cite{Clarkson2023classical,Clarkson2010Painleve}, a natural question is whether analogous patterns associated with these two equations can also be found (partial results associated with \(\mathrm{P}_{\mathrm{V}}\) were obtained in \cite{wu2026zerosgeneralizedwronskianhermitepolynomials}). Addressing this question would further broaden our understanding of the role of rational solutions of Painlev\'e equations in nonlinear wave phenomena.

\section*{Conflict of interests}
The authors have no conflicts to disclose.

\section*{Acknowledgment}
C.~F.~Wu was supported by the National Natural Science Foundation of China (Grant No. 12471077).

\bibliography{references}
\bibliographystyle{elsarticle-num}

\end{document}